\newcommand{\myref}[2]{\hyperref[#1]{#2}}
\numberwithin{equation}{section}
\definecolor{fede_color}{cmyk}{0,0.9,0.33,.22}
\newtheorem{theorem}{Theorem}[section]
\newtheorem{lemma}{Lemma}[section]
\theoremstyle{definition}
\theoremstyle{remark}
\newcounter{assumptionM}
\newcounter{assumptionA}
\def\theassumptionM{M.\arabic{assumptionM}}
\newenvironment{runningexample}[1][]{%
    \vspace{0.5\baselineskip} 
    \noindent\textbf{Running Example.}\ #1%
    \begin{itshape}%
}{%
    \end{itshape}%
    \vspace{0.5\baselineskip} 
}
\newenvironment{runningexamplecont}[1][]{%
    \vspace{0.5\baselineskip} 
    \noindent\textbf{Running Example, continued.}\ #1%
    \begin{itshape}%
}{%
    \end{itshape}%
    \vspace{0.5\baselineskip} 
}
\begin{document}
\sloppy
\title{Inference in Auctions with Many Bidders Using Transaction Prices\thanks{
First ArXiv date: November 16, 2023. We thank the Editor, Peter Hull, and three anonymous referees for their valuable comments.
We also thank \'Aureo de Paula, Christian Gourieroux, Joel Horowitz, Yu-Chin Hsu, Tong Li, Yao Luo, Ulrich M\"uller, Nicholas Papageorge, Rob Porter, and Daiqiang Zhang for very helpful comments and suggestions, and Vinci Chow for kindly sharing the Hong Kong car license auction dataset. We also thank the participants at numerous seminars and conferences. All remaining errors are our own.}
}
\author{Federico A. Bugni\\ Department of Economics\\ Northwestern University\\ \url{federico.bugni@northwestern.edu}
\and
Yulong Wang\\ Department of Economics\\ Lehigh University\\ \url{yuw925@lehigh.edu}}
\date{\today}
\maketitle

\begin{abstract}
This paper studies inference in first-price and second-price sealed-bid auctions with many bidders, using an asymptotic framework where the number of bidders increases while the number of auctions remains fixed. Our approach enables asymptotically exact inference on key features, such as the winner’s expected utility, the seller’s expected revenue, and the tail of the valuation distribution, using only transaction price data. Our simulations demonstrate the accuracy of the methods in finite samples. We apply our methods to Hong Kong vehicle license auctions, focusing on high-priced, single-letter plates. Other relevant applications include online and art auctions.

\flushleft\textbf{Keywords:} auctions, hypothesis testing, confidence intervals, extreme value theory, tail index.
\flushleft\textbf{JEL code:} C12, C57
\end{abstract}

\newpage


\setcounter{page}{1} \renewcommand{\thepage}{\arabic{page}} 
\renewcommand{\thefootnote}{\arabic{footnote}} 

\renewcommand{\baselinestretch}{1.2} \small \normalsize

\newpage

\setcounter{page}{1}

\section{Introduction}

This paper considers inference in first-price and second-price sealed-bid auctions in empirical settings where we observe a small number of auctions, each with a large number of bidders. The abundance of bidders in each auction prompts us to consider a novel asymptotic framework in which the number of bidders diverges, while the number of auctions remains small and fixed. This framework differs substantially from the more conventional approach in which the researcher observes multiple bids from a large number of independent and identically distributed (i.i.d.) auctions. See \citet{athey2002,hailetamer2003,athey2007handbook,guerre2000}, among others. Since our asymptotic framework does not require the number of auctions to diverge, our analysis is suitable for applications with substantial heterogeneity across auctions, even when the number of truly homogeneous auctions is limited. 

Within our novel asymptotic framework, we introduce new inference methods for the winner's expected utility, the seller's expected revenue, and the tail behavior of the valuation distribution. We show that the latter can be used to test the regularity conditions commonly assumed in auction literature. Our data requirements are minimal; our methods rely on observing transaction prices from a finite number of auctions. In particular, we do not need to observe multiple bids or the number of participating bidders in these auctions.
Furthermore, our analysis does not attempt to estimate bid distributions and is therefore not designed for commonly adopted two-step procedures \citep[e.g.,][]{guerre2000} that require many auctions and rich bid-level data.

Our methodology is well-suited to auction settings in which the number of participants is known to be large, and we only observe transaction prices for a possibly small number of homogeneous auctions. We illustrate our methodology using data from Hong Kong license plate auctions. Since 1973, the Hong Kong government has held monthly standard oral-ascending price auctions to sell vehicle license plates. Recent interviews with auction participants reveal that these auctions are well attended, with around 100 participants per auction. Only the transaction price from each auction is recorded.  Finally, license plates are inherently heterogeneous goods whose value depends on their configuration of letters and numbers. If we focus on license plates with specific configurations to get a homogeneous sample of auctions, the sample size is usually small. We use the Hong Kong vehicle license plate auction setup as our running example throughout the paper and our empirical application in Section \ref{sec:application}. Beyond this application, our econometric framework is also relevant to many other auctions, including art and online auctions.

In our benchmark model, we consider second-price sealed-bid auctions with a large number of symmetric bidders holding independent private values (IPV). Unlike traditional frameworks that rely on a growing number of auctions, transaction price data alone does not fully identify the valuation distribution. However, as the number of bidders diverges, the transaction price data allows us to characterize the tail properties of the bid distribution. In turn, these can be used to conduct inference on auction fundamentals, such as the winner's expected utility and the seller's expected revenue. Additionally, analyzing tail behavior allows us to assess whether the valuation distribution has bounded support and positive density at its upper limit, which are common assumptions in auction theory \citep[e.g.,][]{MaskinRiley1984,guerre2000,GuerreLuo2022}. The proposed inference methods in our framework control size and demonstrate optimal power properties. Specifically, we use results in \cite{elliott2015} to guarantee that our confidence intervals minimize expected length and our hypothesis tests maximize weighted average power. 

Beyond the benchmark model, we consider two extensions. First, we show that our analysis for second-price auctions extends to first-price auctions. Second, we broaden our analysis to encompass potential bidders, conditional IPV scenarios, and reserve price. The presence of potential bidders often occurs when some buyers' bids remain unsubmitted or unobserved. This scenario complicates standard methods, particularly when the count of potential bidders is unknown. In contrast, our analysis remains unaffected, as the tail of the underlying valuation distribution and the equilibrium bidding function remain unchanged. Regarding the conditional IPV setup, we accommodate auction-level unobserved heterogeneity $A$, conditional on which buyers possess independent values. This scenario, known as conditional IPV, has been explored by \citet{li2000cipv}. With only a limited number of auctions, existing methods struggle to identify the underlying valuation distribution $F_{V|A}$ conditional on this heterogeneity. However, our methodology can be used to discern the tail of $F_{V|A}$, provided we observe at least three bids from a single auction.



The rest of the paper is organized as follows. The following subsection reviews the existing literature and discusses our contribution. Section \ref{sec:setup} establishes the new asymptotic framework with many bidders and discusses its relationship with extreme value (EV) theory. Section \ref{sec:sp} focuses on second-price auctions and introduces our new inference methods. Section \ref{sec:sp-supplus} presents confidence intervals for the winner's expected utility, Section \ref{sec:sp-revenue} introduces confidence intervals for the seller's expected revenue, and Section \ref{sec:sp-index} outlines hypothesis tests for the tail index. Section \ref{sec:extension} extends the analysis to first-price auctions, the conditional IPV model, and the presence of reserve prices. We provide Monte Carlo simulations in Section \ref{sec:simulation}. Section \ref{sec:application} applies our inference methods to data from Hong Kong license plate auctions, and Section \ref{sec:conclusion} concludes. The paper's supplementary appendix provides all proofs, auxiliary results, and computational details.

\subsection{Related literature}\label{sec:literature}
As mentioned earlier, our paper considers auctions in an asymptotic framework where the number of bidders diverges. This framework has been frequently used in the economic theory literature \citep[e.g.,][]{hongshum2004,virag2013,ditillio2021} but less so in econometrics. \citet{krasnokutskaya2022estimating} considers a latent group structure on the set of agents and allows both the number of agents and the number of markets to grow. \citet{menzel2013} shows that the nonparametric estimator of the valuation distribution may become irregular and perform poorly when the number of bidders increases. This issue does not affect our method, which relies on the EV approximation. 

In comparison, much of the existing econometric literature has focused on identifying and estimating the valuation distribution in a finite number of bidders and a diverging number of auctions. The seminal work by \citet{athey2002} derives general results on identifying the valuation distribution from the bid distribution. \citet{hailetamer2003} investigates English auctions and establishes bounds on the valuation distribution and other objects of interest with minimal structural assumptions. \cite{chesher2017wp} extends these bounds to the non-IPV setup. 
\citet{aradillas2013} nonparametrically identifies bounds on seller profit and bidder surplus while accounting for variations in the number of bidders across auctions. \citet{BrendstrupPaarsch2006} and \citet{Komarova2013} derive nonparametric identification of the valuation distribution using transaction prices and the winner's identity.

The existing auction literature has considered different types of data, including submitted bids, the number of actual bidders, and the number of potential bidders. For example, \cite{Li2005} studies first-price auctions with entry and binding reservation prices and estimates the valuation distribution with the observed bids and the number of actual bidders. Without knowing the number of bidders, \cite{an2010} proposes using a proxy of the number of bidders and an instrument variable. \citet{KimLee2014}, \citet{song2015}, \citet{Mbakop2017}, and \citet{freyberger2022} construct identification of the valuation distribution with two or more order statistics of bids. \cite{shneyerov2011} derives nonparametric identification of model primitives based on finitely many groups of bidders. Recently, \citet{LuoXiao2023} derives identification results with two consecutive order statistics and an instrument or three consecutive ones. All these methods of identification, estimation, and inference are based on the traditional asymptotic framework with many auctions and multiple bids per auction. We refer to \citet{hickman2012} and \citet{gentry2018review} for recent surveys.

As already mentioned, our paper provides inference based only on transaction prices. This aspect of our paper resembles the recent work by \cite{GuerreLuo2022}. However, there are considerable differences between this paper and our contribution. In particular, \cite{GuerreLuo2022} considers first-price auctions in which the number of bidders in each auction is random and has finite support. In this context, they establish that the winning bid increases with the number of bidders and that the density of winning bids exhibits discontinuities as the number of bidders changes. Our framework differs significantly from \citet{GuerreLuo2022}. First, their identification strategy relies on the assumption that the number of bidders is finite, whereas we focus on the case where it diverges. Thus, the contributions are designed for different empirical environments. Second, their argument does not apply to second-price auctions, while ours does. Third, their method relies on observing a large number of auctions, whereas ours can handle applications with a finite number of auctions. Finally, their paper delivers identification analysis, while ours focuses on inference.

Finally, our paper is also connected to the literature on testing in auction models. \citet{DonaldPaarsch1996} introduces parametric tests within the context of IPV setups. \citet{HaileHongShum2003} devises nonparametric tests for common values in first-price auctions. \citet{JunPinkseWan2010} develops a nonparametric test for affiliation. \citet{HillShneyerov2013} develops a test for common values in first-price auctions utilizing tail indices. \citet{LiuLuo2017} puts forward a nonparametric test for comparing valuation distributions in first-price auctions. Compared with these studies, our test in Section \ref{sec:sp-index} serves as a model specification test. Specifically, we consider testing whether the support of the valuation distribution is bounded and whether the density is strictly positive near the upper endpoint, both of which are typically imposed as regularity conditions in the existing literature.

\section{Asymptotic framework with many bidders}\label{sec:setup}

We consider inference in sealed-bid auctions for a single object, where the data consist of transaction prices from $n \geq 3$ independent auction realizations, denoted as $\{P_j:j=1,\dots,n\}$. Importantly, our framework does not require $n$ to diverge to infinity.

For each auction $j=1,\dots,n$, the setup is as follows. There is a single object for sale, and $K_j$ potential buyers are bidding for it. These bidders have independent private values (IPV) $\{V_{i,j}:i=1,\dots,K_j\}$ distributed according to a common cumulative distribution function (CDF) $F_V$ with support on $[v_L,v_H]$, with $0\leq v_L<v_H$, where $v_H=\infty$ is allowed. We assume that $F_V$ is strictly increasing on its support and admits a continuous probability density function (PDF) $f_V=F_V'$. Bidders are assumed to be risk-neutral and to maximize expected profits, without liquidity or budget constraints. $F_V$ and $K_j$ are common knowledge to all bidders, but unknown to the researcher. Our inference methods rely on asymptotics as the number of bidders $K_j$ diverges while the number of auctions $n$ remains finite. To this end, we assume that $K \equiv \min\{K_1,\dots,K_n\} \to \infty$ and $K_i/K \to 1$ for each $i=1,\dots,n$. That is, we assume all $n$ auctions have approximately the same large number of bidders.\footnote{Allowing for $K_j/K \not\to 1$ for some $j=1,\dots,n$ significantly complicates our inference problem. In particular, this would require observing more than the transaction price in each auction, which would contradict the core premise of this paper.}

\begin{runningexample}
Throughout this paper, we illustrate our methodology using a specific example from Hong Kong license plate auctions. Specifically, we focus on auctions for the luxurious plates with a single letter and no numbers. Table \ref{tab:plate} presents all historical records of such auctions.  
\begin{table}[ht]
\centering
\renewcommand{\arraystretch}{1.6}
\begin{tabular}{ccccc}
\hline\hline
Plate &  & Auction date &  & Price (million HKD) \\ \hline
\texttt{D} &  & Feb.\ 25, 2024 &  & $20.2$ \\ 
\texttt{R} &  & Feb.\ 12, 2023 &  & $25.5$ \\ 
\texttt{W} &  & Mar.\ 7, 2021 &  & $26.0$ \\ 
\texttt{V} &  & Feb.\ 24, 2019 &  & $13.0$ \\ 
\hline\hline
\end{tabular}
\caption{All recorded auctions with a single letter and no numbers in Hong Kong. The transaction prices are expressed in millions of Hong Kong dollars (approximately equal to \$128,000 USD).}
\label{tab:plate}
\end{table}

\noindent The exceptionally high transaction prices have sparked considerable media attention.\footnote{For example, see https://hobbylistings.com/hong-kong-license-plate and https://www.cnn.com/style/article/\linebreak hong-kong-auctioned-vanity-car-plates-intl-hnk/index.html.} The methods developed in this paper allow us to conduct inference on features of these auctions. 
\end{runningexample}

We make an additional assumption about the valuation distribution $F_V$. We assume that it is in the domain of attraction of the EV distribution $G_\xi$, where $\xi$ denotes the tail index. Formally, this means that there is a sequence of normalizing constants $\{(a_{K},b_{K}) \in \mathbb{R}_{++} \times \mathbb{R} :K \in \mathbb{N}\}$ such that, for all $x$ that is a continuity point of $G_\xi$,
\begin{equation}
\lim_{K \to \infty} ~(F_V(a_{K} x + b_{K}))^{K} ~~=~~ G_\xi(x) .
\label{eq:DomainOfAttraction}
\end{equation}
See \citet[Chapter 1]{dehaan2006book} or \citet[Chapter 10]{david2004book} for recent expositions on this topic, including sufficient conditions for \eqref{eq:DomainOfAttraction}. Under the condition in \eqref{eq:DomainOfAttraction}, standard asymptotic results imply that $G_\xi$ belongs to one of three types: Weibull (if $\xi < 0$), Gumbel (if $\xi = 0$), or Fr\'echet (if $\xi > 0$). We can unify these distributions into the generalized EV distribution, with the following CDF:
\begin{equation}
\label{eq:G}    
G_\xi(x) ~=~\left\{
\begin{array}{ll}
\exp( - (1 + \xi x)^{-1/\xi} ) I(1 + \xi x > 0)&\text{ if }\xi >0,\\
\exp(-\exp(-x)) &\text{ if }\xi =0,\\
\exp( - (1 + \xi x)^{-1/\xi} ) I(1 + \xi x > 0) +  I(1 + \xi x \leq 0) &\text{ if }\xi <0.
\end{array} 
\right.
\end{equation}
Condition \eqref{eq:DomainOfAttraction} is an arguably mild restriction, as is satisfied by most commonly used valuation distributions $F_V$. The case with $\xi >0$ covers distributions with unbounded support (i.e., $v_H=\infty$) and polynomial decaying (i.e., ``heavy'') right tails. In this case, moments of order less than $1/\xi$ exist, and moments of order greater than $1/\xi$ do not (see \citet[page 176]{dehaan2006book}). Then, the restriction to $\xi \leq 1/2$ implies that $F_V$ has finite second moments. Examples include Pareto, Student-t, and F distributions. 
Second, the case with $\xi =0$ encompasses distributions with unbounded support (i.e., $v_H=\infty$) but with exponentially decaying (i.e., ``light'') right tail and bounded moments of any order. Examples include normal, Log-Normal, Weibull, and Gamma distributions. Finally, the case $\xi<0 $ covers distributions with bounded support (i.e., $v_H <\infty$), such as Beta, Uniform, and triangular distributions. In turn, condition \eqref{eq:DomainOfAttraction} fails for any distribution that has a probability mass point at the highest value of its support, such as geometric or Poisson distributions; see \citet[Exercise 1.13]{dehaan2006book}.

Our domain-of-attraction assumption \eqref{eq:DomainOfAttraction} restricts only the upper-tail behavior of the valuation distribution and does not impose a parametric structure on $F_V$ as a whole. It captures the idea that, in auctions with many bidders, transaction prices and revenue are driven primarily by extreme valuations rather than typical ones. In simulations, we show that our inference performs well across a range of empirically relevant tail behaviors.

The significance of \eqref{eq:DomainOfAttraction} in our paper is that it allows us to characterize the joint distribution of the ordered valuations for all auctions as the number of bidders diverges. We now introduce the relevant notation. For each auction $j=1,\dots,n$, let $\{V_{(i),j}:i=1,\dots,K_j\}$ denote the order statistics of $\{V_{i,j}:i=1,\dots,K_j\}$ in decreasing order, i.e., $V_{(1),j} \geq V_{(2),j}\geq \dots \geq V_{(K_j),j}$. Lemma \ref{lem:joint} provides the joint distribution of the extreme order statistics for all auctions.

\begin{lemma}\label{lem:joint}
Assume \eqref{eq:DomainOfAttraction} holds. For any $n \in \mathbb{N}$ and any $d \in \mathbb{N}$, and as $K\to \infty$,
\begin{align}
&\Big\{\Big( \frac{V_{(1),j} -b_{K}}{a_{K}}, ~\frac{V_{(2),j} -b_{K}}{a_{K}},~\dots~,~ \frac{V_{(d),j} -b_{K}}{a_{K}}\Big):j=1,\dots,n\Big\}~\overset{d}{\to}~\notag\\
&\Big\{\Big( H_{\xi}(E_{1,j}),~H_{\xi}(E_{1,j}+E_{2,j}),~\dots~,~H_{\xi}\big(\sum\nolimits_{s=1}^{d}E_{s,j}\big)\Big):j=1,\dots,n\Big\}, \label{eq:EVT}
\end{align}
where $\{(a_{K},b_{K}) \in \mathbb{R}_{++} \times \mathbb{R} :K \in \mathbb{N}\}$ are the normalizing constants in \eqref{eq:DomainOfAttraction},  $\{E_{s,j}:s=1,\dots,d,~j=1,\dots,n\}$ are i.i.d.\ standard exponential random variables, and
\begin{equation}
    H_{\xi}(x) ~\equiv~  \bigg\{ 
    \begin{array}{cc}
         ({x}^{-\xi} -1)/{\xi}&  \text{ if }\xi\neq 0,\\
         - \ln(x) &  \text{ if }\xi= 0.
    \end{array} 
    \label{eq:H_function}
\end{equation}
\end{lemma}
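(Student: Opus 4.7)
\textbf{Proof proposal for Lemma \ref{lem:joint}.}

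My plan is to reduce the claim to a single-auction statement about the top-$d$ order statistics, establish that statement from the Extreme Value (EV) condition \eqref{eq:DomainOfAttraction} combined with R\'enyi's representation for order statistics from the uniform distribution, and then lift the single-auction convergence to the joint statement across the $n$ auctions using independence.

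For each auction $j$, I would apply the probability integral transform and set $U_{i,j}=F_V(V_{i,j})\sim\mathrm{Unif}(0,1)$ i.i.d. Then $V_{(i),j}=F_V^{-1}(U_{(i),j})$, where $U_{(1),j}\geq\ldots\geq U_{(K_j),j}$ are the uniform order statistics in decreasing order. By R\'enyi's representation of exponential order statistics (equivalently the classical spacings representation for uniform order statistics), one has the distributional identity
\begin{equation*}
\Bigl(-K_j\log U_{(i),j}\Bigr)_{i=1}^d \;\overset{d}{=}\; \Bigl(K_j\sum_{s=1}^{i}\tfrac{E_{s,j}}{K_j-s+1}\Bigr)_{i=1}^d \;\xrightarrow{\,K\to\infty\,}\;\Bigl(\sum_{s=1}^{i}E_{s,j}\Bigr)_{i=1}^d,
\end{equation*}
where I use $K_j/K\to 1$ so that the normalizations align. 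Since $-\log U_{(i),j}$ is small for $i\leq d$, the Taylor expansion $1-U_{(i),j}=-\log U_{(i),j}+o_p(K^{-1})$ yields the same limit for $T_{i,j}\equiv K(1-U_{(i),j})$.

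Next, I would rewrite the EV condition \eqref{eq:DomainOfAttraction} in its tail-quantile form. Setting $y=(1+\xi x)^{-1/\xi}$ so that $x=H_\xi(y)$, condition \eqref{eq:DomainOfAttraction} is equivalent, by taking logs and passing to the survival function, to
\begin{equation*}
\frac{F_V^{-1}(1-y/K)-b_K}{a_K}\;\longrightarrow\; H_\xi(y)
\end{equation*}
for every $y>0$, and this convergence is locally uniform in $y$ on $(0,\infty)$; see \citet[Theorem 1.1.6]{dehaan2006book}. Because $H_\xi$ is continuous on $(0,\infty)$ and the $\sum_{s=1}^{i}E_{s,j}$ are a.s.\ positive, a continuous-mapping/Skorohod coupling argument applied to the composition
\begin{equation*}
\frac{V_{(i),j}-b_K}{a_K} \;=\; \frac{F_V^{-1}(1-T_{i,j}/K)-b_K}{a_K}
\end{equation*}
delivers the single-auction limit $H_\xi\bigl(\sum_{s=1}^{i}E_{s,j}\bigr)$ jointly for $i=1,\ldots,d$.

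Finally, since the $n$ auctions are independent by assumption, the joint distribution on the left of \eqref{eq:EVT} is the product of the marginal distributions across $j$. Because convergence in distribution is preserved under finite products of independent components, the single-auction convergence yields \eqref{eq:EVT} with independent exponentials $\{E_{s,j}\}$ across both $s$ and $j$. The main technical obstacle I expect is the careful justification of the composition step: the EV convergence of the tail-quantile function only holds locally uniformly away from $0$ and $\infty$, so one must confine the random arguments $T_{i,j}$ to compact subsets of $(0,\infty)$ with high probability (which is automatic since the limits $\sum_{s=1}^i E_{s,j}$ are a.s.\ finite and positive) before invoking continuous mapping.
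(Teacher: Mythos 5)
Your proposal is correct, but it takes a genuinely different route from the paper's. The paper's proof also reduces to a single auction by independence, but then it simply cites \citet[Theorem 2.1.1]{dehaan2006book} to obtain the joint convergence of the top-$d$ order statistics normalized by the \emph{auction-specific} constants $(a_{K_j},b_{K_j})$, and spends the bulk of the argument showing that these can be replaced by the common constants $(a_K,b_K)$: using $K_j/K\to 1$ and a convergence-of-types argument (via \citet[Lemma 1]{dehaan1976} and the invertibility of $G_\xi$ on a continuum of points) it establishes $a_{K_j}/a_K\to 1$ and $(b_{K_j}-b_K)/a_K\to 0$. You instead re-derive the textbook theorem from first principles --- R\'enyi's representation for the uniform spacings plus the locally uniform convergence of the tail quantile function $(F_V^{-1}(1-y/K)-b_K)/a_K\to H_\xi(y)$ --- and you absorb the $K_j$-versus-$K$ discrepancy at the level of the exponential spacings rather than at the level of the normalizing constants. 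Both are valid; the paper's version is more modular and isolates the only genuinely new content (the alignment of normalizing constants across auctions with slightly different $K_j$) into a short convergence-of-types lemma, whereas yours is self-contained and makes the mechanism of the limit transparent, at the cost of having to justify the locally uniform quantile convergence and the composition/extended-continuous-mapping step. One small presentational point: the place where $K_j/K\to 1$ is actually needed in your argument is in passing from $-K_j\log U_{(i),j}$ to $T_{i,j}=K(1-U_{(i),j})$ (the convergence $K_j\sum_{s\le i}E_{s,j}/(K_j-s+1)\to\sum_{s\le i}E_{s,j}$ only requires $K_j\to\infty$); it would be worth stating that explicitly.
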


Lemma \ref{lem:joint} characterizes the asymptotic distribution of the largest order statistics. In this paper, we consider first-price and second-price auction formats, which involve only $V_{(1),j}$ and $V_{(2),j}$, respectively. We assume symmetric equilibrium bidding, enabling us to establish a relationship between private valuations, equilibrium bids, and transaction prices ${\bf P}\equiv\{P_{j}:j =1,\dots,n\}$. As a corollary, we can completely describe the asymptotic distribution of transaction prices in terms of the tail index $\xi$. This, in turn, allows us to perform inference of several objects of economic interest solely based on the transaction prices.

\begin{runningexamplecont}
Recall our dataset of $n=4$ license plate auctions with a single letter. According to Table \ref{tab:plate}, the transaction prices expressed in million HKD are $\mathbf{P} = \{P_1,P_2,P_3,P_4\} = \{20.2,~ 25.5,~ 26.0,~ 13.0\}$. The number of bidders, $K_j$, is unobserved but is known anecdotally to be large for all auctions $j=1,2,3,4$. This empirical setup aligns well with our asymptotic framework, which considers a diverging $K = \min\{K_1,K_2,K_3,K_4\}$ and a fixed, small $n$. Beyond the IPV assumption, we only assume that the underlying value distribution $F_V$ satisfies \eqref{eq:DomainOfAttraction}, which, as noted earlier, encompasses a broad class of valuation distributions.
\end{runningexamplecont}

\section{Second-price auctions}\label{sec:sp}

We begin our analysis with second-price sealed-bid auctions, in which the highest bidder wins the object and pays the second-highest bid. Since we consider a private-value framework, second-price auctions are equivalent in a weak sense to open ascending-price (or English) auctions (see \citet[page 4]{krishna2009book}). By standard arguments (e.g., \citet[Proposition 2.1]{krishna2009book}), the weakly dominant strategy for a bidder with valuation $v$ in auction $j=1,\dots,n$ is
\begin{equation}
\beta_j(v)~=~v.  \label{eq:bidSecond}
\end{equation}
Thus, the observed transaction price in auction $j$ equals the second-highest bid, i.e., 
\begin{equation}
P_{j}~=~V_{( 2) ,j}.  \label{eq:secondprice}
\end{equation}
By Lemma \ref{lem:joint} and \eqref{eq:secondprice}, we conclude that as $ K\to \infty $, 
\begin{equation}
\Big\{ \frac{P_{j}-b_{K}}{a_{K}}:j=1,\dots,n\Big\} ~\overset{d}{ \to }~ \{ Z_{j}:j=1,\dots,n\} , 
\label{eq:EVT2}
\end{equation}
where $\{ Z_{j}:j=1,\dots,n\}$ is i.i.d.\ with $Z_{j}\equiv H_{\xi }(E_{1,j}+E_{2,j})$ for each $j=1,\dots,n$, and $ \{(a_{K},b_{K})\in \mathbb{R}_{++}\times \mathbb{R}:K\in \mathbb{N}\}$, $\{(E_{1,j},E_{2,j}):~j=1,\dots,n\}$, and $H_{\xi }$ are as in Lemma \ref{lem:joint}.

If the constants $\{(a_{K},b_{K})\in \mathbb{R}_{++}\times \mathbb{R}:K \in \mathbb{N}\}$ were known, we could use \eqref{eq:EVT2} to perform inference on functions of the EV index $\xi$. Unfortunately, these constants are unknown and depend implicitly on the underlying distribution of valuations. Specifically, $a_K$ characterizes the scale of the largest valuations, i.e., how fast it approaches $v_H$ as $K\to\infty$, and $b_K$ characterizes the location, i.e., a constant shift. When $\xi>0$, $v_H$ is infinite, and $P_j$ diverges almost surely. The scale $a_K$ is typically proportional to $K^{\xi}\to\infty$ as $K\to\infty$, and $b_K$ can be zero. When $\xi<0$, $v_H$ is finite, and $v_H - P_j$ converges to zero almost surely. The scale $a_K$ is again proportional to $K^{\xi}$, which now shrinks to zero as $K\to\infty$. The location $b_K$ is then the constant $v_H$. 

\begin{runningexamplecont}
Compared with the price of ordinary plates, the exceptionally high transaction prices of these luxurious plates in Table \ref{tab:plate} suggest that $v_H$ could be infinite. This would imply that $\xi > 0$ and rule out the uniform distribution. Consequently, the winner’s expected utility and the seller’s expected revenue may be substantial, and are better modeled as diverging as $K \to \infty$. In contrast, existing nonparametric methods typically assume a bounded support with $v_H < \infty$, where $P_j$ approaches the upper bound $v_H$ for all $j$. Under this assumption, the winner's expected utility shrinks to zero, and the seller's expected revenue converges to $v_H$ as $K \to \infty$. We introduce a formal test to distinguish between these scenarios in Section \ref{sec:sp-index}. 
\end{runningexamplecont}

To sidestep the unknown $a_K$ and $b_K$, we sort the transaction prices across auctions (i.e., $P_{(1)}\leq P_{(2)}\leq \dots \leq P_{(n)}$), and consider the sorted and self-normalized prices: for $j=1,\dots, N \equiv n-2 \geq 1$, 
\begin{equation}
\tilde{P}_{j}~\equiv ~\Bigg\{
\begin{array}{cc}
     \frac{P_{(j+1)}-P_{(1)}}{P_{(n)}-P_{(1)}}&\text{ if }P_{(n)} > P_{(1)},  \\
    0&\text{ if }P_{(n)} = P_{(1)} , 
\end{array}
    \label{eq:P*}
\end{equation}
and let $\mathbf{\tilde{P}}=\{ \tilde{P}_{j}:j=1,\dots,N\} \in \Sigma \equiv \{h \in [0,1]^N : 0\leq h_1 \leq \dots \leq h_N \leq 1\}$. The following result characterizes the asymptotic distribution of $\mathbf{\tilde{P}}$ as $K\to \infty$. 
\begin{lemma}\label{lem:densitySecond}
Assume \eqref{eq:DomainOfAttraction} holds. For any $N\in \mathbb{N}$, and as $K \to \infty$,
\begin{equation}
\mathbf{\tilde{P}}=\{ \tilde{P}_{j}:j=1,\dots,N\} ~\overset{d}{\to }~\mathbf{\tilde{Z}}=\{ \tilde{Z}_{j}:j=1,\dots,N\} ,
\label{eq:P*joint}
\end{equation}
where the joint density of $\mathbf{\tilde{Z}}$ is
\begin{align}
&f_{\mathbf{\tilde{Z}}|\xi}( z_{1},\dots,z_{N} ) ~\equiv~ 1[ 0\leq z_{1}\leq \dots \leq z_{N}\leq 1] ~ (N+2)!~\Gamma ( 2( N+2) ) \times \notag \\
&~\left\{ 
\begin{array}{cc}
\int_{0}^{-1/\xi }s^{N}\exp \bigg(
\begin{array}{c}
-2( N+2) \ln ( \sum_{j=1}^{N}( 1+z_{j}\xi s) ^{-1/\xi }+(1+\xi s)^{-1/\xi }) \\
-( 1+2/\xi ) ( \sum_{j=1}^{N}\ln ( 1+z_{j}\xi s) +\ln (1+\xi s))
\end{array}
\bigg) ds & \text{if }\xi <0, \\
\int_{0}^{\infty }s^{N}\exp \bigg(
\begin{array}{c}
-2( N+2) \ln ( \sum_{j=1}^{N}\exp ( -z_{j}s) +\exp ( -s) ) \\
-2s( \sum_{j=1}^{N}z_{j}+1)
\end{array}
\bigg) ds & \text{if }\xi =0,\\
\int_{0}^{\infty }s^{N}\exp \bigg(
\begin{array}{c}
-2( N+2) \ln ( \sum_{j=1}^{N}( 1+z_{j}\xi s) ^{-1/\xi }+(1+\xi s)^{-1/\xi }) \\
-( 1+2/\xi ) ( \sum_{j=1}^{N}\ln ( 1+z_{j}\xi s) +\ln (1+\xi s))
\end{array}
\bigg) ds & \text{if }\xi >0,
\end{array}
\right. \label{eq:densitySecond}
\end{align}
and $\Gamma $ is the standard Gamma function.
\end{lemma}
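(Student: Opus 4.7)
The plan is two-fold: first, use Lemma \ref{lem:joint} together with the continuous mapping theorem to establish the weak convergence in \eqref{eq:P*joint}; second, compute the joint density of $\tilde{\mathbf Z}$ by a change of variables followed by marginalization. For the first step, applying Lemma \ref{lem:joint} with $d = 2$ and using $P_j = V_{(2),j}$ from \eqref{eq:secondprice} gives $\{(P_j - b_K)/a_K\}_{j=1}^n \overset{d}{\to} \{Z_j\}_{j=1}^n$ jointly, where $Z_j = H_\xi(E_{1,j} + E_{2,j})$. The mapping $(p_1, \ldots, p_n) \mapsto \tilde{\mathbf p}$ defined in \eqref{eq:P*} is invariant under the common affine transformation $p_j \mapsto (p_j - b_K)/a_K$ with $a_K > 0$ and is continuous on the open set $\{p_{(n)} > p_{(1)}\}$. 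Since the limiting $Z_j$'s are i.i.d.\ with a continuous marginal, they are a.s.\ distinct, and the continuous mapping theorem yields $\tilde{\mathbf P} \overset{d}{\to} \tilde{\mathbf Z}$ with $\tilde Z_j = (Z_{(j+1)} - Z_{(1)})/(Z_{(n)} - Z_{(1)})$.

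For the second step, I would first derive the marginal density of $Z = H_\xi(T)$, where $T = E_{1,j} + E_{2,j}$ has density $f_T(t) = t e^{-t}$ on $t > 0$. A standard change of variable yields $f_Z(z) = (1+\xi z)^{-2/\xi - 1} \exp(-(1+\xi z)^{-1/\xi})$ for $\xi \neq 0$ on the support $\{z : 1 + \xi z > 0\}$, and $f_Z(z) = e^{-2z} \exp(-e^{-z})$ on $\mathbb{R}$ for $\xi = 0$. Parameterize the sorted sample by $L \equiv Z_{(1)}$, $S \equiv Z_{(n)} - Z_{(1)}$, and $\tilde Z_j \equiv (Z_{(j+1)} - L)/S$ for $j = 1, \ldots, N$. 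The inverse map $Z_{(1)} = L$, $Z_{(j+1)} = L + S \tilde Z_j$, $Z_{(n)} = L + S$ has Jacobian determinant $S^N$, so the joint density of $(L, S, \tilde{\mathbf Z})$ equals $n!\, s^N f_Z(\ell) f_Z(\ell + s) \prod_{j=1}^N f_Z(\ell + s \tilde z_j)$ on the appropriate region, and $f_{\tilde{\mathbf Z}|\xi}$ is the marginal obtained by integrating out $\ell$ and $s$.

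The key technical step is evaluating this double integral. In the $\xi > 0$ case, I would substitute $y = 1 + \xi \ell$ to absorb the translation and then $y = \xi s w$ to factor out the common scale, producing an exponential of the form $\exp(-(\xi s)^{-1/\xi} A(w))$ for a positive function $A(w)$ depending on the $\tilde z_j$'s. Swapping the order of integration by Fubini and setting $r = (\xi s)^{-1/\xi}$ reduces the $s$-integral to a Gamma integral, $\int_0^\infty r^{2n-1} e^{-r A(w)}\, dr = \Gamma(2n)\, A(w)^{-2n}$. A final inverse substitution in $w$ recasts the remaining single integral as an integral in $s$ and yields the form in \eqref{eq:densitySecond}. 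The $\xi = 0$ case follows the same template with the substitution $u = e^{-\ell}$, after which the $u$-integral is directly a Gamma integral. The $\xi < 0$ case uses analogous substitutions, but the bounded support of $f_Z$ and the constraint $1 + \xi s > 0$ produce the upper limit $-1/\xi$ on the outer integral.

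The main obstacle will be the exponent bookkeeping across the nested substitutions: the various powers of $\xi$, $s$, $w$, and $r$ must cancel exactly to produce the clean factor $s^N$ and the normalization $\Gamma(2(N+2))$ that appear in \eqref{eq:densitySecond}. A secondary check is to verify consistency across the three branches, in particular that the $\xi \neq 0$ formulas converge to the $\xi = 0$ formula as $\xi \to 0$, which follows from the pointwise limit $(1 + \xi x)^{-1/\xi} \to e^{-x}$.
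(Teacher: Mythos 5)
Your proposal is correct and, for the convergence claim \eqref{eq:P*joint}, coincides with the paper's own (one-line) proof: the paper simply invokes Lemma \ref{lem:joint}, Lemma \ref{lem:cts}, and the continuous mapping theorem, exactly as you do via the affine invariance of the self-normalization map and the a.s.\ distinctness of the limiting $Z_j$'s. Where you go beyond the paper is the second step: the paper never writes out the derivation of $f_{\mathbf{\tilde{Z}}|\xi}$, whereas you supply it, and your strategy is sound --- the order-statistic density $n!\prod_i f_Z(y_i)$, the reparametrization $(L,S,\tilde{\mathbf{Z}})$ with Jacobian $S^N$, and the reduction of the $(\ell,s)$ double integral to a Gamma integral $\Gamma(2n)A^{-2n}$ times a single integral all check out (the exponent bookkeeping does close: the powers of the location variable collapse to $2n-1$ precisely because $n=N+2$).

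One caveat so that you are not derailed when matching your answer to the printed display: carrying your computation to the end, the quantity $A$ inside the logarithm is the sum of $(1+\xi s\,\tilde z_i)^{-1/\xi}$ over \emph{all} $n$ normalized points $\tilde z_0=0\le \tilde z_1\le\dots\le\tilde z_N\le \tilde z_{N+1}=1$, so it contains an additional constant term equal to $1$ (respectively $e^{-0\cdot s}=1$ when $\xi=0$) coming from the normalized minimum. The display \eqref{eq:densitySecond} omits that term, while the appendix expression \eqref{eq:f_aux1}, whose sums run over all $n$ components, is consistent with the full calculation; treat the missing ``$+1$'' in \eqref{eq:densitySecond} as a typographical slip rather than as evidence that your substitutions are wrong. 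Your final consistency check (the $\xi\neq 0$ branches converging to the $\xi=0$ branch via $(1+\xi x)^{-1/\xi}\to e^{-x}$) is a good safeguard and survives this correction.
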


Lemma \ref{lem:densitySecond} reveals that the asymptotic distribution of $\mathbf{\tilde{P}}$ is informative about the tail index $\xi $. In the next subsections, we show how to use this information to conduct asymptotically valid inference on the tail index $\xi $ and several other important features of these auctions, such as the winner's expected utility and the seller's expected revenue. Other tail features, such as the extreme quantile $F_{V}^{-1}(\tau)$ for some $\tau \approx 1$, can be similarly analyzed.  

\begin{runningexamplecont}
Provided that the valuation distribution $F_V$ satisfies Assumption \eqref{eq:DomainOfAttraction}, and as the number of bidders grows, the observed transaction prices satisfy
\begin{equation*}
    \frac{\mathbf{P}-b_K}{a_K}~=~ \left\{\frac{P_1-b_K}{a_K},\frac{P_2-b_K}{a_K},\frac{P_3-b_K}{a_K},\frac{P_4-b_K}{a_K}\right\} ~\overset{d}{\to }~ \left\{Z_1,Z_2,Z_3,Z_4 \right\},
\end{equation*}
where $\{Z_1,Z_2,Z_3,Z_4 \}$ are i.i.d.\ with distribution $H_{\xi}(E_{1,j}+E_{2,j})$ as in Lemma \ref{lem:joint}. If we sort $\{P_1,P_2,P_3,P_4\}$ in ascending order, we get $\{P_{(1)}\leq P_{(2)}\leq P_{(3)} \leq P_{(4)}\}= \{13.0,~ 20.2,~ 25.5,~ 26.0\}$.\footnote{We adjust these prices by inflation in Section \ref{sec:application}, but we ignore the inflation here for a simple illustration.}
The self-normalized vector $\tilde{\mathbf{P}}$ satisfies
\begin{equation*}
\tilde{\mathbf{P}} ~=~ \left\{\frac{P_{(2)}-P_{(1)}}{P_{(4)}-P_{(1)}},\frac{P_{(3)}-P_{(1)}}{P_{(4)}-P_{(1)}}\right\}~\overset{d}{\to }~\left\{\frac{Z_{(2)}-Z_{(1)}}{Z_{(4)}-Z_{(1)}},\frac{Z_{(3)}-Z_{(1)}}{Z_{(4)}-Z_{(1)}}\right\} ~=~ \tilde{\mathbf{Z}},
\end{equation*}
where the distribution of $\tilde{\mathbf{Z}}$ is uniquely characterized by the tail index $\xi$ as in Lemma \ref{lem:densitySecond} with $N=n-2=2$. 
Plugging in the observed transaction prices, we get
\begin{equation*}
\tilde{\mathbf{P}} ~=~ \left\{\frac{20.2~-~13.0}{26.0-~13.0},\frac{25.5~-~13.0}{26.0-~13.0}\right\}.
\end{equation*}
We treat these as random draws from $\tilde{\mathbf{Z}}$ and construct inference for $\xi$ and other auction fundamentals as functionals of $\xi$. Self-normalization allows us to bypass the unknown constants $a_K$ and $b_K$.
\end{runningexamplecont}

\subsection{Inference about the winner's expected utility}
\label{sec:sp-supplus}

Our objective is to conduct inference on the average of the winner's expected utility based on the transaction prices. Since each bidder bids their own valuation, the auction is won by the highest bidder, and the transaction price is equal to the second-highest bid, we conclude that the winner's expected utility in auction $j=1,\dots,n$ is $E[V_{(1),j}-P_{j}]=E[V_{(1),j}-V_{(2),j}] $, whose average is
\begin{equation}
\mu _{K}~=~ \frac{1}{n}\sum_{j=1}^{n}E[V_{(1),j}-V_{(2),j}].
\label{eq:winner1}
\end{equation}

\begin{runningexamplecont}
When the number of auctions is large, we can consistently estimate $E[V_{(1),j}-V_{(2),j}]$ using existing methods in the auction literature. However, this approach is not applicable in our current setting, where we have only four observations of single-letter license plates. On the other hand, this setup can be well represented by our asymptotic framework, which considers a diverging $K = \min\{K_1,K_2,K_3,K_4\}$ and fixed number $n$ of homogeneous auctions.
\end{runningexamplecont}

Given transaction prices $\mathbf{P}$, we consider a confidence interval (CI) for $\mu _{K}$ given by
\begin{equation}
U(\mathbf{P})~= ~( P_{(n)}-P_{(1)}) \times \tilde{U}(\mathbf{\tilde{P }}), \label{eq:CI_muK_defn}
\end{equation}
where $\mathbf{\tilde{P}}\in \Sigma $ are the sorted and self-normalized transaction prices in \eqref{eq:P*}, and $\tilde{U}:\Sigma\to \mathcal{P}(\mathbb{R})$ is a CI defined on $\mathbf{\tilde{P}}$. By \eqref{eq:CI_muK_defn}, the CI $U(\mathbf{P})$ is invariant to the sorting and translation of $\mathbf{P}$, and equivariant to their scale.

The remainder of the section denotes $ Z_{1,j}=H_{\xi }(E_{1,j})$, $Z_{2,j}=H_{\xi }(E_{1,j}+E_{2,j})$, $ \{(E_{1,j},E_{2,j}):j=1,\dots ,n\}$ are i.i.d.\ standard exponential random variables and $H_{\xi }$ is as in \eqref{eq:H_function}. We also use $ \mathbf{Z}=\{Z_{2,j}:j=1,\dots ,n\}$, $Z_{(n)}=\max \{Z_{2,j}:j=1,\dots ,n\}$, $Z_{(1)}=\min \{Z_{2,j}:j=1,\dots ,n\}$, for $j=1,\ldots ,N=n-2$,
\begin{equation}
\tilde{Z}_{j}~=~\bigg\{
\begin{tabular}{ll}
$\frac{Z_{( j+1) }-Z_{(1)}}{Z_{(n)}-Z_{(1)}}$ & if $ Z_{(n)}>Z_{(1)},$ \\
$0$ & if $Z_{(n)}=Z_{(1)},$
\end{tabular}
\label{eq:ZsortedScaled}
\end{equation}
and $\mathbf{\tilde{Z}}=\{\tilde{Z}_{j}:j=1,\dots ,N\} \in \Sigma$. Finally, let $Y_{\mu }\equiv E[Z_{1,1}-Z_{2,1}]/(Z_{(n)}-Z_{(1)})$ and $\kappa _{\xi }(\mathbf{\tilde{Z}} )=E[Z_{(n)}-Z_{(1)}|\mathbf{\tilde{Z}}]$. The distributions of these random variables are fully characterized by the tail index $\xi $. 
In particular, Lemma \ref{lem:sp-surplus} shows $ E[Z_{1,1}-Z_{2,1}] = \Gamma(1-\xi)$, where $\Gamma$ is the standard Gamma function. 
For the remainder of this section, we will use $P_{\xi}$ and $E_{\xi}$ to refer to the probability and expectation associated with this distribution. The following result describes the asymptotic properties of the CI in \eqref{eq:CI_muK_defn} as $K\to \infty$. 

\begin{theorem} \label{thm:CI_K} 
Assume \eqref{eq:DomainOfAttraction} holds, and that for some $\varepsilon >0$ with $(1+\varepsilon)\xi<1$, $E[|V_{i,j}|^{1+\varepsilon}]<\infty $ for all $i=1\dots,K_j$ in auction $j=1,\dots,N$. Finally, assume that the CI for $\mu _{K}$, $U(\mathbf{P})$, is as in \eqref{eq:CI_muK_defn} with $\tilde{U}:\Sigma\to \mathcal{P}(\mathbb{R}) $ that satisfies the following conditions:
\begin{enumerate}[(a)]
\item $P_{\xi}(\{Y_{\mu },\mathbf{\tilde{Z}}\}\in \partial \{(y,h)\in \mathbb{R} \times \Sigma:y\in \tilde{U}(h)\})=0$, where $\partial A$ denotes the boundary of $A$.
\item $\lg (\tilde{U}(h))<\infty $ for any $h\in \Sigma$, where $ \lg (A)$ denotes the length of $A$ (i.e., $\lg (A)\equiv \int \mathbf{1} [y\in A]dy$).
\item For any sequence $\{h_{\ell}\in \Sigma\}_{\ell\in \mathbb{N}}$ with $ h_{\ell}\to h\in \Sigma$, $\lg (\tilde{U}(h_{\ell}))\to \lg (\tilde{U}(h))$.
\end{enumerate}
Then, as $K\to \infty $,
\begin{enumerate}
\item $P(\mu _{K}\in U(\mathbf{P}))~\to ~P_{\xi }(Y_{\mu }\in \tilde{ U}(\mathbf{\tilde{Z}}))$,
\item $E[\lg (U(\mathbf{P}))]/a_{K}~\to ~E_{\xi }[\kappa _{\xi }( \mathbf{\tilde{Z}})\lg (\tilde{U}(\mathbf{\tilde{Z}}))]$.
\end{enumerate}
\end{theorem}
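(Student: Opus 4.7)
\textbf{Proof plan for Theorem \ref{thm:CI_K}.}

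The central observation is that the CI has the homogeneous structure $U(\mathbf{P}) = (P_{(n)} - P_{(1)}) \, \tilde{U}(\tilde{\mathbf{P}})$, so the event $\{\mu_K \in U(\mathbf{P})\}$ rewrites as $\{\mu_K/(P_{(n)} - P_{(1)}) \in \tilde{U}(\tilde{\mathbf{P}})\}$ and $\lg(U(\mathbf{P})) = (P_{(n)} - P_{(1)})\,\lg(\tilde{U}(\tilde{\mathbf{P}}))$. Hence, the plan is to establish joint weak convergence
\begin{equation*}
\Bigl(\tfrac{P_{(n)} - P_{(1)}}{a_K},\, \tilde{\mathbf{P}}\Bigr) \;\overset{d}{\to}\; \bigl(Z_{(n)} - Z_{(1)},\, \tilde{\mathbf{Z}}\bigr),
\end{equation*}
couple it with a deterministic rescaling of $\mu_K$, and then apply the continuous mapping / Portmanteau theorem plus a uniform integrability step for the expected length.

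First I would use Lemma \ref{lem:joint} (together with \eqref{eq:secondprice}) to obtain the joint convergence of $((P_j - b_K)/a_K)_{j=1}^n$ to $(Z_j)_{j=1}^n$. Sorting is a continuous operation, so the joint vector of sorted normalized prices converges to the sorted $Z_j$'s, and subtracting $P_{(1)}$ cancels $b_K$ in the ratio defining $\tilde{P}_j$. Since $Z_{(n)} > Z_{(1)}$ almost surely under \eqref{eq:DomainOfAttraction}, the map $(y_1,\dots,y_n) \mapsto (y_n - y_1, \tilde{y})$ is continuous almost everywhere on the support of the limit, so the continuous mapping theorem yields the displayed joint convergence above, together with $\tilde{\mathbf{P}} \overset{d}{\to} \tilde{\mathbf{Z}}$ as in Lemma \ref{lem:densitySecond}.

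Next I would show $\mu_K / a_K \to \Gamma(1 - \xi) = E_\xi[Z_{1,1} - Z_{2,1}]$. For each $j$, Lemma \ref{lem:joint} gives $(V_{(1),j} - V_{(2),j})/a_K \overset{d}{\to} Z_{1,j} - Z_{2,j}$; the moment condition $E[|V_{i,j}|^{1+\varepsilon}] < \infty$ with $(1+\varepsilon)\xi < 1$ is precisely the standard EV regularity (see \citet[Ch.~2]{dehaan2006book}) that upgrades this to convergence of the first moment via uniform integrability. Combined with the convergence of $(P_{(n)} - P_{(1)})/a_K$, Slutsky's theorem gives
\begin{equation*}
\Bigl(\tfrac{\mu_K}{P_{(n)} - P_{(1)}},\, \tilde{\mathbf{P}}\Bigr) \;\overset{d}{\to}\; \bigl(Y_\mu, \tilde{\mathbf{Z}}\bigr).
\end{equation*}
Condition (a) says that the indicator set for coverage has a boundary of limit-measure zero, so the Portmanteau theorem immediately yields claim 1.

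For claim 2, joint convergence and the continuity of $h \mapsto \lg(\tilde{U}(h))$ guaranteed by condition (c) imply, via the continuous mapping theorem, that $(P_{(n)} - P_{(1)})\lg(\tilde{U}(\tilde{\mathbf{P}}))/a_K \overset{d}{\to} (Z_{(n)} - Z_{(1)})\lg(\tilde{U}(\tilde{\mathbf{Z}}))$. To pass to expectations I need uniform integrability of this sequence. Here condition (b) controls $\lg(\tilde{U}(\cdot))$ pointwise, while the same EV moment bound used above, applied now to $(P_{(n)} - P_{(1)})/a_K$, supplies the required integrability (one can bound $P_{(n)} - P_{(1)}$ by $\max_j V_{(1),j}$ in absolute value and invoke $E[|V|^{1+\varepsilon}]<\infty$ together with $(1+\varepsilon)\xi<1$). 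Finally, conditioning on $\tilde{\mathbf{Z}}$ and using the tower property yields $E_\xi[(Z_{(n)} - Z_{(1)})\lg(\tilde{U}(\tilde{\mathbf{Z}}))] = E_\xi[\kappa_\xi(\tilde{\mathbf{Z}})\lg(\tilde{U}(\tilde{\mathbf{Z}}))]$, completing claim 2.

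The main obstacle is the uniform integrability step: verifying that $(P_{(n)} - P_{(1)})/a_K$ and $(V_{(1),j} - V_{(2),j})/a_K$ are uniformly integrable under the stated moment condition, especially when $\xi > 0$ so that $a_K \to \infty$. This is where the delicate interplay between the moment assumption, the EV regular-variation rate $a_K \asymp K^\xi$, and Potter-type bounds enters; everything else is a relatively mechanical combination of Slutsky, continuous mapping, Portmanteau, and the tower property.
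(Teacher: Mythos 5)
Your plan is essentially the paper's own proof: joint convergence of $((P_{(n)}-P_{(1)})/a_K,\tilde{\mathbf{P}})$ to $(Z_{(n)}-Z_{(1)},\tilde{\mathbf{Z}})$, a deterministic limit for $\mu_K/a_K$ obtained by uniform integrability, the Portmanteau theorem with condition (a) for coverage, and continuous mapping plus uniform integrability plus the tower property for the expected length. One technical suggestion in your last paragraph would not work as stated, however: bounding $P_{(n)}-P_{(1)}$ by $\max_j |V_{(1),j}|$ and dividing by $a_K$ fails whenever $\xi\leq 0$, since then $a_K\to 0$ and the uncentered maximum does not vanish, so $\max_j|V_{(1),j}|/a_K$ diverges. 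The paper instead writes $|P_{(n)}-P_{(1)}|\leq \sum_{l,m}|V_{(2),l}-V_{(2),m}|$ so that the unknown centering $b_K$ cancels, reduces everything to moments of $|V_{(2),l}-b_K|/a_K$, and proves in a separate lemma (Lemma \ref{lem:finite_muK2}) that these centered and scaled \emph{second} order statistics have uniformly bounded $(1+\delta)$ moments for large $K$; the first order statistic case follows from \citet[Theorem 5.3.1]{dehaan2006book}, but the second requires its own conditioning argument via H\"older and Minkowski. Since you correctly identified this uniform integrability step as the crux, the fix is localized, but as written your bound is the one step that would fail.
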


\vspace{0.5em}
Suppose that we consider the class of CIs for $\mu_K$ given by \eqref{eq:CI_muK_defn} and conditions (a)-(c) in Theorem \ref{thm:CI_K}. The finite sample properties of these CIs are unknown. If the number of bidders is large, it is natural to rely on the asymptotic behavior derived in Theorem \ref{thm:CI_K} to choose our CI. First, we can guarantee asymptotic validity by imposing
\begin{equation}
P_{\xi }( Y_{\mu }\in \tilde{ U}(\mathbf{\tilde{Z}})) ~\geq~ 1-\alpha ~~~\text{ for all }\xi \in \Xi .  \label{eq:asy_cov}
\end{equation}
Second, we can seek improvements in statistical power by choosing a CI that has a small asymptotic expected length (scaled by $a_{K}>0$). Since the tail index $\xi$ is unknown, we focus on the CI's asymptotic weighted length, given by
\begin{equation}
\int_{\xi \in \Xi} E_{\xi }[ \kappa _{\xi }(\tilde{\mathbf{{Z}}})\lg (\tilde U(\tilde{\mathbf{{Z}}})))] dW(\xi ),  \label{eq:asy_length}
\end{equation}
where $W$ is a user-defined weight function. We can combine both objectives by choosing the CI for $\mu_K$ that minimizes the asymptotic weighted length in \eqref{eq:asy_length} subject to the asymptotic validity condition in \eqref{eq:asy_cov}. 

Formally, let $\mathbb{U}$ denote the collection of CIs that satisfy conditions (a)-(c) in Theorem \ref{thm:CI_K}. Then, we propose choosing $\tilde{U}$ in \eqref{eq:CI_muK_defn} as the solution to the following problem:
\begin{equation}
\underset{\tilde{U} \in \mathbb{U}}{\arg\min} \int_{\xi \in \Xi} E_{\xi }[ \kappa _{\xi }(\tilde{\mathbf{{Z}}})\lg (\tilde{U}(\tilde{\mathbf{{Z}}})))] dW(\xi )~~\text{ s.t.}~~
P_{\xi }( Y_{\mu }\in \tilde{ U}(\mathbf{\tilde{Z}})) ~\geq~ 1-\alpha~~\text{ for all }\xi \in \Xi .
\label{eq:program}
\end{equation}
Following \citet{mullerwang2017}, we write down \eqref{eq:program} in its Lagrangian form:
\begin{equation}
\min_{\tilde{U} \in \mathbb{U}}\int_{\xi \in \Xi} E_{\xi }[ \kappa _{\xi }(\tilde{\mathbf{{Z}}})\lg (\tilde{U}(\tilde{\mathbf{{Z}}})))] dW(\xi )+\int_{\xi \in \Xi}P_{\xi }( Y_{\mu }\in \tilde{ U}(\mathbf{\tilde{Z}})) d\Lambda (\xi ),
\label{eq:program2}
\end{equation}
where the non-negative measure $\Lambda$ denotes the Lagrangian weights chosen to guarantee the asymptotic size constraint in \eqref{eq:program}.\footnote{\citet{mullerwang2017} proposes the inference method for extreme quantile and tail conditional expectation. Despite its similar structure, our problem here is fundamentally different from that in \citet{mullerwang2017}. First, our asymptotic distribution is built upon $n$ independent draws from $Z_{2,j}$, while that in \citet{mullerwang2017} is built upon a vector of order statistics, which are dependent and jointly EV distributed. Second, the objects of interest in the limit experiment are substantially different. Third, our Theorem \ref{thm:CI_K} formally establishes asymptotic optimality, while the analysis in \citet{mullerwang2017} does not.}
If we ignore the constraints in $\mathbb{U}$, the solution to \eqref{eq:program2} is given by the following set: for every $h \in \Sigma$,
\begin{equation}
\tilde{U}(h) ~=~ \bigg\{ y \in \mathbb{R} : \int_{\xi \in \Xi} \kappa_{\xi}(h)f_{\tilde{\mathbf{Z}}|\xi}(h) dW(\xi) \leq \int_{\xi \in \Xi} f_{(Y_{\mu},\tilde{\mathbf{Z}})|\xi}(y,h) d\Lambda(\xi) \bigg\}, \label{eq:sol_program2}
\end{equation}
where for every $(y,h) \in \mathbb{R} \times \Sigma$, $\kappa_{\xi}(h)f_{\tilde{\mathbf{Z}}|\xi}(h)$ and $f_{(Y_{\mu},\tilde{\mathbf{Z}})|\xi}(y,h)$ are given in Section \ref{sec:appendix-computation}, and the integrals in \eqref{eq:sol_program2} can be numerically calculated by Gaussian quadrature. We can verify numerically that $\tilde{U}(h) $ in \eqref{eq:sol_program2} takes the form of an interval whose length varies continuously with $h \in \Sigma$. 
Under these conditions, Lemma \ref{lem:interval1} shows that $\tilde{U}(h)$ in \eqref{eq:sol_program2} belongs to $\mathbb{U}$ and, therefore, $\tilde{U}(h)$ in \eqref{eq:sol_program2} solves \eqref{eq:program2}.

By combining \eqref{eq:CI_muK_defn}, \eqref{eq:sol_program2}, and the arguments in the previous paragraph, we propose:
\begin{equation}
U(\mathbf{P}) ~=~ ( P_{(n)}-P_{(1)}) \times \bigg\{ y \in \mathbb{R} : \int_{\xi \in \Xi} \kappa_{\xi}(\mathbf{\tilde{P}}) f_{\tilde{\mathbf{Z}}|\xi}(\mathbf{\tilde{P}}) dW(\xi) \leq \int_{\xi \in \Xi} f_{(Y_{\mu},\tilde{\mathbf{Z}})|\xi}(y,\mathbf{\tilde{P}}) d\Lambda(\xi) \bigg\}. \label{eq:CI_muK_defn2}
\end{equation}
Under the conditions in the previous paragraph, Theorem \ref{thm:CI_K} implies that \eqref{eq:CI_muK_defn2} belongs to the class of asymptotically valid CIs and minimizes the asymptotic expected length (scaled by $a_K>0$) within this class.

To calculate \eqref{eq:CI_muK_defn2}, the only remaining challenge is to find appropriate Lagrangian weights $\Lambda$ that ensure asymptotic validity in the limiting problem, as described in \eqref{eq:asy_cov}. We tackle this challenge using a numerical approach developed by \citet{elliott2015}. It is relevant to emphasize that these Lagrangian weights depend solely on the value of $n$ and, as a result, they only need to be computed once. For more information on calculating these weights, please refer to Section \ref{sec:appendix-computation}.

\begin{runningexamplecont}
    The winner's expected utility $\mu_K$ is substantial in single-letter license plate auctions. Based on $\mathbf{P}=\{20.2, 25.5, 26.0, 13.0\}$, we find in Section \ref{sec:application} that $U(\mathbf{P})=[2.19,~58.89]$, expressed in million HKD. According to our theoretical results, this CI covers $\mu_K$ with a probability of 95\% and minimizes the weighted expected length defined in \eqref{eq:asy_length}.
\end{runningexamplecont}

\subsection{Inference about the seller's expected revenue}
\label{sec:sp-revenue}

We now conduct inference on the average of the seller's expected revenue based on transaction prices. Since bidders bid their valuation and the transaction price is the second-highest bid, we get that the seller's expected revenue in auction $j=1,\dots,n$ is $E[P_{j}]=E[V_{(2),j}]$, whose average is
\begin{equation}
\pi _{K}~= 
~\frac{1}{n}\sum_{j=1}^{n}E[V_{(2),j}].  \label{eq:seller1}
\end{equation}

Given transaction prices $\mathbf{P}$, we consider a CI for $\pi _{K}$ given by
\begin{equation}
U(\mathbf{P})~\equiv ~( P_{(n)}-P_{(1)}) \times \tilde{U}(\mathbf{\tilde{P }}) + P_{(1)}, \label{eq:CI_piK_defn}
\end{equation}
where $\mathbf{\tilde{P}}\in \Sigma $ are the sorted and self-normalized transaction prices in \eqref{eq:P*}, and $\tilde{U}:\Sigma\to \mathcal{P}(\mathbb{R})$ is a CI defined on $\mathbf{\tilde{P}}$. By \eqref{eq:CI_piK_defn}, the CI $U(\mathbf{P})$ is invariant to the sorting of $\mathbf{P}$, and equivariant to their location and scale.

The remainder of this section denotes $ Z_{2,j}=H_{\xi }(E_{1,j}+E_{2,j})$, $ \{(E_{1,j},E_{2,j}):j=1,\dots ,n\}$ are i.i.d.\ standard exponential random variables and $H_{\xi }$ is as in \eqref{eq:H_function}. We also use $ \mathbf{Z}=\{Z_{2,j}:j=1,\dots ,n\}$, $Z_{(n)}=\max \{Z_{2,j}:j=1,\dots ,n\}$, $Z_{(1)}=\min \{Z_{2,j}:j=1,\dots ,n\}$, for $j=1,\ldots ,N=n-2$,
\begin{equation*}
\tilde{Z}_{j}=\bigg\{
\begin{tabular}{ll}
$\frac{Z_{( j+1) }-Z_{(1)}}{Z_{(n)}-Z_{(1)}}$ & if $ Z_{(n)}>Z_{(1)},$ \\
$0$ & if $Z_{(n)}=Z_{(1)},$
\end{tabular}
\end{equation*}
and $\mathbf{\tilde{Z}}=\{\tilde{Z}_{j}:j=1,\dots ,N\} \in \Sigma $. Finally, let $Y_{\pi }\equiv (E[Z_{2,1}] - Z_{(1)}) /(Z_{(n)}-Z_{(1)})$ and $\kappa _{\xi }(\mathbf{\tilde{Z}} )=E[Z_{(n)}-Z_{(1)}|\mathbf{\tilde{Z}}]$. As in the previous section, the distribution of these random variables is fully characterized by its tail index $\xi $. In particular, Lemma \ref{lem:sp-revenue} shows that $E[Z_{2,1}] = (\Gamma(2-\xi)-1)/\xi$ if $\xi \neq 0$, and $-1+\bar{\gamma}$ if $\xi =0$, where $\Gamma$ is the standard Gamma function and $\bar{\gamma}\approx 0.577$ is the Euler's constant. For the remainder of this section, we will use $P_{\xi}$ and $E_{\xi}$ to refer to the probability and expectation associated with this distribution.
The next result provides the asymptotic properties of the CI in \eqref{eq:CI_piK_defn} as $K\to \infty$. 
\begin{theorem}\label{thm:CI_pi_K} 
Assume \eqref{eq:DomainOfAttraction} holds, and that for some $\varepsilon >0$ with $(1+\varepsilon)\xi<1$, $E[|V_{i,j}|^{1+\varepsilon}]<\infty $ for all $i=1\dots,K_j$ in auction $j=1,\dots,N$. Finally, assume that the CI for $\pi _{K}$, $U(\mathbf{P})$, is as in \eqref{eq:CI_piK_defn} with $\tilde{U}:\Sigma\to \mathcal{P}(\mathbb{R}) $ that satisfies the following conditions:
\begin{enumerate}[(a)]
\item $P_{\xi}(\{Y_{\pi},\mathbf{\tilde{Z}}\}\in \partial \{(y,h)\in \mathbb{R} \times \Sigma:y\in \tilde{U}(h)\})=0$, where $\partial A$ denotes the boundary of $A$.
\item $\lg (\tilde{U}(h))<\infty $ for any $h\in \Sigma$, where $\lg (A)$ denotes the length of $A$.
\item For any sequence $\{h_{\ell}\in \Sigma\}_{\ell\in \mathbb{N}}$ with $ h_{\ell}\to h\in \Sigma$, $\lg (\tilde{U} (h_{\ell}))\to \lg (\tilde{U}(h))$.
\end{enumerate}
Then, as $K\to \infty $,
\begin{enumerate}
\item $P(\pi _{K}\in U(\mathbf{P}))~\to ~P_{\xi }(Y_{\pi }\in \tilde{ U}(\mathbf{\tilde{Z}}))$,
\item $E[\lg (U(\mathbf{P}))]/a_{K}~\to ~E_{\xi }[\kappa _{\xi }( \mathbf{\tilde{Z}})\lg (\tilde{U}(\mathbf{\tilde{Z}}))]$.
\end{enumerate}
\end{theorem}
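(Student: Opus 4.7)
The proof structure mirrors that of Theorem \ref{thm:CI_K}, with the adjustments reflecting the shift by $P_{(1)}$ in the CI definition and the different centering statistic. My plan is to reduce both parts of the claim to (i) joint weak convergence of the normalized transaction prices together with (ii) convergence of $(\pi_K - b_K)/a_K$ to $E_\xi[Z_{2,1}]$, the latter being the substantive content differentiating Theorem \ref{thm:CI_pi_K} from Theorem \ref{thm:CI_K}.

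First, because $U(\mathbf{P}) = (P_{(n)} - P_{(1)})\tilde U(\mathbf{\tilde P}) + P_{(1)}$, the coverage event $\{\pi_K \in U(\mathbf{P})\}$ is equivalent to $\{Y_{K,\pi} \in \tilde U(\mathbf{\tilde P})\}$, where $Y_{K,\pi} \equiv (\pi_K - P_{(1)})/(P_{(n)} - P_{(1)})$, and $\lg(U(\mathbf{P}))/a_K = ((P_{(n)} - P_{(1)})/a_K)\,\lg(\tilde U(\mathbf{\tilde P}))$. Next, by Lemma \ref{lem:joint} applied to $P_j = V_{(2),j}$ and the continuous mapping theorem (noting $Z_{(n)} > Z_{(1)}$ almost surely since $n\geq 3$ and the $Z_{2,j}$'s admit a density), we obtain the joint convergence
\begin{equation*}
\Big(\tfrac{P_{(1)} - b_K}{a_K},\,\tfrac{P_{(n)} - b_K}{a_K},\,\mathbf{\tilde P}\Big) \overset{d}{\to} (Z_{(1)}, Z_{(n)}, \mathbf{\tilde Z}).
\end{equation*}

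The key analytic step is upgrading $(P_j - b_K)/a_K \overset{d}{\to} Z_{2,j}$ to convergence of the expectation $(\pi_K - b_K)/a_K \to E_\xi[Z_{2,1}]$, with the closed-form expression supplied by Lemma \ref{lem:sp-revenue}. The moment assumption $E[|V_{i,j}|^{1+\varepsilon}]<\infty$ combined with $(1+\varepsilon)\xi<1$ guarantees uniform integrability of the normalized second maxima through standard EVT moment-transfer arguments (e.g.\ \citet[Theorem 2.1.8]{dehaan2006book}); this promotes weak convergence into $L^1$ convergence of the centered statistic. Combining with the previous display via Slutsky's theorem yields
\begin{equation*}
(Y_{K,\pi}, \mathbf{\tilde P}) \overset{d}{\to} (Y_\pi, \mathbf{\tilde Z}), \qquad Y_\pi = \tfrac{E_\xi[Z_{2,1}] - Z_{(1)}}{Z_{(n)} - Z_{(1)}}.
\end{equation*}
For part 1, the Portmanteau theorem together with condition (a) gives $P(Y_{K,\pi}\in \tilde U(\mathbf{\tilde P}))\to P_\xi(Y_\pi\in \tilde U(\mathbf{\tilde Z}))$.

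For part 2, observe that $\lg(\tilde U(\cdot))$ is continuous on the compact set $\Sigma\subset[0,1]^N$ by condition (c) and hence uniformly bounded, while condition (b) ensures it is almost surely finite. Uniform integrability of $(P_{(n)} - P_{(1)})/a_K$ (again an EVT consequence of the moment restriction) and the joint weak convergence above then yield
\begin{equation*}
E\Big[\tfrac{P_{(n)} - P_{(1)}}{a_K}\,\lg(\tilde U(\mathbf{\tilde P}))\Big] \to E_\xi\big[(Z_{(n)} - Z_{(1)})\,\lg(\tilde U(\mathbf{\tilde Z}))\big] = E_\xi\big[\kappa_\xi(\mathbf{\tilde Z})\,\lg(\tilde U(\mathbf{\tilde Z}))\big],
\end{equation*}
with the last equality following from the tower property applied to $\kappa_\xi(\mathbf{\tilde Z}) = E[Z_{(n)} - Z_{(1)}\mid \mathbf{\tilde Z}]$. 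The main obstacle will be the EVT uniform-integrability step: verifying carefully that $(1+\varepsilon)\xi<1$ and $E[|V|^{1+\varepsilon}]<\infty$ translate into $\sup_K E[((P_j-b_K)/a_K)^{1+\varepsilon}]<\infty$. This is standard but delicate, and it is almost certainly the identical argument invoked in the proof of Theorem \ref{thm:CI_K}, so it transfers here without modification.
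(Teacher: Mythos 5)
Your proposal is correct and follows essentially the same route as the paper: the paper's proof of Theorem \ref{thm:CI_pi_K} is precisely a reduction to the argument for Theorem \ref{thm:CI_K}, replacing $\{(V_{(1),j}-P_j)/a_K\}$ by $\{(P_j-b_K)/a_K\}$ so that the centered statistic becomes $(\pi_K-b_K)/a_K \to E_\xi[Z_{2,1}]$ and $Y_\mu$ becomes $Y_\pi$, with the uniform-integrability step you flag supplied by Lemma \ref{lem:finite_muK2} (moments of order $1+\delta$, $\delta=\varepsilon/2$, for the normalized second maximum) exactly as in that earlier proof. Your handling of the location shift by $P_{(1)}$, the a.s.\ event $P_{(n)}>P_{(1)}$, and the boundedness of $\lg(\tilde U(\cdot))$ on the compact set $\Sigma$ all match the paper's argument.
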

\vspace{0.5em}

Following the ideas in Section \ref{sec:sp-revenue}, we choose our CI based on the asymptotic behavior in Theorem \ref{thm:CI_pi_K}. In particular, we propose to choose $\tilde{U}$ in \eqref{eq:CI_piK_defn} to minimize the asymptotic weighted length of the CI subject to asymptotic validity, given by
\begin{equation}
\underset{\tilde{U} \in \mathbb{U}}{\arg\min} \int_{\xi \in \Xi} E_{\xi }[ \kappa _{\xi }(\tilde{\mathbf{{Z}}})\lg (\tilde{U}(\tilde{\mathbf{{Z}}})))] dW(\xi )~~~\text{ s.t.}~~~
P_{\xi }( Y_{\pi }\in \tilde{ U}(\mathbf{\tilde{Z}})) ~\geq~ 1-\alpha~~~\text{ for all }\xi \in \Xi ,
\label{eq:program-rp}
\end{equation}
where $\mathbb{U}$ denotes the collection of CIs that satisfy the conditions in Theorem \ref{thm:CI_pi_K} and $W$ is a user-defined weight function. 
The solution to \eqref{eq:program-rp} is as follows: for every $h \in \Sigma$,
\begin{equation}
\tilde{U}(h) ~=~ \bigg\{ y \in \mathbb{R} : \int_{\xi \in \Xi} \kappa_{\xi}(h)f_{\tilde{\mathbf{Z}}|\xi}(h) dW(\xi) \leq \int_{\xi \in \Xi} f_{(Y_{\pi},\tilde{\mathbf{Z}})|\xi}(y,h) d\Lambda(\xi) \bigg\}, \label{eq:Lambda-rp}
\end{equation}
where, for every $(y,h) \in \mathbb{R} \times \Sigma$, $\kappa_{\xi}(h)f_{\tilde{\mathbf{Z}}|\xi}(h)$ and $f_{(Y_{\pi},\tilde{\mathbf{Z}})|\xi}(y,h)$ are given in Section \ref{sec:appendix-computation}.

By combining \eqref{eq:CI_muK_defn}, \eqref{eq:sol_program2}, and the arguments in the previous paragraph, we propose:
\begin{align}
U(\mathbf{P})~=~& P_{(1)}+( P_{(n)}-P_{(1)}) \notag \\ 
\times& \bigg\{ y \in \mathbb{R} : \int_{\xi \in \Xi} \kappa_{\xi}(\mathbf{\tilde{P}}) f_{\tilde{\mathbf{Z}}|\xi}(\mathbf{\tilde{P}}) dW(\xi) \leq \int_{\xi \in \Xi} f_{(Y_{\pi},\tilde{\mathbf{Z}})|\xi}(y,\mathbf{\tilde{P}}) d\Lambda(\xi) \bigg\} , \label{eq:CI_piK_defn2}
\end{align}
Our derivations establish that \eqref{eq:CI_piK_defn2} belongs to the class of asymptotically valid CIs and minimizes the asymptotic expected length (scaled by $a_K>0$) within this class.

\begin{runningexamplecont}
    The government's expected revenue $\pi_K$ is also substantial in single-letter license plate auctions. Using $\mathbf{P}=\{20.2, 25.5, 26.0, 13.0\}$, we find in Section \ref{sec:application} that $U(\mathbf{P})=[8.70,~37.65]$, expressed in million HKD. As the number of participants diverges, this CI covers $\pi_K$ with a minimum probability of 95\% and minimizes the weighted expected length.
\end{runningexamplecont}

\subsection{Inference about the tail index}\label{sec:sp-index}

A key parameter in our asymptotic framework is the tail index $\xi \in \Xi$.
The goal of this section is to conduct inference on this parameter. That is, we are interested in the following hypothesis test:
\begin{equation}
H_{0}:\xi \in \xi _{0}~~~\text{v.s.}~~~H_{1}:\xi \in \Xi _{1},
\label{eq:HT0}
\end{equation}
where $\xi _{0}$ is a fixed parameter value in $\Xi $ and $\Xi _{1}=\Xi \backslash \{ \xi _{0}\}$.

We divide this section into three subsections. Section \ref{sec:HypSimple} considers the situation where the alternative hypothesis in \eqref{eq:HT0} is simple, while Section \ref{sec:HypComp} explores the case where this alternative hypothesis is composite. Finally, Section \ref{sec:HypApplication} applies the methods in previous sections to test the standard regularity conditions in auction models. 

\subsubsection{Simple alternative hypothesis}\label{sec:HypSimple}

This section considers the following inference problem:
\begin{equation}
H_{0}:\xi =\xi _{0}~~~\text{v.s.}~~~H_{1}:\xi =\xi _{1},  \label{eq:HT1}
\end{equation}
where $\xi _{0}$ and $\xi _{1}$ are distinct parameter values. 

By the Neyman-Pearson Lemma, a natural starting point is the likelihood ratio test for the sorted and self-normalized version of our data. While the likelihood ratio test is unknown in finite samples, its limiting distribution is provided in Lemma \ref{lem:densitySecond}. This idea yields the following testing procedure:
\begin{equation}
{\varphi}^{\ast }( \mathbf{{Z}}) 
~\equiv ~1\big[ {f_{\mathbf{\tilde{Z}}|\xi _{1}}( \mathbf{\tilde{Z}}) }/{f_{\mathbf{\tilde{Z}}|\xi _{0}}( \mathbf{\tilde{Z}}) }>q( \xi_{0},\xi _{1},\alpha ) \big] ,  \label{eq:TailTest1}
\end{equation}
where the critical value is given by
\begin{equation*}
q( \xi _{0},\xi _{1},\alpha ) ~\equiv ~( 1-\alpha )\text{-quantile of }{f_{\mathbf{\tilde{Z}}|\xi _{1}}( \mathbf{\tilde{Z}}_{0}) }/{f_{\mathbf{\tilde{Z}}|\xi _{0}}( \mathbf{\tilde{Z}}_{0}) }
\end{equation*}
and $\mathbf{\tilde{Z}}_{0}$ is distributed according to $f_{\mathbf{\tilde{Z}}|\xi _{0}}$. The Neyman-Pearson Lemma implies that \eqref{eq:TailTest1} is the most powerful level-$\alpha $ test in the limiting problem. 

Following the guidance of the asymptotic analysis, our candidate for optimal test follows from replacing in \eqref{eq:TailTest1} the limiting random variable $\mathbf{\tilde{Z}}$ with its data analog $\mathbf{\tilde{P}}$, i.e., 
\begin{equation}
\varphi _{K}^{\ast }( \mathbf{P}) ~\equiv
~1\big[ {f_{\mathbf{\tilde{Z}}|\xi _{1}}( \mathbf{\tilde{P}}) }/{f_{\mathbf{\tilde{Z}}|\xi _{0}}( \mathbf{\tilde{P}}) }>q( \xi _{0},\xi_{1},\alpha ) \big] . \label{eq:simpleTest}
\end{equation}
By Lemma \ref{lem:densitySecond} and standard convergence arguments, \eqref{eq:simpleTest} is asymptotically valid, i.e.,
\begin{equation}
\lim_{K\to \infty }E_{\xi _{0}}[ \varphi _{K}^{\ast }( \mathbf{P}) ] ~\leq ~\alpha ,  \label{eq:sizeAlpha}
\end{equation}
where $E_{\xi } $ denotes the expectation with respect to distribution with tail index $\xi $. In fact, \eqref{eq:simpleTest} is asymptotically level $\alpha $, i.e., \eqref{eq:sizeAlpha} holds with equality. More interestingly, we can leverage \citet[Theorem 1]{muller:2011} to establish that \eqref{eq:simpleTest} is efficient in the sense of being the asymptotically most powerful test in the class of asymptotically valid and equivariant tests. Formally, for any asymptotically valid test $\varphi _{K}( \mathbf{P}) $ (i.e., \eqref{eq:sizeAlpha} holds with $\varphi _{K}^{\ast }( \mathbf{P}) $ replaced by $\varphi _{K}( \mathbf{P}) $),
$
\underset{K\to \infty }{\lim \sup }~E_{\xi _{1}}[ \varphi _{K}( \mathbf{P}) ] ~\leq ~\underset{K\to \infty }{\lim \sup }~E_{\xi _{1}}[ \varphi _{K}^{\ast }( \mathbf{{{P}}}) ] .
$
We record these conclusions in the following result.

\begin{theorem}\label{thm:simpleTest}
Assume \eqref{eq:DomainOfAttraction} holds. In the hypothesis testing problem in \eqref{eq:HT1}, the test defined by \eqref{eq:simpleTest} satisfies the following properties:
\begin{enumerate}
\item It is asymptotically valid and level $\alpha$, i.e., \eqref{eq:sizeAlpha} holds with equality.
\item It is asymptotically efficient.
\end{enumerate}
\end{theorem}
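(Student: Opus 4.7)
The overall strategy is to leverage the convergence $\mathbf{\tilde{P}}\overset{d}{\to}\mathbf{\tilde{Z}}$ established in Lemma \ref{lem:densitySecond} to reduce the problem as $K\to\infty$ to a finite-dimensional limit experiment governed by the density $f_{\mathbf{\tilde{Z}}|\xi}$, invoke the Neyman--Pearson lemma in that limit experiment, and then transfer the optimality conclusion back to the sequence of finite-$K$ experiments via \citet[Theorem~1]{muller:2011}. The test $\varphi_K^{\ast}$ depends on $\mathbf{P}$ only through $\mathbf{\tilde{P}}$, so it belongs to the location--scale equivariant class of procedures that is the domain of the efficiency claim.

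For asymptotic validity and exact level $\alpha$, under $H_0$ Lemma \ref{lem:densitySecond} gives $\mathbf{\tilde{P}}\overset{d}{\to}\mathbf{\tilde{Z}}_0$ with density $f_{\mathbf{\tilde{Z}}|\xi_0}$. The likelihood ratio $L(h)\equiv f_{\mathbf{\tilde{Z}}|\xi_1}(h)/f_{\mathbf{\tilde{Z}}|\xi_0}(h)$ is continuous in $h$ on the interior of $\Sigma$ by inspection of \eqref{eq:densitySecond}, and $\mathbf{\tilde{Z}}_0$ places zero mass on the boundary. The continuous mapping theorem then yields $L(\mathbf{\tilde{P}})\overset{d}{\to}L(\mathbf{\tilde{Z}}_0)$. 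Since $\xi_0\neq\xi_1$ implies that the two densities are smooth and not identically equal, $L(\mathbf{\tilde{Z}}_0)$ has a continuous distribution, whence its CDF is continuous at $q(\xi_0,\xi_1,\alpha)$, and the portmanteau theorem delivers
\begin{equation*}
\lim_{K\to\infty} E_{\xi_0}[\varphi_K^{\ast}(\mathbf{P})]\;=\;P_{\xi_0}\bigl(L(\mathbf{\tilde{Z}}_0)>q(\xi_0,\xi_1,\alpha)\bigr)\;=\;\alpha.
\end{equation*}

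For asymptotic efficiency, the Neyman--Pearson lemma applied to the limit experiment (a single observation of $\mathbf{\tilde{Z}}$) shows that $\varphi^{\ast}(\mathbf{\tilde{Z}})\equiv 1[L(\mathbf{\tilde{Z}})>q(\xi_0,\xi_1,\alpha)]$ is the uniquely most powerful size-$\alpha$ test of $\xi=\xi_0$ versus $\xi=\xi_1$. Since the finite-$K$ experiment, reduced by the maximal-invariant map $\mathbf{P}\mapsto\mathbf{\tilde{P}}$, converges weakly to the limit experiment under both $\xi_0$ and $\xi_1$ by Lemma \ref{lem:densitySecond}, \citet[Theorem~1]{muller:2011} implies that any asymptotically valid and equivariant sequence $\{\varphi_K(\mathbf{P})\}$ satisfies $\limsup_{K\to\infty} E_{\xi_1}[\varphi_K(\mathbf{P})]\leq E_{\xi_1}[\varphi^{\ast}(\mathbf{\tilde{Z}})]$. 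Repeating the continuous mapping argument under $\xi_1$ gives $\lim_{K\to\infty} E_{\xi_1}[\varphi_K^{\ast}(\mathbf{P})]=E_{\xi_1}[\varphi^{\ast}(\mathbf{\tilde{Z}})]$, so $\varphi_K^{\ast}$ attains the bound.

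The main obstacle will be verifying the hypotheses of \citet[Theorem~1]{muller:2011} in our setup: namely, that the weak convergence of the reduced experiments holds jointly across $\xi_0$ and $\xi_1$, and that the equivariance restriction indeed corresponds exactly to tests that are functions of $\mathbf{\tilde{P}}$. A subsidiary issue is ruling out atoms of $L(\mathbf{\tilde{Z}}_0)$ at $q(\xi_0,\xi_1,\alpha)$, which requires using the smoothness of \eqref{eq:densitySecond} to ensure that $L(\mathbf{\tilde{Z}}_0)$ has an absolutely continuous distribution, so that convergence of rejection probabilities is to exactly $\alpha$ and not merely bounded by it.
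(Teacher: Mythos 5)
Your proposal is correct and follows essentially the same route as the paper: both reduce the problem to the limit experiment via Lemma \ref{lem:densitySecond}, apply the Neyman--Pearson lemma to establish optimality of the likelihood-ratio test in that limit, and transfer validity and efficiency back to the finite-$K$ sequence by invoking \citet[Theorem~1]{muller:2011} (the paper additionally spells out the explicit dictionary between its objects and those of M\"uller's framework, and, like you, notes that the limiting test is continuous except on a set of zero $f_{\mathbf{\tilde{Z}}|\xi_0}$-measure to rule out atoms at the critical value).
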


\subsubsection{Composite alternative hypothesis}\label{sec:HypComp}

We now turn our attention to the case in which the alternative hypothesis in \eqref{eq:HT0} is composite. Following the ideas used in Section \ref{sec:HypSimple}, we consider the feasible version of the efficient test in the limiting problem. Unfortunately, the limiting problem does not lend itself to the usual tools to develop uniformly most powerful tests.\footnote{In particular, the likelihood ratio statistic is not monotonic, rendering the results in \citet[Section 3.4]{Lehman/Romano:2005} inapplicable.} For this reason, we consider tests that maximize the weighted average power criterion (WAP), following the approach of \cite{wald:1943,andrews/ploberger:1994}. To this end, let $W$ denote a user-defined weight function on $\Xi _{1}$, which the researcher chooses to reflect the importance attached to the various alternative hypotheses. The weighting function $W$ effectively transforms the composite alternative into a simple one, allowing us to focus on the weighted average power:
\begin{equation*}
\mathrm{WAP}_{K}( \varphi _{K}) ~\equiv ~\int_{\Xi _{1}}E_{\xi } [ \varphi _{K}( \mathbf{\tilde{P}}) ] dW( \xi ) .
\end{equation*}

As in Section \ref{sec:HypSimple}, we use the asymptotic behavior to guide the construction of our hypothesis test. The likelihood ratio test in the limiting problem applied to its data $\mathbf{\tilde{P}}$ is given by
\begin{equation}
\varphi _{K}^{\ast }( \mathbf{P}) 
~\equiv ~1\bigg[ {\int_{\Xi _{1}}f_{\mathbf{\tilde{Z}}|\xi }( \mathbf{\tilde{P}}) dW( \xi ) }/{f_{\mathbf{\tilde{Z}}|\xi _{0}}( \mathbf{\tilde{P}}) }>q( \xi _{0},W,\alpha ) \bigg] ,  \label{eq:compTest}
\end{equation}
where the critical value is
\begin{equation*}
q( \xi _{0},W,\alpha ) ~\equiv ~( 1-\alpha ) \text{-quantile of }{\int_{\Xi _{1}}f_{\mathbf{\tilde{Z}}|\xi _{0}}(  \mathbf{\tilde{Z}_0}) dW( \xi ) }/{f_{\mathbf{\tilde{Z}}|\xi _{0}}( \mathbf{\tilde{Z}_0}) },
\end{equation*}
and $\mathbf{\tilde{Z}}_{0}$ is distributed according to $f_{\mathbf{\tilde{Z}}|\xi _{0}}$. By standard asymptotic arguments, we can establish that \eqref{eq:compTest} is asymptotically valid and level $\alpha $, i.e.,
\begin{equation}
\lim_{K\to \infty }E_{\xi_0 }[ \varphi _{K}^{\ast }( \mathbf{P}) ] ~= ~\alpha .  \label{eq:sizeAlpha2}
\end{equation}
Moreover, \eqref{eq:compTest} is efficient in the sense of maximizing the asymptotic weighted average power criterion in the class of asymptotically valid and equivariant tests. Formally, for any test $\varphi _{K}( \mathbf{P}) $ that is asymptotically valid (i.e., \eqref{eq:sizeAlpha2} holds with $\varphi _{K}^{\ast }$ replaced by $\varphi _{K}$), then
$
\underset{K\to \infty }{\lim \sup }~\mathrm{WAP}_{K}( \varphi_{K}) ~\leq ~\underset{K\to \infty }{\lim \sup }~\mathrm{WAP}_{K}( \varphi _{K}^{\ast }) .
$
The next result records these conclusions.

\begin{theorem}\label{thm:compTest}
Assume \eqref{eq:DomainOfAttraction} holds. In the hypothesis testing problem in \eqref{eq:HT0}, the test defined by \eqref{eq:compTest} satisfies the following properties:
\begin{enumerate}
\item It is asymptotically valid and level $\alpha $, i.e., \eqref{eq:sizeAlpha2} holds.
\item It is asymptotically efficient.
\end{enumerate}
\end{theorem}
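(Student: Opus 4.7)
The plan is to parallel the structure of Theorem \ref{thm:simpleTest}: establish WAP-optimality of the analogue of $\varphi_K^*$ in the limit experiment where the observation is $\mathbf{\tilde{Z}}$ with density $f_{\mathbf{\tilde{Z}}|\xi}$ from Lemma \ref{lem:densitySecond}, and then lift this optimality to an asymptotic statement about $\varphi_K^*$ via weak convergence and \citet[Theorem 1]{muller:2011}.

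\textbf{Part (1), validity.} Define the functional $T(h) \equiv \int_{\Xi_1} f_{\mathbf{\tilde{Z}}|\xi}(h)\, dW(\xi)/f_{\mathbf{\tilde{Z}}|\xi_0}(h)$ on the interior of $\Sigma$. Inspection of \eqref{eq:densitySecond} shows that $f_{\mathbf{\tilde{Z}}|\xi}$ is jointly continuous in $(\xi, h)$ and strictly positive on the interior, so by dominated convergence (taking $W$ to be a finite measure) the map $h \mapsto T(h)$ is continuous. Lemma \ref{lem:densitySecond} delivers $\mathbf{\tilde{P}} \overset{d}{\to} \mathbf{\tilde{Z}}_0$ under $\xi_0$, and the continuous mapping theorem gives $T(\mathbf{\tilde{P}}) \overset{d}{\to} T(\mathbf{\tilde{Z}}_0)$. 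Since $\mathbf{\tilde{Z}}_0$ admits a density on $\Sigma$ and $T$ is non-constant, the law of $T(\mathbf{\tilde{Z}}_0)$ has no atoms, so $q(\xi_0, W, \alpha)$ is a continuity point; the Portmanteau theorem then yields $E_{\xi_0}[\varphi_K^*(\mathbf{P})] \to P_{\xi_0}(T(\mathbf{\tilde{Z}}_0) > q(\xi_0, W, \alpha)) = \alpha$, which is \eqref{eq:sizeAlpha2}.

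\textbf{Part (2), efficiency.} In the limit experiment, Fubini shows that any level-$\alpha$ test $\varphi$ satisfies $\int_{\Xi_1} E_\xi[\varphi(\mathbf{\tilde{Z}})]\, dW(\xi) = E_{\mathrm{mix}}[\varphi(\mathbf{\tilde{Z}})]$ with mixed density $\int_{\Xi_1} f_{\mathbf{\tilde{Z}}|\xi}(\cdot)\, dW(\xi)$; Neyman--Pearson then identifies $\varphi^*(\mathbf{\tilde{Z}}) \equiv 1[T(\mathbf{\tilde{Z}}) > q(\xi_0, W, \alpha)]$ as the WAP-optimal level-$\alpha$ test in the limit experiment. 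I then invoke \citet[Theorem 1]{muller:2011}: for every sequence of asymptotically valid tests $\varphi_K$ that depend on $\mathbf{P}$ only through the equivariant statistic $\mathbf{\tilde{P}}$, one has $\limsup_K \mathrm{WAP}_K(\varphi_K) \leq \int_{\Xi_1} E_\xi[\varphi^*(\mathbf{\tilde{Z}})]\, dW(\xi)$. Applying the Part (1) continuous-mapping argument under each $\xi$ and integrating over $W$ by bounded convergence gives $\mathrm{WAP}_K(\varphi_K^*) \to \int_{\Xi_1} E_\xi[\varphi^*(\mathbf{\tilde{Z}})]\, dW(\xi)$, so $\varphi_K^*$ attains the bound.

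\textbf{Main obstacle.} The principal technical hurdles are (a) continuity of $T$ on $\Sigma$ and absence of an atom of $T(\mathbf{\tilde{Z}}_0)$ at $q(\xi_0, W, \alpha)$, both of which reduce to smoothness and positivity of the densities in \eqref{eq:densitySecond} together with regularity of $W$ (typically a finite measure with compact support in $\Xi$), and (b) verifying the hypotheses of \citet[Theorem 1]{muller:2011}: the translation-and-scale equivariance of $\mathbf{\tilde{P}}$ (immediate from \eqref{eq:P*}) and weak convergence to a limit whose law is absolutely continuous under each $\xi \in \Xi$ (immediate from Lemma \ref{lem:densitySecond}). Once these are in hand, the proof is a direct synthesis of Neyman--Pearson optimality in the limit experiment with the weak-convergence transfer.
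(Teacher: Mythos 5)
Your proposal is correct and follows essentially the same route as the paper: the paper's proof of Theorem \ref{thm:compTest} simply reduces to the argument for Theorem \ref{thm:simpleTest}, which likewise converts the composite alternative into a simple one via the $W$-mixture, invokes the Neyman--Pearson Lemma for WAP-optimality of $\tilde\varphi^*$ in the limit experiment, notes that $\tilde\varphi^*$ is continuous off an $f_{\mathbf{\tilde{Z}}|\xi_0}$-null set, and transfers both validity and efficiency to $\varphi_K^*$ via \citet[Theorem 1]{muller:2011} together with the convergence in Lemma \ref{lem:densitySecond}. Your more explicit continuous-mapping/Portmanteau treatment of Part (1) is just an unpacking of what that theorem already delivers.
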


\subsubsection{Testing the regularity conditions in the auction literature}\label{sec:HypApplication}

The auction literature routinely assumes the regularity condition that $f_{V}$ is continuous, and that there is a finite maximum valuation $v_H<\infty $ with $f_{V}(v_H)>0$. For examples of this, see \cite{MaskinRiley1984}, \cite{guerre2000}, and \cite{GuerreLuo2022}. Within our asymptotic framework, the next result shows that these regularity conditions imply \eqref{eq:DomainOfAttraction} holds with $\xi =-1$.

\begin{lemma}\label{lem:psiOne} 
Assume that $f_{V}( v) \to f_{V}( v_H) >0 $ as $v\uparrow v_H<\infty $. Then, \eqref{eq:DomainOfAttraction} holds with $\xi =-1$.
\end{lemma}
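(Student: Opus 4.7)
The plan is to verify \eqref{eq:DomainOfAttraction} directly by exhibiting explicit normalizing constants $a_K$, $b_K$ and passing to the limit via a first-order Taylor expansion of $F_V$ at $v_H$. First I would compute the target distribution: substituting $\xi=-1$ into \eqref{eq:G} gives $(1+\xi x)^{-1/\xi}=1-x$, so
\begin{equation*}
G_{-1}(x) ~=~ \exp(x-1)\, \mathbf{1}[x<1] + \mathbf{1}[x\geq 1],
\end{equation*}
and in particular every $x\in\mathbb{R}$ is a continuity point of $G_{-1}$ (the left and right limits agree at $x=1$).

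Next I would record the local expansion of $F_V$ near $v_H$ that the hypothesis buys us. Since $f_V$ is continuous on a left-neighborhood of $v_H$ with $f_V(v)\to f_V(v_H)>0$, the fundamental theorem of calculus plus a standard $\varepsilon$-argument give
\begin{equation*}
1-F_V(v) ~=~ \int_v^{v_H} f_V(t)\,dt ~=~ f_V(v_H)\,(v_H-v)\,(1+o(1)) \quad\text{as } v\uparrow v_H.
\end{equation*}

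Finally I would take $a_K = 1/(K f_V(v_H))$ and $b_K = v_H - a_K$, so that $a_K x + b_K = v_H - a_K(1-x)$. For a fixed $x<1$, the point $v_K\equiv a_K x + b_K$ lies strictly below $v_H$ for all large $K$ and satisfies $v_K\uparrow v_H$, hence the expansion above yields
\begin{equation*}
1-F_V(v_K) ~=~ f_V(v_H)\,a_K\,(1-x)\,(1+o(1)) ~=~ \frac{1-x}{K}\,(1+o(1)),
\end{equation*}
and therefore $(F_V(v_K))^K = \bigl(1-(1-x)/K + o(1/K)\bigr)^K \to e^{-(1-x)} = G_{-1}(x)$. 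For $x\geq 1$, instead, $a_K x + b_K\geq v_H$, so $F_V(a_K x + b_K)=1$ and the $K$-th power trivially equals $G_{-1}(x)=1$. This verifies \eqref{eq:DomainOfAttraction} at every continuity point, which is all that is required.

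I do not expect any serious obstacle: the argument is routine once the normalizing constants are calibrated. The only mildly delicate step is controlling the $o(1)$ term uniformly enough in $K$ to survive the $K$-th power, which follows directly from the continuity of $f_V$ at $v_H$ (so that the $\varepsilon$ in the Taylor expansion can be made arbitrarily small for all $v$ in a small enough left-neighborhood of $v_H$, a neighborhood that eventually contains $v_K$).
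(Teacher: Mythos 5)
Your proof is correct, but it takes a genuinely different route from the paper's. The paper does not construct normalizing constants at all: it invokes the textbook characterization of the Weibull domain of attraction (de Haan and Ferreira, Theorem 1.2.1), which reduces the claim to verifying the regular-variation condition $\lim_{t\downarrow 0}\frac{1-F_{V}(v_H-tx)}{1-F_{V}(v_H-t)}=x$, and then establishes that single limit by L'H\^opital's rule together with $f_V(v)\to f_V(v_H)>0$. You instead verify \eqref{eq:DomainOfAttraction} from first principles, exhibiting the explicit constants $a_K=1/(Kf_V(v_H))$ and $b_K=v_H-a_K$ and passing the first-order expansion $1-F_V(v)=f_V(v_H)(v_H-v)(1+o(1))$ through the $K$-th power. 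Both arguments rest on the same maintained assumption that $F_V$ has a continuous density near $v_H$ (the paper needs it for L'H\^opital, you for the fundamental theorem of calculus), and your handling of the $o(1)$ term and of the trivial region $x\geq 1$ is fine. The trade-off: the paper's route is two lines and sidesteps any calibration of constants, while yours is self-contained (no appeal to an external characterization theorem) and produces the normalizing constants explicitly, which usefully corroborates the discussion in Section 3 that $a_K\propto K^{\xi}$ (here $K^{-1}$) and $b_K=v_H$ up to an asymptotically negligible shift.
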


In light of Lemma \ref{lem:psiOne}, we can test the regularity conditions used in the auction literature via the following hypothesis test:
\begin{equation}
H_{0}:\xi =-1~~\text{ v.s. }~~H_{1}:\xi \in \Xi_1,\label{eq:HT3}
\end{equation}
where $\Xi_1 = \Xi \backslash \{-1\}$. In the remainder of this section, we will argue that $\Xi =[-1,0.5]$ is a suitable choice for the parameter space for $\xi$. Since \eqref{eq:HT3} is a special case of \eqref{eq:HT0} with $\Xi _{1} =(-1,0.5]$, we can implement this test using the procedure in \eqref{eq:compTest}. For concreteness, we consider uniform weight $W( \xi ) =1[ \xi \in (-1,0.5]] $. By Theorem \ref{thm:compTest}, our proposed test is asymptotically valid, level-$\alpha$, and efficient.

We now justify choosing $\Xi =[-1,0.5]$ as the parameter space for the test. We rely on the so-called von Mises' condition to interpret the different values of $\xi$. \citet[Theorem 1.1.8]{dehaan2006book} states this condition and shows that it is a sufficient condition for \eqref{eq:DomainOfAttraction}. Under the von Mises' condition and that $f_{V}'$ is bounded, we have three possible cases:
\begin{itemize}
\item Case 1: $f_{V}( v) \to f_{V}( v_H)>0$ as $v \uparrow  v_H<\infty $ implies that $\xi =-1$.
\item Case 2: $f_{V}(v)\to f_{V}( v_H) =0$ as $v \uparrow  v_H<\infty $ implies that $\xi \in (-1,0]$.
\item Case 3: $f_{V}(v)\to f_{V}( v_H) =0$ as $v \uparrow  v_H=\infty $ implies that $\xi >0$.
\end{itemize}
A few remarks are in order. First, as expected, Case 1 aligns with the findings of Lemma \ref{lem:psiOne}. Second, we note that \citet[Theorem 2.1.2]{dehaan2006book} shows that $v_H<\infty $ if and only if $\xi \leq 0$. Cases 2 and 3 then follow from derivations in \citet[page 18]{dehaan2006book} and \citet[Theorem 2.1.2]{falk/husler/reiss:1994}. Since these three cases are exhaustive, we deduce that $\xi \geq -1$. Third, if we also impose that $V$ has second moments, \citet[page 176]{dehaan2006book} implies that $\xi \leq 1/2$. Combining these restrictions, we conclude that $\Xi= [-1,0.5]$ is a suitable parameter space for $\xi$. Finally, in first-price auctions, our test also serves as a specification test for the existence of a unique Bayesian Nash equilibrium. In particular, the first-price auction may not have a unique Bayesian Nash Equilibrium if the private density is not bounded away from zero on its compact support. We provide such a test in Appendix \ref{sec:appendix-fp}.

Figure \ref{fig:power} presents the asymptotic rejection probabilities of \eqref{eq:compTest} for the hypothesis testing problem in \eqref{eq:HT3}. The proposed test is asymptotically valid under $H_{0}:\xi =-1$ and has nontrivial power properties under $H_{1}$, with asymptotic rejection rates that increase when either $ \xi $ or $n$ increases. It is worth noting that Figure \ref{fig:power} only shows the asymptotic properties (as $K\to\infty$) of our test. We explore the finite-sample properties of our methodology via simulations in Section \ref{sec:simulation}.
\begin{figure}[htb]
\centering
\includegraphics[width=0.6\textwidth]{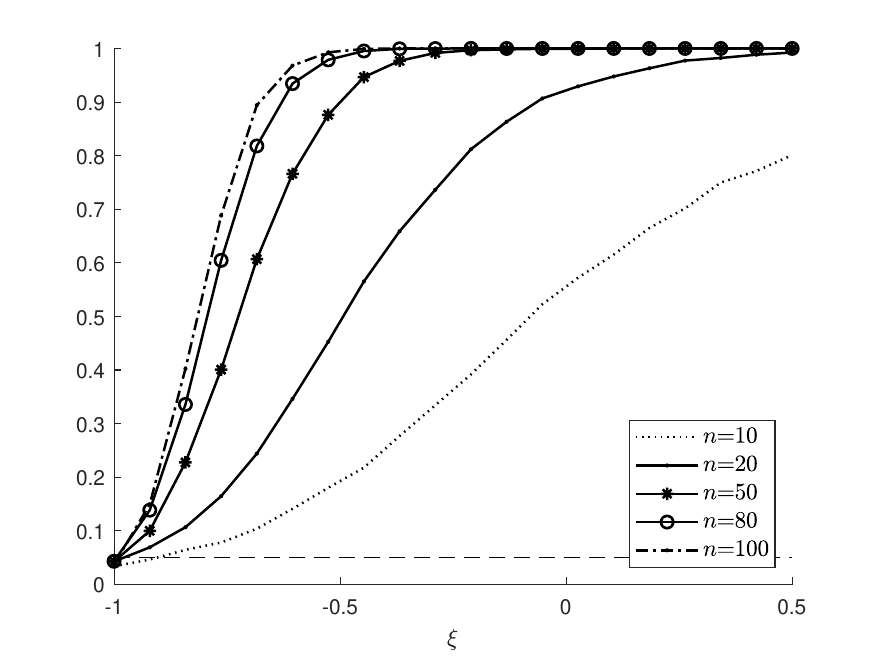}
\caption{Asymptotic rejection probabilities of the hypothesis testing procedure in \eqref{eq:compTest} with $\alpha = 5\%$ for the hypothesis testing problem in \eqref{eq:HT3}.}
\label{fig:power}
\end{figure}

\begin{runningexamplecont}
Using $\mathbf{P} = \{20.2, 25.5, 26.0, 13.0\}$, we implement a hypothesis test of \eqref{eq:HT3} in Section \ref{sec:application} and obtain a $p$-value of 0.64. Thus, the standard regularity conditions in the auction literature are not rejected in these data. Nonetheless, the methods in this literature require a diverging number of auctions, which is inapplicable when there are only four transaction prices.
\end{runningexamplecont}

\section{Extensions}\label{sec:extension}

Our analysis thus far considers second-price auctions in an IPV setup and does not allow for a reserve price. 
This section first extends our analysis to first-price auctions. It then briefly discusses how the framework can be extended beyond the IPV setting and to accommodate a binding reserve price.

\subsection{First-price auctions}\label{sec:fp}

We now consider first-price sealed-bid actions, in which the highest bidder gets the object and pays the highest bid. This type of auction is strategically equivalent to an open descending price (or Dutch) auction (see \citet[page 4]{krishna2009book}). By standard arguments (e.g., \citet[Proposition 2.2]{krishna2009book}), the symmetric equilibrium strategy for a bidder with valuation $v$ in an auction with $K_{j}$ participants is to bid $\beta_{j}(v) \equiv E[\check{V}_{(1),j}|\check{V}_{(1),j}<v]$, where $\check{V}_{(1),j}$ denotes the highest bid among the remaining $(K_{j}-1)$ participants. Note that
\begin{align}
\beta_{j}(v) ~\overset{(1)}{=}~ \frac{K_j-1}{F_V(v)^{K_j-1}}\int_{v_{L}}^{v} {u F_V(u)^{K_j-2}f_V(u)} du~\overset{(2)}{=}~
v - \frac{\int_{v_{L}}^{v} {F_V(u)^{K_j-1}} du}{F_V(v)^{K_j-1}},
\label{eq:bid_first}
\end{align}
where (1) holds by computing $E[\check{V}_{(1),j}|\check{V}_{(1),j}<v]$ using the fact that the remaining $(K_{j}-1)$ participants have a common valuation distribution with PDF $f_V$, and (2) by integration by parts. 
Since $\beta (v)$ is increasing, the auction is won by the highest-valuation bidder, with a transaction price equal to
\begin{equation}
P_{j}~=~V_{(1),j}-\frac{\int_{v_{L}}^{V_{(1),j}}{F_{V}(u)^{K_{j}-1}}du}{F_{V}(V_{(1),j})^{K_{j}-1}}.  \label{eq:price_first}
\end{equation}

Lemma \ref{lem:joint_first} uses \eqref{eq:price_first} to deduce that 
\begin{equation}
\Big\{ \frac{P_{j}-b_{K}}{a_{K}}:j=1,\dots ,n\Big\} ~\overset{d}{\to }~\{ X_{j}:j=1,\dots ,n\} , 
\label{eq:aux_first}
\end{equation}
where, for each $j=1,\dots ,n$, 
\begin{equation}
X_{j}~\equiv ~H_{\xi }( E_{1,j}) -\frac{\int_{-\infty }^{H_{\xi }( E_{1,j}) }G_{\xi }( h) dh}{G_{\xi }( H_{\xi }( E_{1,j}) ) },  \label{eq:aux_RV}
\end{equation}
with $G_{\xi }$ is as in \eqref{eq:G}, and $\{(a_{K},b_{K})\in \mathbb{R}_{++}\times \mathbb{R}:K\in \mathbb{N}\}$, $\{E_{1,j}:~j=1,\dots ,n\}$, and $H_{\xi }$ are as specified in Lemma \ref{lem:joint}.

As in Section \ref{sec:sp}, the statement in \eqref{eq:aux_first} cannot be directly used for inference as it requires the unknown normalizing constants $\{(a_{K},b_{K})\in \mathbb{R}_{++}\times \mathbb{R}: K \in \mathbb{N}\}$. Nevertheless, we can reiterate the idea of considering sorted and self-normalized prices in \eqref{eq:P*}; for $j=1,\dots,N = n-2\geq 1$, let
\begin{equation}
\tilde{P}_{j}~\equiv ~\bigg\{
\begin{array}{cc}
     \frac{P_{(j+1)}-P_{(1)}}{P_{(n)}-P_{(1)}}&\text{ if }P_{(n)} > P_{(1)},  \\
    0&\text{ if }P_{(n)} = P_{(1)} , 
\end{array}
    \label{eq:P*fp}
\end{equation}
and let $\mathbf{\tilde{P}}=\{ \tilde{P}_{j}:j=1,\dots,N\} \in \Sigma$. We now characterize the asymptotic distribution of $\mathbf{\tilde{P}}$ as $K \to \infty$.

\begin{lemma}\label{lem:densityFirst}
Assume \eqref{eq:DomainOfAttraction} holds. For any $ N\in \mathbb{N}$, and as $K\to \infty $, 
\begin{equation}
\mathbf{\tilde{P}}=\{\tilde{P}_{j}:j=1,\dots ,N\}~\overset{d}{\to }~\mathbf{\tilde{X}}=\{\tilde{X}_{j}:j=1,\dots ,N\},  \label{eq:P*joint_fp}
\end{equation}
where $\mathbf{\tilde{X}}=\{\tilde{X}_{j}:j=1,\dots ,N\}\in \Sigma $ is obtained as follows: for $j=1,\dots ,N = n-2\geq 1$, 
\begin{equation}
\tilde{X}_{j}~\equiv ~\bigg\{ 
\begin{array}{cc}
\frac{X_{(j+1)}-X_{(1)}}{X_{(n)}-X_{(1)}} & \text{ if }X_{(n)}>X_{(1)}, \\ 
0 & \text{ if }X_{(n)}=X_{(1)},
\end{array}
 \label{eq:Xtilde}
\end{equation}
with $\{ X_{j}:j=1,\dots ,n\}$ is i.i.d.\ according to \eqref{eq:aux_RV}.
\end{lemma}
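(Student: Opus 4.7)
The plan is to reduce the statement to an application of the continuous mapping theorem on top of the joint convergence already provided by \eqref{eq:aux_first}. The key observation is that the map defined in \eqref{eq:P*fp}---sort the coordinates and then affinely rescale using the minimum and the range---is invariant under any common location shift and any common positive rescaling of its arguments. Consequently, writing $\phi:\mathbb{R}^{n}\to\Sigma$ for this sort-and-self-normalize map, we have the exact identity
\begin{equation*}
\tilde{\mathbf{P}} ~=~ \phi(P_{1},\dots,P_{n}) ~=~ \phi\!\left(\frac{P_{1}-b_{K}}{a_{K}},\dots,\frac{P_{n}-b_{K}}{a_{K}}\right)
\end{equation*}
on the event $\{P_{(n)}>P_{(1)}\}$, and analogously $\tilde{\mathbf{X}}=\phi(X_{1},\dots,X_{n})$ on $\{X_{(n)}>X_{(1)}\}$.

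Next I would verify that $\phi$ is continuous on the open set $D=\{(x_{1},\dots,x_{n})\in\mathbb{R}^{n}:x_{(n)}>x_{(1)}\}$. Sorting is a Lipschitz map, and the affine self-normalization is continuous wherever the denominator $x_{(n)}-x_{(1)}$ stays positive, so $\phi$ restricted to $D$ is continuous.

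The remaining step is to check that $\Pr((X_{1},\dots,X_{n})\in D)=1$. From \eqref{eq:aux_RV}, each $X_{j}$ is a fixed, non-constant, continuous function of the standard exponential $E_{1,j}$: the mapping $e\mapsto H_{\xi}(e)-\int_{-\infty}^{H_{\xi}(e)}G_{\xi}(h)\,dh/G_{\xi}(H_{\xi}(e))$ is the limiting version of the strictly increasing bidding function in \eqref{eq:bid_first} and is itself strictly increasing in $e$. Because the $E_{1,j}$'s are i.i.d.\ with a continuous distribution, the $X_{j}$'s are i.i.d.\ with a continuous distribution as well, so all $n$ values are almost surely distinct, which gives $X_{(n)}>X_{(1)}$ with probability one.

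Combining these three ingredients, the continuous mapping theorem applied to $\phi$ together with the convergence $((P_{j}-b_{K})/a_{K})_{j=1}^{n}\overset{d}{\to}(X_{j})_{j=1}^{n}$ from \eqref{eq:aux_first} delivers $\tilde{\mathbf{P}}\overset{d}{\to}\tilde{\mathbf{X}}$, as claimed. The main obstacle is verifying that the degenerate event $\{X_{(n)}=X_{(1)}\}$ has probability zero; this is where strict monotonicity and continuity of the map $e\mapsto X_{j}$ from \eqref{eq:aux_RV} matter, and one must handle the boundary case $\xi=0$ separately because $H_{\xi}$ and $G_{\xi}$ are defined piecewise, though the strict monotonicity of $\beta_{j}$ in \eqref{eq:bid_first} carries over to the limit uniformly across the three regimes of $\xi$.
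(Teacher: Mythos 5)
Your proposal is correct and follows essentially the same route as the paper, which proves this lemma in one line by combining the joint convergence in Lemma \ref{lem:joint_first} (i.e., \eqref{eq:aux_first}), the no-ties result in Lemma \ref{lem:cts}, and the continuous mapping theorem; you simply spell out the continuity of the sort-and-self-normalize map and the fact that the limit law puts no mass on $\{X_{(n)}=X_{(1)}\}$. One minor slip: the limiting map $e\mapsto e_{\xi}(e)$ is strictly \emph{decreasing} in $e$ (as the paper notes in Section \ref{sec:appendix-computation}), not increasing, but only strict monotonicity is needed for your no-atoms argument, so nothing breaks.
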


Given that the random variable in \eqref{eq:aux_RV} is informative about the tail EV index $\xi $, Lemma \ref{lem:densityFirst} implicitly reveals that the asymptotic distribution of $\mathbf{\tilde{P}}$ can be used to conduct inference on functions of $\xi $. From this point onward, the remainder of this section is analogous to that of Section \ref{sec:sp}. The main difference is that the explicit PDF of the asymptotic distribution of $\mathbf{\tilde{P}}$ in Lemma \ref{lem:densitySecond} is replaced by the implicit distribution in Lemma \ref{lem:densityFirst}. 

Analogous to second-price auctions, we can construct confidence intervals for the winner's expected utility and seller's expected revenue and test for $\xi$. 
See Appendix \ref{sec:appendix-fp} for details.

\subsection{Beyond the IPV setup}

According to Section \ref{sec:setup}, the $K_{j}$ bidders in auction $j=1,\dots,n$ have IPV with a common CDF $F_{V}$. The auctions may differ in the number of potential bidders, but these are assumed to coincide asymptotically in the sense that  $K_{j}/K\to 1$ with $K\equiv \min \{ K_{1},\dots,K_{n}\} $. 

The IPV assumption may be restrictive in certain empirical settings. For example, consider the case where the $K_{j}$ bids in auction $j$ depend on an auction-specific feature $A_{j}$. In this context, it is plausible that auctions are independent and, conditional on $A_{j}$, the $K_{j}$ bids in auction $j$ are independent and distributed according to a common CDF $ F_{V|A_{j}}$. This is known as the conditional IPV model, and it is an extension of the standard IPV setup that allows for heterogeneity across auctions and (unconditional) dependence among private values within each auction. See \citet{li2000cipv} for a discussion of the conditional IPV model. As we now explain, one can adapt our methodology to the conditional IPV setup.

First, consider the case in which the auction-specific features $\{A_j: j=1, \dots, n\}$ are observed. In this case, one can always apply our analysis to collections of auctions that have the same (or similar) feature value, which we refer to as clusters. One can apply the analysis to each cluster with more than three auctions without any modification. This extension illustrates one of the advantages of our methodology, as it does not require the number of auctions to diverge. 
Because our approach exploits the extreme-value behavior of transaction prices as the number of bidders grows, it does not rely on the joint structure of multiple order statistics and is not intended to identify separable or nonseparable forms of unobserved heterogeneity.

Second, consider the case in which $\{ A_{j}:j=1,\ldots ,n\} $ are unobserved. 
The existing literature proposes to (i) exploit the joint distribution of three bids from the same auction and (ii) require the number of auctions to diverge \citep[e.g.,][]{li/vuong:1998,hu2013identification,luo2024order}. 
Rather than requiring a large number of auctions, we can still perform inference using our asymptotic framework, provided we observe at least three bids for each action. The basic insight is to treat each auction as its own cluster (in the sense of the previous paragraph), which naturally satisfies the IPV model in Section \ref{sec:setup}. For illustration, we consider the second-price auctions, in which bidders declare their true valuations. Suppose that for auction $j=1,\ldots,n$ we observe $m\geq 3$ bids $ \{P_{(k_{1}),j},P_{(k_{2}),j},\ldots,P_{(k_{m}),j}\}$ for known indices $ (k_{1},k_{2},\ldots,k_{m})\in \mathbb{N}$, which are in increasing order without loss of generality (i.e., $k_{1}<k_{2}<\ldots <k_{m}$). These bids need not be consecutive (i.e., $k_{u+1}$ need not equal $k_{u}+1$ for $u=1,\ldots,m-1$) or include the maximum in the auction (i.e., $k_{m}$ need not equal $m$). Given that the bids belong to the same auction, the unobserved auction-specific feature $A_{j}$ is conditioned upon in these data. By repeating the arguments in Lemma \ref{lem:joint}, it follows that, as $ K\rightarrow \infty ,$
\begin{equation}
\Big( \frac{P_{(k_{1}),j}-b_{K,j} }{a_{K,j} },\cdots ,\frac{P_{(k_{m}),j}-b_{K,j} }{a_{K,j}}\Big) ~\overset{d}{\to }~\Bigg( H_{\xi_j }\Bigg( \sum_{s=1}^{k_{1}}E_{s,j}\Bigg) ,\cdots ,H_{\xi_j }\Bigg( \sum_{s=1}^{k_{m}}E_{s,j}\Bigg) \Bigg) ,
\label{eq:EVT_extension}
\end{equation}
where $\{(a_{K,j} ,b_{K,j} )\in \mathbb{R} _{++}\times \mathbb{R}:K\in \mathbb{N}\}$ are auction-specific normalizing constants, $H_{\xi_j }\left( \cdot \right) $ is defined in \eqref{eq:H_function} with auction-specific EV index $\xi_j \equiv \xi ( A_{j}) $, and $\{E_{s,j}:s=1,\dots ,k_{m}\}$ are i.i.d.\ standard exponential random variables. Given these bids, we can construct the auction-specific self-normalized statistics: for $u=1,\ldots ,M\equiv m-2\geq 1,$
\begin{equation*}
\tilde{P}_{u,j}~\equiv ~\Bigg\{
\begin{array}{cc}
\frac{P_{(k_{u+1}),j}-P_{(k_{1}),j}}{P_{(k_{m}),j}-P_{(k_{1}),j}} & \text{ if }P_{(k_{m}),j}>P_{(k_{1}),j} ,\\
0 & \text{ if }P_{(k_{m}),j}=P_{(k_{1}),j},
\end{array}
\end{equation*}
and let $\mathbf{\tilde{P}}_{j}=\{\tilde{P}_{u,j}:u=1,\dots ,M\}\in \Sigma \equiv \{h\in [0,1]^{M}:0\leq h_{1}\leq \dots \leq h_{M}\leq 1\}$. We can then derive the asymptotic distribution of $\mathbf{\tilde{P}}_{j}$ as $ K\rightarrow \infty $ from \eqref{eq:EVT_extension}. If we observe multiple bids from several independent auctions, we can combine the self-normalized statistics $ \mathbf{\tilde{P}}_{j}$ for $j=1,\dots, J$ for further analysis. Since auctions are independent, the limiting distribution of the combined self-normalized statistics is the product of their limiting distributions. We can then test hypotheses about these auctions similar to those in previous section. In addition, we could test the homogeneity of the $J$ auctions (i.e., $\xi ( A_{j}) =\xi $ for all $j=1,\dots, J$) by using a generalized likelihood ratio test.

Third, our approach relies on extreme value theory to characterize the behavior of transaction prices as the number of bidders grows. While much of the classical extreme value literature is developed under i.i.d.\ assumptions, the validity of extreme value limits does not hinge on full homogeneity across bidders. In particular, our framework naturally accommodates environments in which bidders’ valuation distributions share a common positive tail index but differ through bidder-specific location shifts that are uniformly bounded. Such shifts do not affect tail behavior and therefore leave the extreme-value approximation unchanged. As a result, the distribution of the maximum valuation and hence of transaction prices in second-price auctions continues to be governed by the same tail index.
This form of asymmetry is economically natural in auction settings with heterogeneous bidder types or observable characteristics that shift valuations by bounded amounts, while preserving common tail behavior. By contrast, our method is not designed for settings in which bidders differ in tail heaviness, or in which a small subset of bidders has systematically heavier tails that dominate the maximum. In those cases, the effective tail behavior would be determined by the most extreme bidder type, and separate modeling of bidder asymmetry would be required. We emphasize that our results exploit the robustness of extreme-value approximation to bounded location heterogeneity, rather than modeling general bidder-specific asymmetries.

Fourth, our analysis can be extended to account for uncertainty in the number of bidders $K$. In second-price auctions with IPVs and risk-neutral bidders, truthful bidding is a weakly dominant strategy and does not depend on bidders’ knowledge of the number of competitors. Consequently, uncertainty about the number of bidders does not affect equilibrium bidding behavior or the distribution of transaction prices in this format. Transaction prices are equal to the second-highest valuation, and their tail behavior is governed directly by the extreme order statistics of valuations.
In first-price auctions, by contrast, equilibrium bidding strategies depend on the number of bidders. When bidders are uncertain about the number of competitors and share a common prior, \citet[][Chapter 2]{krishna2009book} shows that the symmetric equilibrium bidding function can be expressed as a convex combination of the equilibrium bidding functions corresponding to each possible number of bidders, with weights given by bidders’ beliefs. Specifically, let $p_K$ denote the probability that each bidder assigns to the event that he is facing $K$ other bidders. The equilibrium bidding strategy for the bidder with valuation $v$ in the $j$-th auction becomes
\begin{equation*}
    \beta_j(v) ~=~ \sum_{k=0}^{K-1}\frac{p_kF_V(v)^k}{\sum_{s=0}^{K-1}p_sF_V(v)^s}\beta^{(k)}_j(v),
\end{equation*}
where $\beta^{(k)}_j(v)$ is the equilibrium bidding function as in \eqref{eq:bid_first} when the bidder knows there are $k$ additional competitors. 
In this scenario, our proposed method still applies if the weight is asymptotically degenerate on the largest component $p_{K-1}$, that is, the bidder expects that there will be $K-1$ competitors almost surely. Extending our results to first-price auctions beyond this case would require additional restrictions on the probabilities $p_k$, which we leave for future research.

Finally, under suitable conditions, our analysis can be further generalized to allow for dependence among valuations. In particular, the EV convergence in Lemma \ref{lem:joint} can be extended to settings in which the valuations $V_{i,j}$ for $i=1,\dots,K_j$ are dependent and satisfy the so-called $D$-condition (e.g., \cite{o1974limit}, \cite{leadbetter1982extremes}). This condition restricts the dependence structure, ensuring that extreme values do not occur in clusters. 
This aspect is left for future research.



\subsection{Reserve price}

Our analysis can be adapted to allow for the presence of a reserve price $r$ set by the seller. For concreteness, we assume that  $r\in (v_{L},v_{H}) $. By setting a reserve price, the seller reserves the right not to sell the object if the price determined in the auction is below this price. \citet[Section 2.5]{krishna2009book} analyses the effect of the reserve price on the equilibrium bidding behavior. In second-price auctions, bidders still have a weakly dominant strategy to bid their valuation, i.e., \eqref{eq:bidSecond} holds. In first-price auctions, the symmetric equilibrium strategy for a bidder with valuation $v$ in an auction with $K_{j}$ participants and reserve price $r$ is to bid $\beta _{j}(v)\equiv E[\max \{\check{V}_{(1),j},r\}|\check{V}_{(1),j}<v]$, where $\check{V}_{(1),j}$ denotes the highest bid among the remaining $(K_{j}-1)$ participants.
By repeating arguments used in Section \ref{sec:fp}, we can show that the resulting formula for $\beta _{j}(v)$ is as in \eqref{eq:bid_first} but with $v_{L}$ replaced by $r$. 

These results imply that reserve prices do not affect the asymptotic distribution of the transaction prices in our asymptotic framework. That is, reserve prices have no effect on Lemmas \ref{lem:densitySecond} and \ref{lem:densityFirst}, or any of the subsequent asymptotic results. Since $v_{L}$ does not appear on our asymptotic distributions, these remain unaltered when $v_{L}$ is replaced by $r$. Intuitively, the asymptotic behavior of transaction prices with $K\rightarrow \infty $ is naturally focused on the upper tail of the valuation distribution (i.e., valuations in the neighborhood of $v_{H}$), which remains unaffected by the introduction of a reservation price $r<v_{H}$.

The fact that reserve prices do not affect the asymptotic analysis is a by-product of our limiting framework in which $K \to \infty$ and $r<v_H$. If one were interested in making reserve prices a salient feature in the asymptotic analysis, one would need to consider an asymptotic framework in which the reserve prices approach $v_H$ as $K\to \infty$. (As with any other asymptotic analysis, this is not meant to be taken literally, but rather as an approximation to a finite-sample situation in which $K$ is large and $r$ is relatively close to $v_H$). 
We present a formal analysis of this aspect in Appendix \ref{sec:appendi-reserve}.

\section{Monte Carlo simulations \label{sec:simulation}}

This section investigates the finite-sample performance of our proposed inference methods via Monte Carlo simulations. We also compare their performance with other inference methods that rely on the standard asymptotic framework with a divergent number of auctions. For brevity, we concentrate on simulations for second-price auctions. The corresponding results for first-price auctions are provided in the supplementary material.

We first consider the problem of inference about the winner's expected utility from transaction prices alone. Each simulated dataset consists of $n$ independent second-price auctions, where auction $j=1,\dots,n$ has $K$ bidders with independent valuations drawn from $F_V$. As explained in Section \ref{sec:sp-supplus}, our parameter of interest is the average winner's expected utility across auctions, given by $\mu _{K}= \frac{1}{n}\sum_{j=1}^{n}E[V_{(1),j}-V_{(2),j}]$. We conduct simulations with four distributions for $F_V$: (a) the standard uniform distribution over $[0,3]$, (b) the absolute value of standard Normal, (c) the absolute value of Student-t(20), and (d) the Pareto distribution with exponent 0.25. These distributions satisfy condition \eqref{eq:DomainOfAttraction} with $\xi = -1$, $0$, $0.05$, and $0.25$, respectively.

In our simulations, we set the number of auctions to $n=10$ or $n=100$. For each one of these $n$ auctions, we set the number of bidders to either $K=10$, $K=100$, or to a random number to a uniformly distributed integer in $\{90,91,\dots,110\}$. This generates six possible designs for $(n, K)$, and includes combinations that are better approximated by our asymptotic framework with a growing number of bidders (e.g., $n=10$ and $K=100$ or $K \in\{90,91,\dots,110\}$), but also others that are better aligned to the traditional asymptotic framework with an increasing number of auctions (e.g., $n=100$ and $K=10$). Our results are based on $S=500$ independent datasets. Finally, we set the significance level to $\alpha = 5\%$ throughout this section.

We now describe the three CIs that we consider for $\mu _{K}$:
\begin{enumerate}[(i)]
    \item This is our proposed CI in Section \ref{sec:sp-supplus}. The method is implemented by constructing $U( \mathbf{P})$ in \eqref{eq:CI_muK_defn2} with $\Xi =[ -1,0.5] $ and $W$ equal to the uniform distribution over this interval. As already explained, the validity of this CI is based on asymptotics as $K\to\infty$.
    \item A CI based on observing the two highest bidders in each auction. This approach is infeasible in our data setting (we only observe transaction prices), but we take it as a benchmark for any method that relies on the traditional asymptotics with $n\to \infty$. 
    These data allow us to compute $D_{j}\equiv B_{(1),j}-B_{(2),j}$ for each auction $j=1,\dots,n$. In turn, this enables us to test $H_0:\mu_{K} = b$ using the t-statistic $\sqrt{n}(\bar{D}_{n} - b)/s_n$ with $\bar{D}_{n}=\sum_{j=1}^n D_{j}/n$ and $s_n^2 = {\sum^{n}_{j=1} ( D_{j}-\bar{D}_{n}) ^{2}/(n-1)}$. One can then construct a CI using standard asymptotic approximations as $n\to\infty$. 
    \item A bootstrap-based CI. To implement this idea, we first need to express $\mu_K$ as a function of the distribution of transaction prices (further details are provided in Section \ref{sec:appendix-MC}). By replacing this distribution with a suitable estimator, one can consistently estimate $\mu_K$. In addition, one can repeat this process using bootstrap samples to construct a CI for $\mu _{K}$. The method is implemented with 500 bootstrap samples. The validity of this method relies on standard asymptotic arguments as $n\to\infty$. Unlike our methodology, implementing this CI requires knowledge of $K$. For references using this type of CI, see \cite{BajariHortacsu2003} and \cite{hailetamer2003}.
\end{enumerate}

Table \ref{tbl:sp-CI} presents the average coverage probability and average length of the three confidence intervals across simulation designs. We summarize the main findings as follows. First, our proposed confidence interval achieves coverage close to the nominal level across all designs, even with a small number of auctions. Coverage improves monotonically with either the number of bidders or the number of auctions, reflecting the accuracy of the extreme-value approximation as the number of bidders increases. This robustness comes at the cost of wider intervals in small samples, which is unavoidable when inference must remain valid uniformly over a broad class of tail behaviors.

Second, confidence intervals based on the two highest bids perform substantially worse in small samples, exhibiting systematic undercoverage when the number of auctions is limited. This method relies on asymptotic approximations that require a large number of auctions and implicitly treats the relevant order statistics as if their finite-sample distribution were well approximated by their limiting counterpart. When the number of auctions is small, this approximation appears to break down, leading to overly optimistic inference despite shorter interval length.

Third, the bootstrap-based confidence intervals perform particularly poorly in our simulations, especially when valuations have unbounded support. In these cases, extreme realizations dominate the sampling distribution, invalidating resampling-based approximations and resulting in severe undercoverage. This finding is consistent with the theoretical requirements underlying the influential estimator by \citet{guerre2000} and its subsequent inference literature: valid bootstrap approximations rely on the bid density being bounded away from zero throughout the support and on the valuation distribution having compact support, conditions under which kernel-based inversion is well behaved and uniform convergence over the support can be established. When support is unbounded, the tails of the bid distribution may approach zero, causing the pseudo-value transformation to become numerically unstable and undermining the regularity conditions that justify the validity of the bootstrap. While the bootstrap can yield relatively short confidence intervals, this gain in precision comes at the expense of reliability and is only observed in designs with bounded support, a density bounded away from zero, (implying $\xi=-1$ by our Lemma \ref{lem:psiOne}) and sufficiently large numbers of auctions.\footnote{In particular, we find in unreported simulations that the coverage of the bootstrap-based confidence intervals varies substantially across data generating processes. For uniformly distributed valuations, coverage is close to the nominal level even for relatively small sample sizes, such as $n = 60$. By contrast, for other distributions, the bootstrap exhibits considerable undercoverage even for much larger sample sizes, such as $n = 500$.}

Taken together, these results highlight the central trade-off underlying inference in auctions with many bidders and few auctions. Methods based on classical large-$n$ asymptotics or resampling techniques can yield shorter confidence intervals but fail to control size in empirically relevant settings. By contrast, our proposed approach prioritizes uniform validity under a many-bidders asymptotic framework, delivering reliable inference in environments where alternative methods are infeasible or unreliable.

\begin{table}[ht]
\centering
\renewcommand{\arraystretch}{1.4}
\begin{tabular}{lllllllllllll}
\hline\hline
\# Bidders & \multicolumn{4}{c}{$K=10$} & \multicolumn{4}{c}{$K=100$} & \multicolumn{4}{c}{$K \sim U\{90,91,\dots,110\}$} \\ 
\# Auctions & \multicolumn{2}{c}{$n=10$} & \multicolumn{2}{c}{$n=100$} & \multicolumn{2}{c}{$n=10$} & \multicolumn{2}{c}{$n=100$} & \multicolumn{2}{c}{$n=10$} & \multicolumn{2}{c}{$n=100$} \\ \hline
Dist.\ & Cov & Lgth & Cov & Lgth & Cov & Lgth & Cov & Lgth & Cov & Lgth & Cov & Lgth \\ \hline
\multicolumn{13}{l}{~~~~~~~~~~~~~~~~~~~Method (i): Our proposed CI, asy.\ with $K\to\infty$} \\ 
$U(0,3)$               & 0.98 & 0.72 & 0.96 & 0.09 & 0.98 & 0.30 & 0.99 & 0.13 & 0.97 & 0.30 & 0.98	& 0.13 \\
{$\vert N(0,1) \vert$} & 0.95 & 1.54 & 0.97 & 0.29 & 0.94 & 1.20 & 0.98 & 0.23 & 0.92 & 1.21 & 0.97 & 0.23 \\ 
{$\vert t(20) \vert$}  & 0.93 & 1.74 & 0.97 & 0.37 & 0.94 & 1.52 & 0.98 & 0.35 & 0.93 & 1.49 & 0.99 & 0.35 \\ 
$Pa(0.25) $            & 0.96 & 1.45 & 0.99 & 0.62 & 0.95 & 2.62 & 0.98 & 1.11 & 0.95 & 2.62 & 0.98 & 1.10 \\ 
 \hline
\multicolumn{13}{l}{~~~~~~~~~~~~~~~~~~~Method (ii): CI based on two highest bids, asy.\ with $n\to\infty$} \\  
$U(0,3)$               & 0.86 & 0.28 & 0.94 & 0.10 & 0.83 & 0.03 & 0.93 & 0.01 & 0.88 & 0.03 & 0.88 & 0.01 \\
{$\vert N(0,1) \vert$} & 0.89 & 0.46 & 0.94 & 0.16 & 0.88 & 0.35 & 0.96 & 0.12 & 0.88 & 0.35 & 0.95 & 0.12 \\ 
{$\vert t(20) \vert$}  & 0.89 & 0.55 & 0.92 & 0.19 & 0.86 & 0.48 & 0.92 & 0.17 & 0.87 & 0.47 & 0.93 & 0.17 \\ 
$Pa(0.25) $            & 0.81 & 0.76 & 0.90 & 0.30 & 0.80 & 1.34 & 0.90 & 0.51 & 0.78 & 1.35 & 0.88 & 0.51 \\ 
 \hline
\multicolumn{13}{l}{~~~~~~~~~~~~~~~~~~~Method (iii): CI based on bootstrap, asy.\ with $n\to\infty$} \\  
$U(0,3)$               & 0.66 & 0.22 & 0.93 & 0.08 & 0.67 & 0.03 & 0.95 & 0.01 & 0.68 & 0.03 & 0.94 & 0.01 \\
{$\vert N(0,1) \vert$} & 0.04 & 0.22 & 0.23 & 0.11 & 0.05 & 0.15 & 0.20 & 0.08 & 0.05 & 0.15 & 0.18 & 0.08 \\ 
{$\vert t(20) \vert$}  & 0.02 & 0.24 & 0.15 & 0.13 & 0.04 & 0.19 & 0.13 & 0.11 & 0.03 & 0.19 & 0.14 & 0.11 \\ 
$Pa(0.25) $            & 0.00 & 0.17 & 0.06 & 0.15 & 0.01 & 0.31 & 0.05 & 0.25 & 0.00 & 0.30 & 0.05	& 0.25 \\ 
\hline\hline
\end{tabular}
\caption{Empirical coverage frequency (Cov) and length (Lgth) of various CIs for the winner's expected utility $\mu_K$ in second-price auctions. The results are the average of 500 simulation draws and a nominal coverage level of 95\%.}\label{tbl:sp-CI}
\end{table}

Next, we consider the problem of hypothesis testing for the tail index based on transaction prices. Motivated by Section \ref{sec:HypApplication}, we focus on $H_0:\xi = -1$ vs.\ $H_1:\xi \in (-1,0.5]$, with $W$ equal to the uniform distribution over this interval. As explained, the validity of this method is based on asymptotics as $K\to\infty$. We consider the same four distributions for $F_V$: (a) the standard uniform distribution over $[0,3]$, (b) the absolute value of standard Normal, (c) the absolute value of Student-t(20), and (d) the Pareto distribution with exponent 0.25. The first distribution satisfies $H_0:\xi = -1$, while the other three belong to $H_1$ with $\xi = 0$, $0.05$, and $0.25$, respectively.

Table \ref{tbl:sp-test} presents the empirical rejection rate of the test proposed in Section \ref{sec:HypApplication} over 500 simulations using a nominal size of $5\%$. Under $H_0$ (i.e., when valuations are $U(0,3)$), our proposed test controls size as long as the number of bidders is not too small relative to the sample size. Under $H_1$ (i.e., when valuations are not $U(0,3)$), our methodology provides reasonable power.

\begin{table}[ht]
\centering
\renewcommand{\arraystretch}{1.5}
\begin{tabular}{lcccccccc}
\hline\hline
\# Bidders & \multicolumn{2}{c}{$K=10$}   & \multicolumn{2}{c}{$K=20$}   & \multicolumn{2}{c}{$K=100$} & \multicolumn{2}{c}{ $K\sim U\{90,91,\dots,110 \}$  } \\ 
\# Auctions & $n=10$ & $n  = 100$   & $n=10$ & $n  = 100$   & $n=10$ & $n  = 100$  & $n=10$ & $n  = 100$\\ \hline
$U(0,3)$               & 0.06 & 0.35 &   0.05 & 0.14 &  0.04 & 0.05 & 0.03 & 0.04 \\ 
{$\vert N(0,1) \vert$} & 0.42 & 1.00 &   0.46 & 1.00 &  0.48 & 1.00 & 0.45 & 1.00 \\ 
{$\vert t(20) \vert$}  & 0.46 & 1.00 &   0.48 & 1.00 &  0.54 & 1.00 & 0.51 & 1.00 \\ 
$Pa(0.25) $            & 0.73 & 1.00 &   0.71 & 1.00 &  0.68 & 1.00 & 0.66 & 1.00 \\ 
\hline\hline
\end{tabular}
\caption{Empirical rejection rate of the test proposed in Section \ref{sec:HypApplication} for $H_0:\xi = -1$ in second-price auctions. The results are the average of 500 simulation draws and a nominal size of 5\%.}\label{tbl:sp-test}
\end{table}


\section{Application to Hong Kong license plate auctions}
\label{sec:application}

In this section, we apply our proposed inference methods to Hong Kong vehicle license plate auctions. We begin with some administrative details about these auctions. Since 1973, the Hong Kong government has used standard oral ascending auctions to sell license plates. As mentioned in Section \ref{sec:sp}, this auction format is weakly equivalent to a second-price auction. There are two types of license plates: personalized and traditional. Personalized plates feature a combination of letters and numbers (subject to certain rules) proposed by the individuals. In contrast, traditional plates have a random combination of letters and numbers assigned by the government. Both types of plates are sold via public auctions, and there is no resale market. The government holds monthly public auctions and posts the available license plates a few days before each auction. Anyone interested in participating must pay a fully refundable deposit. The reserve price is set at 1,000 HKD, which is small relative to the average annual salary (above 435,000 HKD in 2024). The revenue from all auctions will go to the Lotteries Fund for charitable purposes.

Personalized license plates typically feature more attractive letter/number combinations, resulting in higher transaction prices. The government introduced personalized license plates in 2006 and has maintained records of all auction results since then.\footnote{Data are publicly available at https://www.td.gov.hk/en/public\_services/vehicle\_registration\_mark.html} However, aside from the letter/number combinations and transaction prices, the government does not record other auction details, such as the number of bidders. Having said this, we interviewed recent auction attendees to gauge the typical number of participants in these auctions. Our anecdotal evidence suggests that the number of participants in a typical auction is around 100. The government also holds a Lunar New Year special auction every February (or March), during which some very popular plates are sold. In these special auctions, the number of participants can be significantly higher, around 300.

We now examine three cases in detail: (i) luxurious license plates with a single letter, (ii) luxurious license plates with two letters and two numbers, and (iii) ordinary license plates with two letters and four numbers. 
The first two cases contain only personalized plates, while the third contains both personalized and traditional plates.

Before discussing the empirical results, it is useful to clarify how to interpret the widths of the reported confidence intervals. Our confidence intervals are shown to be asymptotically optimal in the sense that they minimize length subject to uniform size control over the admissible class of valuation distributions. Consequently, the interval lengths reported below are driven by the limited information in the data rather than by inefficiencies of the inference method. In particular, when inference relies solely on transaction prices from a small number of auctions and does not impose strong distributional assumptions, any uniformly valid method must exhibit comparable uncertainty. Narrower confidence intervals can be obtained only by imposing additional structure, such as restrictions on tail behavior, which we explore in our analysis.

\subsection{Luxurious license plates with a single letter}

Our first analysis investigates super-luxury license plates that contain only a single letter, as in our running example. Only four such plates have ever been auctioned, and the transaction prices were exorbitant. For example, the plate with a single letter ``\texttt{D}'' was sold on February 25, 2024, for 20.2 million HKD. The total revenue from all auctions that day was 24.5 million HKD, of which 82\% was due to this single luxurious plate. 

Table \ref{tab:case1} presents the letters, auction dates, and transaction prices for all four auctions of single-letter plates. The government's expected revenue and the winner's expected utility are important features of these auctions. To our knowledge, no existing method can perform inference on these quantities using transaction prices from four auctions. Assuming that the value distribution remains unchanged and the number of participants is approximately the same across these four auctions, we construct the 95\% CIs for the expected revenue and the winner's expected utility. Our analysis considers several restrictions on the tail index $\xi$. Our benchmark specification consists of distributions with finite second moments (i.e., $\xi \in [-1,0.5]$), but we complement these with the analogous results for distributions with bounded support or exponentially decaying tails (i.e., $\xi \in [-1,0]$), and bounded-support distributions with positive density at the end of the support (i.e., $\xi=-1$).
The transaction prices are adjusted for inflation\footnote{Annual inflation in Hong Kong was 2.88\%, 0.25\%, 1.57\%, 1.88\%, and 2.10\% in 2019 to 2023, respectively. Data source: https://www.statista.com/statistics/1365695/inflation-rate-in-hong-kong/}, and the CIs are measured in million HKD in 2024. These values are substantial, as shown in Table \ref{tab:case1}.

As explained in the previous paragraph, Table \ref{tab:case1} reports confidence intervals under several specifications about the tail index $\xi$. Allowing a wider range of $\xi$ guarantees uniform validity across a broad class of valuation distributions but necessarily leads to longer confidence intervals, reflecting greater uncertainty about tail behavior. In contrast, imposing tighter restrictions on $\xi$ reduces this uncertainty and results in substantially shorter intervals. The empirical results show that this gain in precision can be sizable, especially in small samples, while maintaining validity under the restricted model.

\begin{table}[ht]
\centering
\renewcommand{\arraystretch}{1.5}
\begin{tabular}{ccccc}
\hline\hline
Plate &  & Auction Date &  & Price (million HKD) \\ \hline
\texttt{D} &  & Feb. 25 2024 &  & $20.2$ \\ 
\texttt{R} &  & Feb. 12 2023 &  & $25.5$ \\ 
\texttt{W} &  & Mar. 07 2021 &  & $26.0$ \\ 
\texttt{V} &  & Feb. 24 2019 &  & $13.0$ \\ \hline
\multicolumn{5}{c}{95\% CI for winner's utility $\mu _{K}$} \\ \hline
\multicolumn{4}{l}{Assuming $\xi\in[-1,0.5]$} & $[2.19$,~ $58.89]$ \\ 
\multicolumn{4}{l}{Assuming $\xi\in[-1,0]$}   & $[2.36$,~ $22.09]$ \\ 
\multicolumn{4}{l}{Assuming $\xi=-1$}   & $[1.53$,~ $13.87]$ \\    \hline            
\multicolumn{5}{c}{95\% CI for seller's revenue $\pi _{K}$} \\ \hline
\multicolumn{4}{l}{Assuming $\xi\in[-1,0.5]$}   & $[8.70$,~ $37.65]$ \\ 
\multicolumn{4}{l}{Assuming $\xi\in[-1,0]$}     & $[8.52$,~ $31.62]$ \\ 
\multicolumn{4}{l}{Assuming $\xi=-1$}           & $[8.77$,~ $28.99]$ \\  \hline
\multicolumn{3}{l}{$p$-value for $H_{0}:\xi =-1$} &  & $0.64$ \\ 
\hline\hline
\end{tabular}
\caption{95\% CIs for the winner's expected utility and the government's expected revenue (in million HKD, adjusted by inflation) and $p$-values for the test in \eqref{eq:compTest} for $H_0:\xi = -1$ in Hong Kong car license plate auctions, using four license plates with a single letter.}
\label{tab:case1}
\end{table}

\subsection{Luxurious plates with two letters and two numbers}

Our second analysis focuses on a single auction held on February 25, 2024, the most recent Lunar New Year special auction. Based on interviews with participants, the number of bidders in this auction was very large. On that day, only five personalized license plates featuring two letters and two numbers were sold. Table \ref{tab:case21} shows that their transaction prices ranged from 23,000 HKD to 185,000 HKD. Given the similarity in the combinations of letters and numbers, it is reasonable to assume that their underlying valuation distributions are the same, which is our primary assumption. Using our proposed method, we construct the 95\% CI for the government's expected revenue and the winner's expected utility. Additionally, our five observations allow us to reject the hypothesis that $\xi = -1$, suggesting that this dataset does not satisfy the standard regularity conditions in the auction literature.

\begin{table}
\renewcommand{\arraystretch}{1.5}
\centering
\begin{tabular}{ccccc}
\hline\hline
Plate &  & Auction Date &  & Price (1,000 HKD) \\ \hline
\texttt{BE22} &  & Feb. 25 2024 &  & $40.0$ \\ 
\texttt{HE68} &  & Feb. 25 2024 &  & $23.0$ \\ 
\texttt{WE88} &  & Feb. 25 2024 &  & $185.0$ \\ 
\texttt{LY38} &  & Feb. 25 2024 &  & $40.0$ \\ 
\texttt{ME33} &  & Feb. 25 2024 &  & $105.0$ \\ \hline
\multicolumn{5}{c}{95\% CI for winner's utility $\mu _{K}$} \\ \hline
\multicolumn{4}{l}{Assuming $\xi\in[-1,0.5]$} & $[31.58$,~ $523.98]$ \\ 
\multicolumn{4}{l}{Assuming $\xi\in[-1,0]$}   & $[29.78$,~ $226.91]$ \\ 
\multicolumn{4}{l}{Assuming $\xi=-1$}         & $[31.41$,~ $183.59]$ \\    \hline            
\multicolumn{5}{c}{95\% CI for seller's revenue $\pi _{K}$} \\ \hline
\multicolumn{4}{l}{Assuming $\xi\in[-1,0.5]$}   & $[2.84$,~ $211.80]$ \\ 
\multicolumn{4}{l}{Assuming $\xi\in[-1,0]$}     & $[1.00$,~ $174.80]$ \\ 
\multicolumn{4}{l}{Assuming $\xi=-1$}           & $[1.00$,~ $170.70]$ \\  \hline
\multicolumn{3}{l}{$p$-value for $H_{0}:\xi =-1$} &  & $0.03$ \\ 
\hline\hline
\end{tabular}
\caption{95\% CIs for the winner's expected utility and the government's expected revenue (in 1,000 HKD) and $p$-values for the test in \eqref{eq:compTest} for $H_0:\xi = -1$ in Hong Kong car license plate auctions, using five license plates with two letters and two numbers.}\label{tab:case21}
\end{table}

We notice that three of these five plates share the same two numbers, i.e., \texttt{BE22}, \texttt{WE88}, and \texttt{ME33}. We now repeat the analysis with these three plates. The left panel of Table \ref{tab:case22} again presents the 95\% CIs for the government's expected revenue and the winner's expected utility. Compared with the previous table, using only three observations results in much wider confidence intervals, as expected. The lower bound of the expected revenue reaches the reserve price of 1,000 HKD. Note that having three auctions is the minimum amount we require. In the right panel of Table \ref{tab:case22}, we collect three plates with letters \texttt{UU} and numbers 28, 38, and 68, respectively. We adjusted the inflation and reported the CIs in 1,000 HKD in 2021. The results are comparable to those in the left panel. 

\begin{table}
\renewcommand{\arraystretch}{1.5}
\centering
\begin{adjustbox}{max width=\textwidth}
\begin{tabular}{ccccccccc}
\hline\hline
Plate &  & Auction Date & Price (1,000 HKD) &  & Plate &  & Auction Date & 
Price (1,000 HKD) \\ \hline
\texttt{BE22} &  & Feb. 25 2024 & $40.0$  &  & \texttt{UU28} &  & Mar. 07 2021 & $170.0$ \\ 
\texttt{WE88} &  & Feb. 25 2024 & $185.0$ &  & \texttt{UU38} &  & Feb. 24 2019 & $59.0$ \\ 
\texttt{ME33} &  & Feb. 25 2024 & $105.0$ &  & \texttt{UU68} &  & Feb. 24 2019 & $95.0$ \\ \hline
\multicolumn{4}{c}{95\% CI for winner's utility $\mu _{K}$} & & \multicolumn{4}{c}{95\% CI for winner's utility $\mu _{K}$}\\
\multicolumn{3}{l}{Assuming $\xi\in[-1,0.5]$} & $[31.66$,~ $1122.33]$ & & \multicolumn{3}{l}{Assuming $\xi\in[-1,0.5]$} & $[21.61$,~ $848.34]$\\ 
\multicolumn{3}{l}{Assuming $\xi\in[-1,0]$}   & $[27.39$,~ $437.86]$ & & \multicolumn{3}{l}{Assuming $\xi\in[-1,0]$} & $[19.87$,~ $339.34]$\\ 
\multicolumn{3}{l}{Assuming $\xi=-1$}   & $[17.38$,~ $304.68]$ & &  \multicolumn{3}{l}{Assuming $\xi=-1$} & $[13.97$,~ $243.76]$\\    \hline  
\multicolumn{4}{c}{95\% CI for seller's revenue $\pi_{K}$} & & \multicolumn{4}{c}{95\% CI for seller's revenue $\pi_{K}$}\\
\multicolumn{3}{l}{Assuming $\xi\in[-1,0.5]$} & $[1.00$,~ $402.79]$ & & \multicolumn{3}{l}{Assuming $\xi\in[-1,0.5]$} & $[1.00$,~ $331.48]$\\ 
\multicolumn{3}{l}{Assuming $\xi\in[-1,0]$}   & $[1.00$,~ $294.71]$ & & \multicolumn{3}{l}{Assuming $\xi\in[-1,0]$} & $[1.00$,~ $253.05]$\\ 
\multicolumn{3}{l}{Assuming $\xi=-1$}   & $[1.00$,~ $267.20]$ & &  \multicolumn{3}{l}{Assuming $\xi=-1$} & $[1.00$,~ $235.34]$\\    \hline  
\multicolumn{3}{l}{$p$-value for $H_{0}:\xi =-1$} & $0.33$ &  & 
\multicolumn{3}{l}{$p$-value for $H_{0}:\xi =-1$} & $0.24$ \\ 
\hline\hline
\end{tabular}
\end{adjustbox}
\caption{95\% CIs for the winner's expected utility and the government's expected revenue (in 1,000 HKD) and $p$-values for the test in \eqref{eq:compTest} for $H_0:\xi = -1$ in Hong Kong car license plate auctions, using three license plates with two letters and the same two numbers (left) and the same two letters \texttt{UU} and two numbers (right).}\label{tab:case22}
\end{table}

\subsection{Ordinary plates with two letters and four numbers}

Our third analysis studies ordinary plates. Due to its large volume, the government only releases the results of the three most recent auctions on its website. We use the dataset obtained from the Hong Kong Transport Department by \cite{ng/chong/du:2010}. The data include a detailed description of 292 license plate auctions conducted between 1997 and 2008, resulting in the sale of 40,000 license plates. On average, each auction sells more than a hundred plates in sequence. In our asymptotic framework, each license plate is treated as an individual instance of a second-price auction with a large number of bidders. Although the dataset includes numerous individual license plate auction instances, significant heterogeneity exists among them. In fact, the main finding in \cite{ng/chong/du:2010} is that certain plates are more valuable due to superstition. To account for this observed heterogeneity, we focus on ordinary license plates that meet specific criteria: the letters are not \texttt{HK}, the letters are not the same (e.g., not \texttt{AA}, \texttt{BB}, \texttt{CC}, etc.), the numbers on the plate are not in order (e.g., not 1369), or in reverse order (e.g., not 9631), the number part of the plate has 4 digits, none of these digits is an 8 or a 4 (considered fortunate or unfortunate), and the transaction price exceeds the reserve price. After imposing these restrictions, we are left with 318 license plates sold between 1997 and 2008. These data are further divided by year, allowing valuation distributions to vary over time. This results in 12 separate datasets, one for each year from 1997 to 2008, with an average of 26.5 auctions per year. We assume that the selected auctions within each year are homogeneous, with a large and relatively constant number of bidders. Under these conditions, we can implement our methods for each one of these years.

Table \ref{tab:case3} displays the confidence intervals for the expected utility of the winner and the $p$-values of the test \eqref{eq:compTest} with the null hypothesis $H_{0}:\xi=-1$ using data from each year. For brevity, the confidence intervals are restricted to the benchmark case with $\xi \in [-1,0.5]$. Our analysis reveals several interesting empirical observations. First, the expected utility of the winner is economically substantial, with the midpoints of the 95\% confidence intervals ranging from approximately 1,500 to 7,000 HKDs (equivalent to 192 to 895 USD at the current exchange rate). Second, prior to 2006, the average confidence interval had a midpoint of approximately 3,500 HKD and a width of approximately 4,500 HKD. After 2006, these figures decreased to 1,600 HKD and 2,300 HKD, respectively. We speculate that these differences could be attributed to the introduction of special license plates in these auctions in 2006. Third, compared to the winner's utility, the CI for the government's expected revenue is much narrower. There also seems to be a decrease in the midpoint after 2006, which could also be explained by the introduction of special license plates. Finally, the hypothesis of $\xi =-1$ is strongly rejected for all years. This suggests that the standard regularity conditions imposed by traditional inference methods in auction literature do not hold in this dataset.

\begin{table}[ht]
\centering
\renewcommand{\arraystretch}{1.5}
 \begin{adjustbox}{max width=\textwidth}
\begin{tabular}{cccccccccccccccc}
\hline\hline
year & $n$ & 95\% CI & 95\% CI & $p$-value for &  & year & $n$ & 95\% CI & 
95\% CI & $p$-value for \\ 
&  & for $\mu _{K}$ & for $\pi _{K}$ & $H_{0}$:$\xi =-1$ &  &  &  & for $\mu_{K}$ & for $\pi _{K}$ & $H_{0}$:$\xi =-1$ \\ \hline
1997 & 26 & $[2.23,9.23]$ & $[3.20,5.61]$ & $0.00$ &  & 2003 & 46 & $[0.82,2.98]$ & $[2.25,3.00]$ & $0.00$ \\ 
1998 & 27 & $[1.71,7.29]$ & $[2.97,4.89]$ & $0.00$ &  & 2004 & 12 & $[0.69,5.59]$ & $[2.44,4.00]$ & $0.00$ \\ 
1999 & 32 & $[1.22,5.72]$ & $[3.58,4.63]$ & $0.00$ &  & 2005 & 22 & $[0.61,3.83]$ & $[2.18,3.23]$ & $0.00$ \\ 
2000 & 29 & $[1.58,6.63]$ & $[2.92,4.62]$ & $0.00$ &  & 2006 & 17 & $[0.12,1.69]$ & $[2.03,2.52]$ & $0.00$ \\ 
2001 & 34 & $[1.23,4.71]$ & $[2.74,3.82]$ & $0.00$ &  & 2007 & 31 & $[0.69,3.07]$ & $[2.40,3.10]$ & $0.00$ \\ 
2002 & 18 & $[1.02,6.30]$ & $[2.62,4.23]$ & $0.00$ &  & 2008 & 24 & $[0.58,3.54]$ & $[2.16,3.16]$ & $0.00$ \\ \hline\hline
\end{tabular}
\end{adjustbox}
\caption{95\% CIs for the winner's expected utility $\mu_K$ and the government's expected revenue $\pi_K$ (in 1,000 HKD and assuming $\xi \in [-1,0.5]$) and $p$-values for the test in \eqref{eq:compTest} for $H_0:\xi = -1$ in Hong Kong car license plate auctions, using ordinary plates with two letters and four different numbers.}\label{tab:case3}
\end{table}


\section{Conclusions}\label{sec:conclusion}

This paper considers statistical inference problems in auctions with many bidders with symmetric, independent private values. In this context, we present a novel asymptotic framework in which the number of bidders grows while the number of auctions remains small and fixed. This approach differs from the more conventional approach, which uses a large number of auctions while keeping the number of bidders fixed. We argue that our results provide an accurate approximation in auction settings where the number of bidders is large relative to the number of auctions. This framework is especially well-suited for applications with substantial heterogeneity across auctions, where only a few are truly homogeneous.

Within our novel asymptotic framework, we introduce new inference methods for the expected utility of the auction winner, the expected revenue for the seller, and the tail behavior of the valuation distribution. We show that the latter can serve as a means of testing the validity of the regularity conditions typically imposed in the auction literature (e.g., bounded valuation support and nonzero density at the upper end of the support). Our data requirements are minimal; our tests require only observing transaction prices from a fixed, finite set of auctions. That is, we do not require observing multiple bids from these auctions or the number of bidders.

Our methodology relies on the fact that, as the number of bidders grows, the transaction prices reveal the tail properties of the valuation distribution. This information is sufficient to provide asymptotically valid inference for the above-mentioned objects of economic interest. Within our asymptotic framework, our inference methods are shown to control size and have desirable power properties. We demonstrate through Monte Carlo simulations that our methodology provides an accurate approximation in finite samples. Finally, we apply our methods to Hong Kong license plate auction data.

There are several related avenues for future research. In particular, an important related question concerns asymptotic environments in which both the number of bidders and the number of auctions grow large. In such settings, inference may rely on two distinct approximations: an extreme value approximation governing transaction prices as $K \to \infty$, and a sampling approximation as $n \to \infty$. In principle, one could identify the valuation distribution in such a regime if the approximation error induced by the extreme value limit as $K \to \infty$ vanishes sufficiently quickly relative to the sampling uncertainty as $n \to \infty$. That said, the development of such results is fundamentally different from the analysis carried out in this paper, both conceptually and in spirit. Accordingly, we leave the study of such joint asymptotic regimes for future work.

Conversely, we conjecture that our results may continue to apply in joint asymptotic regimes in which $K \to \infty$ sufficiently faster than $n \to \infty$, so that the approximation induced by the extreme value limit remains dominant relative to the sampling uncertainty arising from $n \to \infty$.\footnote{The convergence rate of the EVT approximation varies substantially across underlying distributions. For example, the convergence rate for the exponential distribution is of order $n^{-1}$, whereas for the normal distribution it is $(\log n)^{-1}$. See \citet[][Chapter 5.3]{dehaan2006book} for details.} Even in settings with a large number of auctions, our framework remains attractive when the number of bidders is unknown, valuations may have unbounded support, or only transaction prices are observed. In these cases, many existing methods are either infeasible or rely on strong parametric assumptions, whereas our approach continues to deliver valid inference without requiring knowledge of the number of bidders or a full structural specification.

\newpage
\appendix
\section{Appendix}\label{sec:appendix}
The appendix is organized as follows. 
Section \ref{sec:appendi-reserve} extends our previous analysis for the optimal reserve price. 
Section \ref{sec:appendix-additional} reviews commonly used parametric distributions in the existing auction literature and presents simulation studies using these distributions. 
Section \ref{sec:appendix-fp} provides additional analysis of the first-price auctions. Section \ref{sec:appendix-proof} collects all of the proofs omitted from the main text. Section \ref{sec:appendix-aux} gives intermediate results. Section \ref{sec:appendix-computation} provides computational details omitted from the main text. Section \ref{sec:simulation-fp} presents additional Monte Carlo simulations, focusing on first-price auctions. Finally, Section \ref{sec:appendix-MC} provides auxiliary derivations related to the Monte Carlo simulations.
\subsection{Extension to optimal reserve price}\label{sec:appendi-reserve}
When a seller sets a sufficiently high reserve price before the auction begins, it effectively screens out bidders whose valuations fall below this threshold. The imposition of a binding reserve price does not alter the Bayesian Nash equilibrium strategy, thereby leaving the bidding function unchanged. In second-price auctions, bidders continue to bid their valuations, whereas in first-price auctions, the bidding function remains consistent with the prior formulation \eqref{eq:bid_first}. However, the introduction of a binding reservation price poses a novel challenge for existing methodologies that rely on observing bidder counts. Specifically, these methods may necessitate information on both the potential and actual numbers of bidders. For instance, \citet{LiZheng2009} defines the number of potential bidders as the number of plan holders, whereas \citet{RobertsSweeting2013} considers it in the context of related auctions. The number of actual bidders is typically inferred from the number of submitted bids.

In our many-bidder framework, let $K_{j}$ represent the number of \textit{potential} bidders in the $j$-th auction. Given that the bidding function remains unaffected, the EV convergence as per Lemma \ref{lem:joint} still holds as $K_{j}\rightarrow \infty $. Consequently, as long as the winning bid surpasses the reserve price, our preceding analysis relying solely on the transaction price remains applicable. This insight proves invaluable in empirical settings where the observed number of bidders may be limited despite a potentially large pool of prospective participants, such as in
online auctions where the number of potential buyers is effectively infinite \citep[e.g.,][p.128]{gentry2018review}.

Another pertinent example arises in the domain of art painting auctions. As discussed by \citet{BeggsGladdy2009}, art is typically auctioned in an English or ascending format, with the transaction price equal to the valuation of the second-highest bidder. Given the diverse characteristics of paintings, including dimensions, titles, sizes, and artist provenance, the recurrence of identical auctions for the same artwork is limited. However, the potential pool of bidders remains substantial and unobservable, prompting auction houses to strategically set aggressive reserve prices to
selectively screen participants.

Furthermore, the imposition of a binding reserve price complicates the identification of the entire valuation distribution. Denoting $v_{0}$ as the reserve price, \citet[][Section 4]{guerre2000} establishes the identification of $F_{V}( v) $ for $v\geq v_{0}$. Nevertheless, our focus on the tail characteristics of the distribution remains unaffected provided that $v_{0}$ lies strictly below the upper endpoint $v_H$.

Actually, our new framework leads to an alternative optimal reserve price that maximizes the seller's expected revenue. In the classic setup where the seller's own value of the good, denoted by $v_{0K}$, is constant, \citet[][Proposition 3]{rileysamuelson1981} show that the optimal reserve price is also constant regardless of the number of bidders. As the number of bidders increases, the transaction price will almost surely exceed the reserve price \citep[cf.][]{wilson1977}. As a consequence, the effect of the reserve price on the seller's revenue becomes asymptotically negligible.

In principle, one could achieve a non-negligible asymptotic reserve price by considering an alternative asymptotic framework in which the seller's value of the good is of the same order of magnitude as the transaction price. To develop this result, consider the seller's expected revenue as a function of the reserve price:
\begin{equation*}
\pi _{K}( \gamma _{K}) ~=~E\Big[ v_{0K} I\Big[V_{(1),j} \leq \gamma_K \Big]+ \gamma_{K}I\Big[ V_{(2),j }\leq \gamma _{K}\leq V_{(1),j }\Big] +V_{( 2),j }I\Big[ \gamma _{K}\leq V_{(2),j}\Big]  \Big].
\end{equation*}
The right-hand side has three terms. The first term reflects the case in which the highest valuation (and bid) does not exceed the reservation price, so the good remains unsold, and the seller obtains his own valuation. The second term corresponds to the case in which the reservation price lies between the highest and second-highest valuations (and bids), so the good is sold at the reservation price. Finally, the third term reflects the case in which the second-highest valuation (and bid) exceeds the reservation price, so the good is sold at the second-highest bid.

To analyze the problem as $K\to \infty$, it is convenient to normalize by $b_K$ and scale by $a_K$, yielding
\begin{align*}
& \tfrac{\pi _{K}( \gamma _{K})  -b_{K}}{a_{K}}~=~\left\{
\begin{array}{c}
 \tfrac{v_{0K }-b_{K} }{a_{K}}P\Big[ \tfrac{V_{(1),j }-b_{K}}{a_{K}} \leq \tfrac{\gamma _{K}-b_{K}}{a_{K}} \Big]+ \tfrac{\gamma _{K}-b_{K} }{a_{K}}P\Big[ \tfrac{V_{( 2),j }-b_{K}}{a_{K}}< \tfrac{\gamma _{K}-b_{K}}{a_{K}}\leq \tfrac{V_{( 1),j }-b_{K}}{a_{K}}\Big] \\
 + E\Big[\tfrac{V_{( 2),j }-b_{K} }{a_{K}}I\Big[ \tfrac{\gamma _{K}-b_{K}}{a_{K}}\leq \tfrac{V_{(2),j }-b_{K}}{a_{K}}\Big]\Big] 
\end{array}\right\}.
\end{align*}
To make the reservation prices salient in our asymptotic framework, we take $(v_{0K }-b_{K})/a_{K} \to v_0$ and $( \gamma_{K}-b_{K}) /a_{K}\to \gamma$ as $K\to \infty$. Under these conditions and \eqref{eq:DomainOfAttraction}, the arguments used in Lemma \ref{lem:joint} imply
\begin{align*}
\tfrac{\pi _{K}( \gamma _{K})  -b_{K}}{a_{K}}~\to ~\left\{
 \begin{array}{c}
 v_0 P\left( H_{\xi}(E_{1,j}) \leq \gamma \right)  + 
\gamma P\left( H_{\xi}(E_{1,j}+E_{2,j})\leq \gamma\leq H_{\xi}(E_{1,j})\right)\\
+E\left[ H_{\xi}(E_{1,j}+E_{2,j})I\left[ \gamma\leq H_{\xi}(E_{1,j}+E_{2,j})\right] \right]
\end{array}\right\} ~=:~\pi( \gamma ) .
\end{align*}
where $(H_{\xi}(E_{1,j}),H_{\xi}(E_{1,j}+E_{2,j}))$ are distributed as in \eqref{eq:EVT}, with joint density
\begin{align}
&f_{H_{\xi}(E_{1,j}),H_{\xi}(E_{1,j}+E_{2,j})}(x_{1},x_{2})\notag\\
&=~I[x_2 \leq x_1]\left\{ 
\begin{array}{lc}
( 1+\xi x_{1}) ^{-\frac{1+\xi }{\xi }}( 1+\xi x_{2})^{-\frac{1+\xi }{\xi }}\exp\big( -( 1+\xi x_{2}) ^{-1/\xi}\big)  &   ~\text{if }\xi \neq 0 \\ 
\exp (-x_{1})\exp (-x_{2})\exp ( -\exp (-x_{2}))  &   ~\text{if }
\xi =0
\end{array}\right.   \label{eq:pdf_e1e2}
\end{align}
The following result derives the asymptotic expression of the optimal reserve price under a mild tail condition of $\xi<1$.
\begin{lemma}\label{lem:sp-rp}
Under \eqref{eq:pdf_e1e2} with $\xi<1$, $\gamma^{\ast }\equiv \arg \max_{\gamma}
\pi\left( \gamma\right) =(1+v_0)/(1-\xi )$.
\end{lemma}

Lemma \ref{lem:sp-rp} expresses the optimal reserve price $\gamma^{\ast }$ as a function of the limiting value of the scaled and normalized seller's valuation $v_0$ and the tail index $\xi$.
See Section \ref{sec:appendix-aux} for its proof. 
Given a known (or hypothesized) value of $v_0$ (e.g., $v_0=0$), we can repeat the arguments used in Section \ref{sec:sp-revenue} to construct a CI $U({\bf P})$ for $\gamma^{\ast}_K$. The main difference between the two CIs is that the limiting distribution of $( ({\pi_K - P_{(1)} })/({P_{(n)}-P_{(1)}}), \tilde{\bf P} )$ (denoted by $f_{(Y_{\pi},\tilde{\mathbf{Z}})|\xi}(y,h)$ in Section \ref{sec:sp-revenue}) is replaced by the limiting distribution of $( ({\gamma^{\ast}_K - P_{(1)} })/({P_{(n)}-P_{(1)}}), \tilde{\bf P} )$, where $\gamma^{\ast}_K \equiv \arg \max_{\gamma_K} \pi_K\left( \gamma_K\right)$.
\subsection{Commonly used parametric distributions and additional simulations}\label{sec:appendix-additional}
\subsubsection{Review of some parametric distributions}

Structural auction studies frequently adopt parametric assumptions about bidders’ valuations, costs, or information signals to obtain tractable equilibrium conditions and estimable likelihood functions. The most commonly used distributions in this literature, summarized in Table~\ref{tab:parametric_families_evt}, include the Weibull, Log-Normal, Gamma, Pareto, and Normal families. These arise in influential applications such as timber procurement auctions \citep{paarsch1992,paarsch1997,flambardperrigne2006}, comparisons of ascending and sealed-bid formats \citep{athey2011}, structural models of entry and competition \citep{lizheng2012}, online auctions \citep{BajariHortacsu2003}, and affiliated-signal environments \citep{hendrickspinkseporter2003}.

The use of the Weibull distribution in timber auctions is notable for its flexibility: with shape parameter $k>1$, it is light-tailed and belongs to the Gumbel domain of attraction, while for $k<1$ it becomes heavy-tailed and enters the Fr\'echet domain. Log-Normal specifications, common in settings with strictly positive and right-skewed valuations, also fall in the Gumbel domain, despite their slower tail decay. Gamma distributions, sometimes used as alternatives to Log-Normal forms, likewise exhibit exponential tails and belong to the Gumbel class. Pareto specifications, frequently employed when modeling environments with large underlying uncertainty, such as mineral-rights or exploration auctions \citep{paarsch1992}, feature regularly varying tails and lie in the Fr\'echet domain. Normal signals, used in affiliated-value and information-based models \citep{hendrickspinkseporter2003}, are extremely light-tailed yet fall within the Gumbel domain.

A central implication of Table~\ref{tab:parametric_families_evt} is that \emph{all} of the major parametric families used in empirical auction research fall within the domain of attraction of an extreme-value distribution -- either Gumbel or Fr\'echet. This means that the extreme-value regularity conditions invoked in theoretical and semi-parametric auction analyses are broadly satisfied in practice. The EVT perspective, therefore, provides a unified treatment of how the tail behavior of valuations affects bidding, revenue predictions, and the sensitivity of counterfactual results to distributional assumptions.

\begin{table}[htbp]
\centering
\begin{tabularx}{\textwidth}{l X c c}
\toprule
\textbf{Distribution} & \textbf{Representative Papers} & \textbf{DoA} & \textbf{Tail Index $\xi$} \\
\midrule
Weibull     & \citet{paarsch1992,paarsch1997,flambardperrigne2006} & Yes & $k\cdot1[k<1]$ \\
Pareto      & \citet{paarsch1992} & Yes & $1/\alpha$ \\
Log-Normal   & \citet{athey2011,lizheng2012,roberts2016bailouts} & Yes & $0$ \\
Gamma       & \citet{athey2011} & Yes & $0$ \\
Normal      & \citet{hendrickspinkseporter2003} & Yes & $0$ \\
\bottomrule
\end{tabularx}
\caption{Parametric families used in structural auction models and their tail properties}
\label{tab:parametric_families_evt}
\end{table}

\subsubsection{Additions simulations}
We now examine the finite-sample performance of our proposed CI and three competing ones with these previously adopted parametric distributions. 
In particular, we consider the Log-Normal distribution with $\mu=\sigma=1$, the Gamma distribution with $\theta=\lambda=1$, and the Weibull distribution with $k=0.5$ and $\lambda=1$. All three distributions satisfy the domain-of-attraction assumption with $\xi=0$ for Log-Normal and Gamma distributions, and $\xi=k$ for the Weibull distribution. 
Note that location and scale shift do not affect the coverage and length properties due to our equivariance/invariance property. 
Table \ref{tbl:sp-CI-additional} presents the results, using the same setup as in Table \ref{tbl:sp-CI}. 
Across all distributions, our proposed confidence interval, derived under the many-bidders asymptotic framework, achieves coverage probabilities close to the nominal 95\% level even with a small number of auctions. 
Coverage improves further as either the number of bidders or the number of auctions increases. 
In contrast, confidence intervals based on asymptotics with a diverging number of auctions tend to undercover when the number of auctions is small, whereas bootstrap-based intervals perform particularly poorly, exhibiting severe undercoverage regardless of the underlying distribution. 
In terms of length, our proposed intervals are generally wider than those of competing methods when sample sizes are small, reflecting the cost of uniform validity, but their lengths shrink rapidly as the number of auctions grows.

Table \ref{tbl:sp-CI-restrict} investigates the role of shape restrictions on the tail index in determining the length of confidence intervals. 
Imposing restrictions on the parameter space, such as ruling out bounded-support distributions (assuming $\xi\leq0$) or restricting attention to light-tailed classes (assuming $\xi=-1$), can substantially reduce interval length without compromising coverage when the restrictions are valid. The gains from shape restrictions are most pronounced in small-sample settings, where unrestricted inference must remain robust to a wide range of tail behaviors. 
These results highlight the practical value of incorporating economically meaningful shape information when available, while also illustrating the robustness of our proposed method in the absence of such restrictions.

\begin{table}[ht]
\centering
\renewcommand{\arraystretch}{1.4}%
\begin{tabular}{lllllllll}
\hline\hline
\# Bidders & \multicolumn{4}{c}{$K=10$} & \multicolumn{4}{c}{$K=100$} \\ 
\# Auctions & \multicolumn{2}{c}{$n=10$} & \multicolumn{2}{c}{$n=100$} & \multicolumn{2}{c}{$n=10$} & \multicolumn{2}{c}{$n=100$} \\ \hline
Dist.\ & Cov & Lgth & Cov & Lgth & Cov & Lgth & Cov & Lgth\\ \hline
\multicolumn{9}{l}{~~~~~~~~~Method (i): Our proposed CI, asy.\ with $K\to\infty$} \\ 
{$\log N(0,1)$}        & 0.96 & 6.37 & 1.00 & 2.85 & 0.96 & 13.08 & 0.99 & 5.50 \\ 
{$\Gamma(1,1)$}        & 0.97 & 3.26 & 0.99 & 1.06 & 0.94 & 3.39 & 0.99 & 0.87 \\ 
$W(1,0.5)$             & 0.97 & 13.70 & 1.00 & 5.33 & 0.94 & 30.05 & 0.99 & 12.70 \\ 
 \hline
\multicolumn{9}{l}{~~~~~~~~Method (ii): CI based on two highest bids, asy.\ with $n\to\infty$} \\  
{$\log N(0,1)$}        & 0.78 & 3.57 & 0.92 & 1.41 & 0.82 & 6.29 & 0.92 & 2.35 \\ 
{$\Gamma(1,1)$}        & 0.85 & 1.08 & 0.91 & 0.39 & 0.87 & 1.10 & 0.92 & 0.39 \\ 
$W(1,0.5)$             & 0.81 & 8.33 & 0.91 & 3.19 & 0.82 & 13.20 & 0.92 & 4.80 \\ 
 \hline
\multicolumn{9}{l}{~~~~~~~~Method (iii): CI based on bootstrap, asy.\ with $n\to\infty$} \\  
{$\log N(0,1)$}        & 0.00 & 0.78 & 0.06 & 0.74 & 0.00 & 1.48 & 0.07 & 1.25 \\ 
{$\Gamma(1,1)$}        & 0.02 & 0.40 & 0.15 & 0.26 & 0.03 & 0.42 & 0.11 & 0.25 \\ 
$W(1,0.5)$             & 0.00 & 1.63 & 0.06 & 1.73 & 0.01 & 3.56 & 0.07 & 2.76 \\ 
\hline\hline
\end{tabular}
\caption{Empirical coverage frequency (Cov) and length (Lgth) of various CIs for the winner's expected utility $\mu_K$ in second-price auctions. The results are the average of 500 simulation draws and a nominal coverage level of 95\%. $\log N$, $\Gamma$, and $W$ stand for Log-Normal, Gamma, and Weibull distributions. }\label{tbl:sp-CI-additional}
\end{table}

\begin{table}[ht]
\centering
\renewcommand{\arraystretch}{1.4}%
\begin{tabular}{lllllllll}
\hline\hline
\# Bidders & \multicolumn{4}{c}{$K=10$} & \multicolumn{4}{c}{$K=100$}  \\ 
\# Auctions & \multicolumn{2}{c}{$n=10$} & \multicolumn{2}{c}{$n=100$} & \multicolumn{2}{c}{$n=10$} & \multicolumn{2}{c}{$n=100$} \\ \hline
$U(0,3)$   & Cov & Lgth & Cov & Lgth & Cov & Lgth & Cov & Lgth  \\ \hline
Assuming $\xi\in[-1,0.5]$       & 0.98 & 0.72 & 0.96 & 0.09 & 0.98 & 0.30 & 0.99 & 0.13  \\
Assuming $\xi\in[-1,0]$         & 0.97 & 0.44 & 0.92 & 0.08 & 0.98 & 0.14 & 0.99 & 0.01  \\
Assuming $\xi=-1$               & 0.97 & 0.31 & 0.96 & 0.08 & 0.94 & 0.05 & 0.99 & 0.01  \\
\hline\hline
\end{tabular}
\caption{Empirical coverage frequency (Cov) and length (Lgth) of various CIs for the winner's expected utility $\mu_K$ in second-price auctions. The results are the average of 500 simulation draws and a nominal coverage level of 95\%. Data are generated from uniform distribution $U[0,3]$. }\label{tbl:sp-CI-restrict}
\end{table}

\subsection{Additional details about first-price auctions}\label{sec:appendix-fp}
Section \ref{sec:fp} derives the limiting behavior of the transaction price $P_j$. Analogously to the second-price auctions, we now construct inference methods about the winner's expected utility, seller's expected revenue, and the tail index itself in the subsequent subsections.

\subsubsection{Inference about the winner's expected utility}\label{sec:fp-supplus}

Our goal is to conduct inference on the average of the winner's expected utility using the transaction prices. Since bidders act according to \eqref{eq:bid_first}, the auction is won by the bidder with the highest valuation. From here, we conclude that the winner's expected utility in auction $j=1,\dots,n$ is $E[V_{(1),j}-P_{j}]$, whose average is
\begin{equation}
\mu _{K}~{=}~\frac{1}{n}\sum_{j=1}^{n}E\Big[\frac{ \int_{-\infty}^{V_{(1),j}}{F_{V}(u)^{K_{j}-1}}du}{F_{V}(V_{(1),j})^{K_{j}-1}}\Big] .
\label{eq:fp_muK_as_fn_ofV}
\end{equation}
By the Revenue Equivalence Theorem (e.g., \citet[Section 3]{krishna2009book}), $\mu _{K}$ coincides with the average of the winner's expected utility in second-price auctions.

The construction of the CI for $\mu _{K}$ closely follows the arguments and derivations presented in Section \ref{sec:sp-supplus}. In particular, we propose a CI for $\mu _{K}$ given by
\begin{equation}
U(\mathbf{P})~=~(P_{(n)}-P_{(1)})\times \bigg\{y\in \mathbb{R}:\int_{\xi \in \Xi }\kappa _{\xi} (\mathbf{\tilde{P}})f_{\tilde{\mathbf{X}}|\xi }(\mathbf{ \tilde{P}})dW(\xi )\leq \int_{\xi \in \Xi }f_{(Y_{\mu},\tilde{\mathbf{X}})|\xi }(y,\mathbf{\tilde{P}})d\Lambda (\xi )\bigg\},  \label{eq:CI_muK_defn2_fp}
\end{equation}
where $\mathbf{\tilde{P}}=\{\tilde{P}_{j}:j=1,\dots ,N\}$ with $\tilde{P}_{j}$ as in \eqref{eq:P*fp}, $\mathbf{\tilde{X}}=\{\tilde{X}_{j}:j=1,\dots ,N\}$ as in \eqref{eq:Xtilde}, $\kappa _{\xi }(h)=E[ X_{( n) }-X_{( 1) }|\mathbf{\tilde{X}}=h] $, and $Y_{\mu }=E[ \int\nolimits_{-\infty }^{X_{1,j}}{G}_{\xi }{(u)}du/G_{\xi }( X_{1,j}) ]/(X_{( n) }-X_{( 1) }) $. By repeating arguments in Section \ref{sec:sp-revenue}, \eqref{eq:CI_muK_defn2_fp} belongs to the class of asymptotically-valid CIs and minimizes the asymptotic expected length within this class. See Theorem \ref{thm:CI_K_fp} in the appendix for a statement of this result.

Unfortunately, implementing the CI in \eqref{eq:CI_muK_defn2_fp} is considerably more challenging than in the case of second-price auctions. The main reason is that there is no closed-form expression for the PDF of $\mathbf{\tilde{X}}$ available for first-price auctions. 
Without these, we cannot easily compute $\kappa _{\xi} (h)f_{\tilde{\mathbf{X}}|\xi }(h)$ and $f_{(Y_{\mu},\tilde{\mathbf{X}})|\xi }(y,h)$ for $(y,h) \in \mathbb{R} \times \Sigma$. To sidestep this issue, we use a numerical approximation to the problem based on a Taylor series expansion. See Section \ref{sec:appendix-computation} for a detailed explanation.

\subsubsection{Inference about the seller's expected revenue}

We now conduct inference on the average of the seller's expected revenue using transaction prices. Since bidders act according to \eqref{eq:bid_first} and the transaction price equals the highest bid, the average of the seller's expected revenue is given by
\begin{equation}
\pi _{K}~= 
~\frac{1}{n}\sum_{j=1}^{n}E\Big[V_{(1),j}-\frac{\int_{-\infty }^{V_{(1),j}}{F_{V}(u)^{K_{j}-1}}du}{F_{V}(V_{(1),j})^{K_{j}-1}}\Big].  \label{eq:seller2}
\end{equation}

Reiterating arguments in Section \ref{sec:sp-revenue}, we propose a CI for $\mu _{K}$ given by
\begin{align}
U(\mathbf{P})~=~& P_{(1)} + ( P_{(n)}-P_{(1)}) \notag \\ 
\times& \bigg\{ y \in \mathbb{R} : \int_{\xi \in \Xi} \kappa_{\xi}(\mathbf{\tilde{P}}) f_{\tilde{\mathbf{X}}|\xi}(\mathbf{\tilde{P}}) dW(\xi) \leq \int_{\xi \in \Xi} f_{(Y_{\pi},\tilde{\mathbf{X}})|\xi}(y,\mathbf{\tilde{P}}) d\Lambda(\xi) \bigg\} , \label{eq:CI_piK_defn2_fp}
\end{align}
where $Y_{\pi }=E[ X_{1,j} -{\int_{-\infty }^{X_{1,j} }G_{\xi }( h) dh}/{G_{\xi }( X_{1,j} ) }]/( X_{( n) }-X_{( 1) }) $ and the rest of the objects are as in Section \ref{sec:fp-supplus}.
By the same arguments in Section \ref{sec:sp-revenue}, \eqref{eq:CI_muK_defn2_fp} belongs to the class of asymptotically valid CIs and minimizes the asymptotic expected length within this class. See Theorem \ref{thm:CI_pi_K_fp} in the appendix for a statement of this result.

Implementing the CI in \eqref{eq:CI_piK_defn2_fp} suffers from the computational issues elaborated in Section \ref{sec:fp-supplus}, which we avoid via numerical approximations. Once again, see Section \ref{sec:appendix-computation} for details.

\subsubsection{Inference about the tail index}\label{sec:HypApplication-fp}

Finally, we consider the problem of inference about the tail index from transaction prices in first-price auctions. As in Section \ref{sec:sp-index}, we are interested in the following hypothesis test:
\begin{equation}
H_{0}:\xi \in \xi _{0}~~~\text{v.s.}~~~H_{1}:\xi \in \Xi _{1},
\label{eq:HT0_fp}
\end{equation}
where $\xi _{0}$ is a fixed parameter value in $\Xi $ and $\Xi _{1}=\Xi \backslash \{ \xi _{0}\}$. For brevity, we focus only on the case where the alternative hypothesis in \eqref{eq:HT0_fp} is composite. By our previous arguments, we propose using the feasible version of the generalized likelihood ratio test in the limiting problem, i.e.,
\begin{equation}
\varphi _{K}^{\ast }( \mathbf{P}) ~\equiv ~1\bigg[ {\int_{\Xi _{1}}f_{\mathbf{\tilde{X}}|\xi }( \mathbf{\tilde{P}}) dW( \xi ) }/{f_{\mathbf{\tilde{X}}|\xi _{0}}( \mathbf{\tilde{P}}) }>q( \xi _{0},W,\alpha ) \bigg] ,  \label{eq:compTest_fp}
\end{equation}
where $\mathbf{\tilde{P}}$, $\mathbf{\tilde{X}}$, and $\kappa _{\xi }(h)$ are as in Section \ref{sec:fp-supplus},
$W$ is the user-defined weight function on $\Xi _{1}$, and $q( \xi _{0},W,\alpha ) $ is the critical value of the likelihood ratio test in \eqref{eq:HT0_fp}, i.e.,
\begin{equation*}
q( \xi _{0},W,\alpha ) ~\equiv ~( 1-\alpha ) \text{-quantile of }{\int_{\Xi _{1}}f_{\mathbf{\tilde{X}}|\xi _{0}}(  \mathbf{\tilde{X}}_{0}) dW( \xi ) }/{f_{\mathbf{\tilde{X}}|\xi _{0}}( \mathbf{\tilde{X}}_{0}) }.
\end{equation*}
and $\mathbf{\tilde{X}}_{0}$ is a random vector with PDF $f_{\mathbf{\tilde{X}}|\xi _{0}}$. By our previous arguments, it follows that \eqref{eq:compTest_fp} is asymptotically valid and asymptotically level $\alpha $, and efficient in the sense of maximizing the asymptotic weighted average power criterion in the class of asymptotically valid and equivariant tests. See Theorem \ref{thm:compTest_fp} in the appendix for a statement of this result. The implementation of the hypothesis test in \eqref{eq:compTest_fp} suffers from the aforementioned computational issues, and is addressed by the approximation described in Section \ref{sec:appendix-computation}. 

To conclude, we note that the arguments presented in Section \ref{sec:HypApplication} indicate that the hypothesis test in \eqref{eq:HT0_fp} can be applied to test the standard regularity conditions in the auctions literature. This can be achieved by conducting the hypothesis test in \eqref{eq:HT0_fp} with $\xi _{0}=-1$ and $\Xi _{1} = (-1,0.5]$.


\subsection{Proofs of results in the main text}\label{sec:appendix-proof}

\begin{proof}[Proof of Lemma \ref{lem:joint}]
By independence across auctions, it suffices to prove that for any auction $j=1,\ldots n$ and as $K\to\infty$,
\begin{align}
 \big( \tfrac{V_{(1),j}-b_{K}}{a_{K}},\dots ,~\tfrac{V_{(d),j}-b_{K}}{a_{K}}\big) ~\overset{d}{\to }~  \big( H_{\xi }(E_{1,j}),\dots ,~H_{\xi }\big( \sum\nolimits_{s=1}^{d}E_{s,j}\big) \big),\label{eq:joint_1}
\end{align}
where $\{E_{s,j}:s=1,\dots ,d\}$ are i.i.d.\ standard exponential random variables. 

Let $x$ be any continuity point of $G_{\xi }$. Since $K_{j}\to \infty $ as $K\to \infty $, \eqref{eq:DomainOfAttraction} implies that there is a sequence of constants $\{(a_{K_{j}},b_{K_{j}})\in \mathbb{R} _{++}\times \mathbb{R}:K_{j}\in \mathbb{N}\}$ such that
\begin{equation}
\lim_{K \to \infty} F( a_{K_{j}}x+b_{K_{j}}) ^{K_{j}}~=~ G_{\xi }( x)~\text{ as }K\to \infty. \label{eq:joint_2}
\end{equation}
Under \eqref{eq:joint_2}, \citet[Theorem 2.1.1]{dehaan2006book}, 
implies that as $K\to\infty$,
\begin{align}
\Psi_{j} \equiv\big( \tfrac{V_{(1),j}-b_{K_{j}}}{a_{K_{j}}},\dots ,~\tfrac{V_{(d),j}-b_{K_{j}}}{a_{K_{j}}}\big) ~\overset{d }{\to }~ \big( H_{\xi }(E_{1,j}),\dots ,~H_{\xi }\big( \sum\nolimits_{s=1}^{d}E_{s,j}\big) \big). \label{eq:joint_3} 
\end{align}
Note that
\begin{align}
\big( \tfrac{V_{(1),j}-b_{K}}{a_{K}},\dots ,\tfrac{V_{(d),j}-b_{K}}{a_{K}}\big)
&~=~ 
\Psi_{j} \tfrac{ a_{K_{j}}}{a_{K}} ~+~\big( \tfrac{b_{K}-b_{K_{j}}}{a_{K}},\dots ,\tfrac{b_{K}-b_{K_{j}}}{a_{K}}\big). \label{eq:joint_4}
\end{align}
By \eqref{eq:joint_3} and \eqref{eq:joint_4}, \eqref{eq:joint_1} follows from showing that
\begin{equation}
\lim_{K \to \infty} \big( \tfrac{a_{K_{j}}}{a_{K}},~\tfrac{b_{K_{j}}-b_{K}}{a_{K}}\big) ~=~ ( 1,~0) . \label{eq:joint_5}
\end{equation}
We devote the remainder of this proof to establishing \eqref{eq:joint_5}.

Let $x$ be any continuity point of $G_{\xi }$. By \eqref{eq:joint_2} and $K_{j}/K\to 1$ as $K\to \infty$,
\begin{equation}
\lim_{K \to \infty} F( a_{K_{j}}x+b_{K_{j}}) ^{K}~=~ G_{\xi }( x) ~\text{ as }K\to \infty . \label{eq:joint_6}
\end{equation}
Under \eqref{eq:DomainOfAttraction} and \eqref{eq:joint_6}, \citet[Lemma 1]{dehaan1976} implies that for all $ y\in \mathbb{R}$,
\begin{equation}
G_{\xi }\big( \lim_{K\to \infty }\big( \tfrac{a_{K_{j}}}{a_{K}}y+ \tfrac{b_{K_{j}}-b_{K}}{a_{K}}\big) \big) ~=~G_{\xi }( y) . \label{eq:joint_7}
\end{equation}
Regardless of $\xi$, there is a continuum of values for which $G_{\xi }$ is strictly increasing and, thus, invertible. For any $y$ in this continuum, \eqref{eq:joint_7} implies that
\begin{equation}
\lim_{K \to \infty}  \tfrac{a_{K_{j}}}{a_{K}}y+\tfrac{ b_{K_{j}}-b_{K}}{a_{K}} ~=~y . \label{eq:joint_8}
\end{equation}
From this observation, \eqref{eq:joint_5} follows.
\end{proof}

\begin{proof}[Proof of Lemma \ref{lem:densitySecond}]
This result follows from Lemma \ref{lem:joint}, Lemma \ref{lem:cts}, and the continuous mapping theorem.
\end{proof}


\begin{proof}[Proof of Theorem \ref{thm:CI_K}]
\underline{Part 1}. By \eqref{eq:secondprice} and Lemma \ref{lem:joint}, as $K\to \infty$,
\begin{equation}
\big\{\big(\tfrac{V_{(1),j}-P_{j}}{a_{K}},\tfrac{P_{j}-b_{K}}{a_{K}}\big) :j=1,\ldots ,N\big\}~\overset{d}{\to } ~\{(Z_{1,j}-Z_{2,j},Z_{2,j}):j=1,\ldots ,N\} . \label{eq:CI_muK2}
\end{equation}
Let $\delta >0$ and $\bar{K} \in \mathbb{N}$ be as in Lemma \ref{lem:finite_muK2}. Then, for all $j =1,\dots,N$ and $K>\bar{K}$,
\begin{equation}
E\big[ | \tfrac{V_{(1),j}-P_{j}}{a_{K}}| ^{1+\delta }\big] ~\overset{(1)}{=}~
E\big[ |\tfrac{V_{( 1) ,j}-V_{(2) ,j}}{a_{K}}| ^{1+\delta }\big]~\leq~ 2^{\delta }\big( E\big[ | \tfrac{V_{( 1) ,j}-b_{K}}{a_{K}}| ^{1+\delta } +  | \tfrac{V_{( 2) ,j}-b_{K}}{a_{K}}|^{1+\delta }\big] \big)~\overset{(2)}{<}~\infty ,  \label{eq:CI_muK2p5}
\end{equation}
where (1) holds by \eqref{eq:bidSecond} and \eqref{eq:secondprice}, and (2) by Lemma \ref{lem:finite_muK2}. By \eqref{eq:winner1}, \eqref{eq:CI_muK2}, and \eqref{eq:CI_muK2p5}, as $K\to \infty$,
\begin{equation}
\big\{ \tfrac{\mu _{K}}{a_{K}},\big\{\big( \tfrac{P_{j}-b_{K}}{a_{K}} \big) :j=1,\ldots ,N \big\} \big\}~
\overset{d}{\to }~ \{ E [Z_{1,1} - Z_{2,1}], \{ Z_{2,j} : j=1,\ldots ,N \} \},\label{eq:CI_muK3}
\end{equation}
where we have used that $Z_{1,1}-Z_{2,1}\overset{d}{=}Z_{1,j}-Z_{2,j}$ for all $j=1,\ldots ,n$. Define
\begin{equation}
R_{K}~\equiv ~\bigg\{ 
\begin{array}{cc}
\tfrac{\mu _{K}}{P_{(n)}-P_{(1)}} & \text{if }P_{(n)}\not=P_{(1)}, \\ 
0 & \text{if }P_{(n)}=P_{(1)}.
\end{array}
  \label{eq:CI_muK3b}
\end{equation}
By \eqref{eq:CI_muK3}, \eqref{eq:CI_muK3b}, Lemma \ref{lem:cts}, and the continuous mapping theorem, we have that, as $K\to \infty $,
\begin{equation}
\{R_{K},\mathbf{\tilde{P}}\}~\overset{d}{\to }~\{Y_{\mu },\mathbf{ \tilde{Z}}\}. \label{eq:CI_muK4}
\end{equation}

Next, consider the following derivation as $K\to \infty $.
\begin{align*}
P(\mu _{K}\in U(\mathbf{P}))
& ~=~P(\{ \mu _{K}\in U(\mathbf{P})\} \cap \{ P_{(n)}\not=P_{(1)}\} )+P(\{ \mu _{K}\in U(\mathbf{P} )\} \cap \{ P_{(n)}=P_{(1)}\} ) \\
& ~\overset{(1)}{=}~P(\{ \mu _{K}\in ( P_{(n)}-P_{(1)}) \times \tilde{U}(\mathbf{\tilde{P}})\} \cap \{ P_{(n)}\not=P_{(1)}\} )+o(1) \\
& ~\overset{(2)}{=}~P(\{R_{K}\in \tilde{U}(\mathbf{\tilde{P}})\}\cap \{P_{(n)}\not=P_{(1)}\})+o(1) \\
& ~\overset{(3)}{=}~P(R_{K}\in \tilde{U}(\mathbf{\tilde{P}}))+o(1) \\
& ~=~P(\{R_{K},\mathbf{\tilde{P}}\}\in \{(y,h)\in \mathbb{R}\times \Sigma:y\in \tilde{U}(h)\})+o(1) \\
& ~\overset{(4)}{\to }~P(\{Y_{\mu },\mathbf{\tilde{Z}}\}\in \{(y,h)\in \mathbb{R}\times \Sigma:y\in \tilde{U}(h)\}) \\
& ~=~P(Y_{\mu }\in \tilde{U}(\mathbf{\tilde{Z}})),
\end{align*}
as desired, where (1) holds by \eqref{eq:CI_muK_defn} and Lemma \ref{lem:cts}, (2) holds because \eqref{eq:CI_muK3b} implies that $\{ \mu _{K}\in ( P_{(n)}-P_{(1)}) \times \tilde{U}(\mathbf{\tilde{P}})\} =\{R_{K}\in \tilde{U}(\mathbf{\tilde{P}})\}$ under $P_{(n)}\not=P_{(1)}$, (3) by Lemma \ref{lem:cts}, and (4) by \eqref{eq:CI_muK4}, Assumption (a), and the Portmanteau Theorem.

\underline{Part 2}. By similar arguments as those used in the proof of Lemma \ref{lem:densitySecond}, we have that, as $K\to \infty $,
\begin{equation}
\{{(P_{(n)}-P_{(1)})}/{a_{K}},\mathbf{\tilde{P}}\}~\overset{d}{ \to }~\{Z_{(n)}-Z_{(1)},\mathbf{\tilde{Z}}\}. \label{eq:CI_muK7}
\end{equation}
By \eqref{eq:CI_muK7}, Assumption (c), and the continuous mapping theorem, we have that, as $K\to \infty $, 
\begin{equation}
{((P_{(n)}-P_{(1)})}/{a_{K}}) \times \lg (\tilde{U}(\mathbf{\tilde{P}}))~\overset{d}{ \to }~(Z_{(n)}-Z_{(1)})\times\lg (\tilde{U}(\mathbf{\tilde{Z}})).  \label{eq:CI_muK8}
\end{equation}

Let $\delta >0$ and $\bar{K}\in \mathbb{N}$ be as in Lemma \ref{lem:finite_muK2}. Then,
\begin{align}
|P_{(n)}-P_{(1)}|^{1+\delta }~&\leq~(\sum_{l,m=1}^{n}|P_{l}-P_{m}|)^{1+\delta }\notag\\
~&\overset{(1)}{=}~(\sum_{l,m=1}^{n}|V_{(2),l}-V_{(2),m}|)^{1+\delta }\notag\\
~&{\leq }~(2n)^{1+\delta}\sum_{l=1}^{n}|V_{(2),l}-b_{K}|^{1+\delta },  \label{eq:CI_muK9}
\end{align}
where (1) holds by \eqref{eq:secondprice}. Then, for all $K>\bar{K}$, 
\begin{align}
E[ |\tfrac{(P_{(n)}-P_{(1)})}{a_{K}} \lg (\tilde{U}(\mathbf{\tilde{P}}))|^{1+\delta }] ~\overset{(1)}{\leq }~
(2n)^{1+\delta }\sum_{l=1}^{n}E\big[|\tfrac{V_{(2),l}-b_{K}}{a_{K}}|^{1+\delta } |\lg (\tilde{U}(\mathbf{\tilde{P}}))|^{1+\delta }\big] 
~\overset{(2)}{<}~\infty ,\label{eq:CI_muK10} 
\end{align}%
where (1) holds by \eqref{eq:CI_muK9}, and (2) by Assumption (b), $\lg (\tilde{U}(h))\geq 0$ for all $h \in \Sigma$, and Lemma \ref{lem:finite_muK2}.

Then, consider the following derivation as $K\to \infty $.
\begin{align*}
E[\lg (U(\mathbf{P}))/a_{K}]~&\overset{(1)}{=}~
E[(( P_{(n)}-P_{(1)}) /a_{K})\times \lg ( \tilde{U}(\mathbf{\tilde{P}}))] \\
~&\overset{(2)}{\to }~E[(Z_{(n)}-Z_{(1)})\times\lg (\tilde{U}(\mathbf{\tilde{Z}} ))]\\
~&=~E_{\xi }[\kappa _{\xi }(\mathbf{\tilde{Z}})\times\lg (\tilde{U}( \mathbf{\tilde{Z}}))],
\end{align*}
as desired, where (1) holds by \eqref{eq:CI_muK_defn}, and (2) by \eqref{eq:CI_muK8} and \eqref{eq:CI_muK10}.
\end{proof}

\begin{proof}[Proof of Theorem \ref{thm:CI_pi_K}]
This proof follows from the same arguments as in Theorem \ref{thm:CI_K}. The main difference between the proofs is that the vector $\{ (V_{(1),j}-P_{j}):j=1,\ldots ,N\} $ 
is replaced by $\{ (P_{j}-b_{K}):j=1,\ldots ,N\} $,
which then implies that $( \pi _{K}-b_{K}) $ and $Y_{\mu }$ are replaced by $\mu _{K}$ and $Y_{\pi }$, respectively. With these changes in place, the desired result follows from repeating the steps used to prove Theorem \ref{thm:CI_K}.
\end{proof}

\begin{proof}[Proof of Theorem \ref{thm:simpleTest}]
This proof follows from applying \citet[Theorem 1]{muller:2011} based on the convergence result in Lemma \ref{lem:densitySecond}. To apply these results in that paper, we specify the connection between our object and the relevant objects in that paper: the class of DGPs satisfying \eqref{eq:DomainOfAttraction} take the role of $\mathcal{M}$, $K\to \infty$ takes the role of $T\to \infty$, $j=1,\ldots ,n$ the role of $ i=1,\ldots ,n$, $\{ \xi _{0}\} $ the role of $\Theta _{0}$, $ \{ \xi _{1}\} $ the role of $\Theta _{1}$, $\mathbf{{P}}$ the role of $Y_{T}$, $\mathbf{\tilde{P}}$ the role of $X_{T}$, $\Sigma $ the role of $S$, the self-normalizing transformation $\{( P_{j}-P_{( 1) })/( P_{( n) }-P_{( 1) })\}_{j=1}^{N}:\mathbf{P} \to \Sigma$ the role of $h_{T}$, $\mathbf{\tilde{Z}}$ the role of $X$, $\tilde{\varphi}^{\ast }(\mathbf{\tilde{Z}}) \equiv {\varphi}^{\ast }(\mathbf{{Z}}) $ the role of $\varphi _{S}( Y) $, $\varphi _{K}^{\ast }( \mathbf{P}) $ the role of $\hat{\varphi}_{T}^{\ast }( \mathbf{P}) $, $\varphi _{K}( \mathbf{P}) $ the role of $\varphi _{T}(\mathbf{P}) $, and $f_{\mathbf{\tilde{Z}}|\xi _{0}}$ the role of $\mu _{P}$. Moreover, we set $\mathcal{F}_s$ equal to the set with the zero function and $\pi_0=0$, which effectively trivializes \citet[Equations (5)-(6) and (10)-(11)]{muller:2011}.

By definition, $\tilde{\varphi}^{\ast }(\mathbf{\tilde{Z}}):\Sigma \to [ 0,1]$ is a level-$\alpha $ test in the limiting problem, i.e., \eqref{eq:sizeAlpha} holds. According to the Neyman-Pearson Lemma, it maximizes power in the limiting problem. Finally, note that $\tilde{\varphi}^{\ast }$ is continuous except for a set of zero $f_{\mathbf{\tilde{Z}}|\xi _{0}}$-measure. These conditions align with the requirements in \citet[Theorem 1]{muller:2011}, and so our desired results follow immediately from its application.
\end{proof}

\begin{proof}[Proof of Theorem \ref{thm:compTest}]
This proof follows from the same arguments as in Theorem \ref{thm:simpleTest}.
\end{proof}

\begin{proof}[Proof of Lemma \ref{lem:psiOne}]
Fix $x>0$ arbitrarily. By $v_H<\infty $ and \citet[Theorem 1.2.1]{dehaan2006book}, it suffices show that 
\begin{equation*}
\lim_{t\downarrow 0}\tfrac{1-F_{V}(v_H-tx)}{1-F_{V}(v_H-t)}=x.
\end{equation*}
To this end, consider the following derivation.
\begin{equation*}
\lim_{t\downarrow 0}\tfrac{1-F_{V}(v_H-tx)}{1-F_{V}(v_H-t)}
~\overset{(1)}{=}~\lim_{t\downarrow 0}\tfrac{xf_{V}(v_H-tx)}{f_{V}(v_H-t)}
~\overset{(2)}{=}~x,
\end{equation*}
as desired, where (1) holds by L'H\^{o}pital rule and (2) by $f_{V}( v) \to f_{V}( v_H)>0 $ as $v\uparrow v_H$.
\end{proof}

\begin{proof}[Proof of Lemma \ref{lem:densityFirst}]
This result follows from Lemma \ref{lem:cts}, Lemma \ref{lem:joint_first}, and the continuous mapping theorem.
\end{proof}

\subsection{Auxiliary results}\label{sec:appendix-aux}

\begin{lemma}\label{lem:cts}
For any $n>1$ and for first-price or second-price auctions, 
\begin{equation*}
P(P_{(n)}=P_{(1)})~=~0.
\end{equation*}
\end{lemma}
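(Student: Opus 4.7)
The plan is to reduce the event $\{P_{(n)} = P_{(1)}\}$ to the event that two of the independent transaction prices coincide, and then exploit the fact that each $P_j$ has an absolutely continuous distribution. Since $n \geq 2$, the inclusion $\{P_{(n)} = P_{(1)}\} \subseteq \{P_1 = P_2\}$ holds, so it suffices to show $P(P_1 = P_2) = 0$. Because auctions are independent, $P_1 \indep P_2$, so Fubini's theorem gives $P(P_1 = P_2) = \int F_{P_2}(dp)\, P(P_1 = p)$, which vanishes provided $P_1$ has no atoms.

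First, I would handle the second-price case. Here $P_j = V_{(2),j}$ by \eqref{eq:secondprice}. Since $F_V$ is strictly increasing and admits a continuous density $f_V$, the joint density of $(V_{(1),j}, V_{(2),j})$ is $K_j(K_j-1) f_V(v_1) f_V(v_2) F_V(v_2)^{K_j-2} \mathbf{1}[v_2 \leq v_1]$, and marginalizing yields an absolutely continuous density for $V_{(2),j}$. In particular, $P_j$ has no atoms, so $P(P_1 = p) = 0$ for every $p$ and the conclusion follows.

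Next, I would treat the first-price case. By \eqref{eq:bid_first}, the equilibrium bid in auction $j$ is a deterministic function $\beta_j$ of the bidder's valuation, and a short computation shows $\beta_j'(v) = (K_j-1)(v - \beta_j(v)) f_V(v)/F_V(v) > 0$ on the interior of the support, so $\beta_j$ is strictly increasing. Thus $P_j = \beta_j(V_{(1),j})$ is a strictly monotone transformation of a random variable with an absolutely continuous distribution (the density of $V_{(1),j}$ is $K_j f_V(v) F_V(v)^{K_j-1}$), so $P_j$ is again atomless and the Fubini argument above applies.

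The only minor obstacle is the strict monotonicity of $\beta_j$ in the first-price case, but this is immediate from differentiating \eqref{eq:bid_first}; everything else is standard measure theory.
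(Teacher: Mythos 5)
Your proposal is correct, and it shares the paper's two essential ingredients (independence across auctions and continuity of the valuation distribution), but it routes the argument differently. The paper never establishes that the transaction price itself is atomless: it instead observes that $\{P_{(n)}=P_{(1)}\}$ forces, for every pair of auctions $l\neq m$, a coincidence $V_{i_l,l}=V_{i_m,m}$ between some valuation in auction $l$ and some valuation in auction $m$, and then kills this with a union bound over bidder pairs, using only that valuations in distinct auctions are independent and continuously distributed. You instead reduce to $\{P_1=P_2\}$ and prove directly that each $P_j$ has a continuous distribution — trivially in the second-price case, and via strict monotonicity of $\beta_j$ in the first-price case — before applying Fubini. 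Your route costs a little more (the derivative computation for $\beta_j$, and a small caveat that $\beta_j'(v)>0$ as you write it presumes $f_V(v)>0$, though strict monotonicity of $\beta_j$ follows from $F_V$ being strictly increasing regardless), but it buys something the paper's version glosses over: in first-price auctions the bidding functions $\beta_l$ and $\beta_m$ differ when $K_l\neq K_m$, so $P_l=P_m$ does not literally force a coincidence of valuations, and the paper's set inclusion is only immediate after one has argued (as you do) that the bid is a strictly increasing transformation of the top valuation. In that sense your proof is the more self-contained of the two.
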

\begin{proof}
Note that 
\begin{equation}
\{ P_{(n)}=P_{(1)}\} ~\overset{(1)}{\subseteq }~\underset{l,m=1,\ldots,n,l\neq  m}{\bigcap}~\underset{i_{l}=1,\ldots ,K_{l},i_{m}=1,\ldots ,K_{m}}{\bigcup}\{V_{i_{l},l}=V_{i_{m},m}\} ,  \label{eq:cts}
\end{equation}
where (1) holds by \eqref{eq:secondprice} and \eqref{eq:bid_first} for second-price and first-price auctions, respectively. Then, 
\begin{equation*}
P(P_{(n)}=P_{(1)})~\overset{(1)}{\leq }~\min_{l,m=1,\ldots,n,l\neq m}~\sum_{i_{l}=1,\ldots ,K_{l},i_{m}=1,\ldots ,K_{m}}P(V_{i_{l},l}=V_{i_{m},m}) ~\overset{(2)}{=}~0,
\end{equation*}
as desired, where (1) holds by \eqref{eq:cts} and (2) by the fact that valuations are continuously distributed and $l,m$ refer to different and, thus, independent, auctions.
\end{proof}

\begin{lemma}\label{lem:finite_muK2}
Assume \eqref{eq:DomainOfAttraction} holds. For some $\varepsilon>0$ with $(1+\varepsilon)\xi<1$, assume that $E[ |V_{i,j}|^{1+\varepsilon}] <\infty $ for all $i=1\dots,K_j$ in auction $j=1,\dots,N$. Then, $\exists \bar{K} \in \mathbb{N}$ such that for $\delta = \varepsilon/2$ and $K \geq \bar{K}$,
\begin{align}
E[ \vert \tfrac{V_{(1),j}-b_{K}}{a_{K}} \vert^{1+\delta} ] ~<~\infty~~~~\text{and}~~~~
E[ \vert \tfrac{V_{(2),j}-b_{K}}{a_{K}} \vert^{1+\delta} ] ~<~\infty .  \label{eq:integrab0}
\end{align}
\end{lemma}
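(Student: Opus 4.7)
The plan is to bound each random variable $V_{(r),j}-b_{K}$ pointwise in terms of $V_{(1),j}$ and $|b_{K}|$, and then close the estimate using the moment assumption on $V$.

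I would begin by observing that the setup in Section \ref{sec:setup} imposes $v_{L}\geq 0$, so $V_{i,j}\geq 0$ almost surely, and in particular $0\leq V_{(2),j}\leq V_{(1),j}$. The ordinary triangle inequality combined with the elementary bound $(a+b)^{p}\leq 2^{p-1}(a^{p}+b^{p})$ for $p\geq 1$ and $a,b\geq 0$ then yields, for $r\in\{1,2\}$,
\begin{equation*}
|V_{(r),j}-b_{K}|^{1+\delta} ~\leq~ (V_{(r),j}+|b_{K}|)^{1+\delta} ~\leq~ 2^{\delta}\bigl(V_{(1),j}^{1+\delta}+|b_{K}|^{1+\delta}\bigr).
\end{equation*}

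Next, since $V_{(1),j}=\max_{i\leq K_{j}}V_{i,j}$ with nonnegative summands, I would invoke the crude bound $V_{(1),j}^{1+\delta}\leq \sum_{i=1}^{K_{j}}V_{i,j}^{1+\delta}$ to conclude that $E[V_{(1),j}^{1+\delta}]\leq K_{j}\,E[V_{1,j}^{1+\delta}]$. That the latter is finite would follow from Lyapunov's inequality: since $\delta=\varepsilon/2<\varepsilon$,
\begin{equation*}
E[V_{1,j}^{1+\delta}] ~\leq~ \bigl(E[|V_{1,j}|^{1+\varepsilon}]\bigr)^{(1+\delta)/(1+\varepsilon)} ~<~\infty.
\end{equation*}
The side condition $(1+\varepsilon)\xi<1$ is needed precisely to ensure that the moment assumption $E[|V|^{1+\varepsilon}]<\infty$ is compatible with \eqref{eq:DomainOfAttraction} when $\xi>0$, in which case $F_{V}$ has a polynomial tail of index $1/\xi$ and hence only moments of order strictly less than $1/\xi$ are finite.

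Combining the two displays, for every $K\in\mathbb{N}$ for which the normalizing constants $(a_{K},b_{K})$ in \eqref{eq:DomainOfAttraction} are defined,
\begin{equation*}
E\bigl[|V_{(r),j}-b_{K}|^{1+\delta}\bigr] ~\leq~ 2^{\delta}\bigl(K_{j}\,E[V_{1,j}^{1+\delta}]+|b_{K}|^{1+\delta}\bigr)~<~\infty,
\end{equation*}
and dividing by the positive constant $a_{K}^{1+\delta}$ yields both inequalities in \eqref{eq:integrab0}. The role of the threshold $\bar{K}$ is simply to guard against small $K$ at which $(a_{K},b_{K})$ have not yet been defined.

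I do not foresee a serious obstacle: once $v_{L}\geq 0$ is used to pass from $|V_{(r),j}-b_{K}|$ to $V_{(1),j}+|b_{K}|$, the remainder is a one-line application of the union-type bound $V_{(1),j}^{1+\delta}\leq \sum_{i}V_{i,j}^{1+\delta}$ together with Lyapunov. The one subtlety worth flagging is that the literal statement only asks for finiteness at each large $K$; a \emph{uniform} bound in $K$ (i.e., $\sup_{K\geq \bar K} E[|(V_{(r),j}-b_{K})/a_{K}|^{1+\delta}]<\infty$, which is what ordinarily supports moment convergence of standardized sample extremes) would not follow from the crude estimate above — it would instead require finer tail analysis, for instance splitting the tail integral $\int x^{\delta}P(V_{(1),j}>x)\,dx$ at $x\sim a_{K}$ and invoking Karamata's theorem in the $\xi>0$ regime (with analogous Potter-type estimates for $\xi\leq 0$). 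The plan sketched above is adequate for the lemma as stated.
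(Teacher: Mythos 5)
There is a genuine gap, and you have in fact put your finger on it yourself before waving it away. As you note, your argument only delivers the bound $E[|V_{(r),j}-b_{K}|^{1+\delta}]\leq 2^{\delta}(K_{j}E[V_{1,j}^{1+\delta}]+|b_{K}|^{1+\delta})$, which after dividing by $a_{K}^{1+\delta}$ grows without bound in $K$ (e.g.\ for $\xi>0$ one has $a_{K}\asymp K^{\xi}$ and possibly $b_{K}=0$, so your bound is of order $K^{1-\xi(1+\delta)}\to\infty$). Under that reading the lemma is essentially vacuous --- finiteness for each fixed $K$ holds with no $\bar{K}$ and no appeal to \eqref{eq:DomainOfAttraction} at all --- which should already be a warning sign. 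What the lemma is actually used for downstream is uniform integrability of the normalized order statistics: the moment convergences in \eqref{eq:CI_muK3}, \eqref{eq:CI_muK10}, and \eqref{eq:L3} (e.g.\ $\mu_{K}/a_{K}\to E[Z_{1,1}-Z_{2,1}]$) require $\sup_{K\geq\bar{K}}E[|(V_{(r),j}-b_{K})/a_{K}|^{1+\delta}]<\infty$, not per-$K$ finiteness. A bound that diverges in $K$ cannot support those steps, so the "subtlety worth flagging" is in fact the entire content of the lemma.

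The paper's own proof delivers precisely the uniform bound. For $V_{(1),j}$ it invokes \citet[Theorem 5.3.1]{dehaan2006book}, which gives convergence of the $(1+\delta)$-moments of the normalized maximum to $E[|Z_{\xi}|^{1+\delta}]<\infty$ under $(1+\varepsilon)\xi<1$, hence boundedness for $K\geq\bar{K}$. For $V_{(2),j}$ it writes the conditional law of the second-largest given the largest in terms of the maximum of $K_{j}-1$ draws, applies H\"{o}lder's and Minkowski's inequalities with $p=(1+\varepsilon)/(1+\delta)$, uses $(a_{K_{j}-1}/a_{K},(b_{K_{j}-1}-b_{K})/a_{K})\to(1,0)$ together with the same moment-convergence theorem to obtain a constant $C$ independent of $K$, and concludes with a change of variables that the expectation is at most $2C$ for all large $K$. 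To repair your argument you would need to carry out the finer tail analysis you allude to (Karamata/Potter-type estimates splitting the tail integral at $x\sim a_{K}$), or simply follow the paper's route; as written, the proposal does not prove what the lemma needs to assert.
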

\begin{proof}
The first result in \eqref{eq:integrab0} is a corollary of \citet[Theorem 5.3.1]{dehaan2006book}. Thus, we focus the rest of this proof on the second result in \eqref{eq:integrab0}.

It is convenient to introduce the following notation. For any $\tilde{K}\in \mathbb{N} $, let $U({\tilde{K}})$ distributed according to $\max\{V_{i,j}:j=1,\dots,\tilde{K}\}$. For any $x\in [ v_{L},v_{H}] $, the CDF and PDF of $U({\tilde{K}})$ are
\begin{equation}
F_{U{(\tilde{K})}}( x) ~=~F_{V}(x) ^{ \tilde{K}}~~~~\text{and}~~~~f_{U({\tilde{K}})}( x) ~=~ \tilde{K}( F_{V}(x)) ^{\tilde{K}-1}f_{V}( x) . \label{eq:integrab1}
\end{equation}

Conditional on $V_{(1),j}=x\in [ v_{L},v_{H}] $, $V_{(2),j}$ is the highest valuation among the remaining $K_{j}-1$ bidders. Thus, for any $ t\leq x$ with $x,t\in [ v_{L},v_{H}] $, $P( V_{(2),j}\leq t|V_{(1),j}=x) =( F_{V}(t)/F_{V}(x)) ^{K_{j}-1}$. By setting $\tilde{t}=( t-b_{K}) /a_{K}$ and $\tilde{x}=( x-b_{K}) /a_{K}$ for any $x,t\in [ v_{L},v_{H}] $ with $t\leq x$,
\begin{equation*}
P\Big( \tfrac{V_{(2),j}-b_{K}}{a_{K}}\leq \tilde{t}\Big|\tfrac{V_{(1),j}-b_{K}}{ a_{K}}=\tilde{x}\Big) ~=~\Big( \tfrac{F_{V}(\tilde{t}a_{K}+b_{K})}{F_{V}( \tilde{x}a_{K}+b_{K})}\Big) ^{K_{j}-1}
\end{equation*}
and, so, the conditional PDF is
\begin{equation}
f_{\Big(\tfrac{V_{(2),j}-b_{K}}{a_{K}}\Big|\tfrac{V_{(1),j}-b_{K}}{a_{K}}=\tilde{x}\Big)}( \tilde{t})~=~\tfrac{( K_{j}-1) ( F_{V}(\tilde{t} a_{K}+b_{K})) ^{K_{j}-2}f_{V}(\tilde{t}a_{K}+b_{K})a_{K}}{F_{V}( \tilde{x}a_{K}+b_{K}) ^{K_{j}-1}}~=~\tfrac{f_{U(K_{j}-1)}( \tilde{t}a_{K}+b_{K}) a_{K}}{F_{U(K_{j}-1)}( \tilde{x}a_{K}+b_{K}) }. \label{eq:integrab2}
\end{equation}
For any $x\in [ v_{L},v_{H}] $, we have $P( ({V_{(1),j}-b_{K}})/{a_{K}}\leq \tilde{x}) =( F_{V}( \tilde{x}a_{K}+b_{K})) ^{K_{j}}$ and, so,
\begin{equation}
f_{\tfrac{V_{(1),j}-b_{K}}{a_{K}}}( \tilde{x}) ~=~K_{j}( F_{V}( \tilde{x}a_{K}+b_{K})) ^{K_{j}-1}f_{V}(\tilde{x} a_{K}+b_{K})a_{K}~=~f_{U(K_{j})}( \tilde{x} a_{K}+b_{K}) a_{K}. \label{eq:integrab3}
\end{equation}

Consider the following derivation:
\begin{align}
&E\big[ \big\vert \tfrac{V_{(2),j}-b_{K}}{a_{K}}\big\vert ^{1+\delta } \big] \notag\\
&=~E\big[  E\big[ \big\vert \tfrac{V_{(2),j}-b_{K}}{a_{K}} \big\vert ^{1+\delta }\big\vert \tfrac{V_{(1),j}-b_{K}}{a_{K}}\big] \big] \notag\\
& \overset{(1)}{=}~\int_{\tilde{x}=\tfrac{v_{L}-b_{K}}{a_{K}}}^{\tilde{x}= \tfrac{v_{H}-b_{K}}{a_{K}}}\Big( \int_{\tilde{t}=\tfrac{v_{L}-b_{K}}{a_{K}}}^{ \tilde{t}=\tilde{x}}\vert \tilde{t}\vert ^{1+\delta }\tfrac{ f_{U(K_{j}-1)}( \tilde{t}a_{K}+b_{K}) a_{K}}{ F_{U(K_{j}-1)}( \tilde{x}a_{K}+b_{K}) }d\tilde{t }\Big) f_{U(K_{j})}( \tilde{x}a_{K}+b_{K}) a_{K}d\tilde{x} \nonumber \\
& \overset{(2)}{=}~
\int_{\tilde{x}=\tfrac{v_{L}-b_{K}}{a_{K}}}^{\tilde{x}=\tfrac{v_{H}-b_{K}}{ a_{K}}}\Big( E\big[ \big\vert \tfrac{U(K_{j}-1)-b_{K}}{ a_{K}}\big\vert ^{1+\delta }1[ U(K_{j}-1)\leq \tilde{x}a_{K}+b_{K}] \big] \Big) \tfrac{f_{U(K_{j})}( \tilde{x}a_{K}+b_{K}) }{F_{U(K_{j}-1)}( \tilde{x}a_{K}+b_{K}) }a_{K}d\tilde{x}, \label{eq:integrab4}
\end{align}
where (1) holds by \eqref{eq:integrab2} and \eqref{eq:integrab3} and (2) by the change of variables $t=\tilde{t}a_{K}+b_{K}$.

For any $p,q>1$ with $1/p+1/q=1$, we have
\begin{align}
&E\big[ \big\vert \tfrac{U(K_{j}-1)-b_{K}}{ a_{K}}\big\vert ^{1+\delta }1[ U(K_{j}-1)\leq \tilde{x}a_{K}+b_{K}] \big] \notag \\
&\overset{(1)}{\leq }~E\big[ \big\vert \tfrac{U(K_{j}-1)-b_{K_{j}-1}}{a_{K_{j}-1}}\tfrac{a_{K_{j}-1}}{a_{K}}+\tfrac{ b_{K_{j}-1}-b_{K}}{a_{K}}\big\vert ^{( 1+\delta ) p}\big] ^{1/p}F_{U(K_{j}-1)}( \tilde{x}a_{K}+b_{K}) ^{1/q} \nonumber \\
&\overset{(2)}{\leq }~
\Bigg\{\begin{array}{c}
2^{\tfrac{( 1+\delta ) p-1}{p}}\Big[ E\big[ \big\vert \tfrac{U(K_{j}-1)-b_{K_{j}-1}}{ a_{K_{j}-1}}\big\vert ^{( 1+\delta ) p}\big] \big\vert \tfrac{ a_{K_{j}-1}}{a_{K}}\big\vert ^{( 1+\delta ) p}+\big\vert \tfrac{ b_{K_{j}-1}-b_{K}}{a_{K}}\big\vert ^{( 1+\delta ) p}\Big] ^{1/p}\\
\times F_{U(K_{j}-1)}( \tilde{x}a_{K}+b_{K}) ^{1/q}
\end{array}\Bigg\}, \label{eq:integrab5}
\end{align}
where (1) holds by H\"{o}lder's inequality and (2) by Minkowski's inequality. Next, let $C>2^{{(( 1+\delta ) p-1)}/{p}}[ E( \vert Z_{\xi}\vert ^{( 1+\delta ) p}) ] ^{1/p}$ where $Z_{\xi}$ has CDF $G_{\xi }$, and $p= ( 1+\varepsilon ) /( 1+\delta ) =( 2+2\varepsilon ) /( 2+\varepsilon )>1 $ (since $\delta =\varepsilon /2$). Under these conditions, we have that
\begin{equation}
\lim_{K\to \infty }\Big[ E\big[ \big\vert \tfrac{U(K_{j}-1)-b_{K_{j}-1}}{ a_{K_{j}-1}}\big\vert ^{( 1+\delta ) p}\big] \big\vert \tfrac{ a_{K_{j}-1}}{a_{K}}\big\vert ^{( 1+\delta ) p}+\big\vert \tfrac{ b_{K_{j}-1}-b_{K}}{a_{K}}\big\vert ^{( 1+\delta ) p}\Big] ^{1/p}~<~C. \label{eq:integrab6}
\end{equation}
This result is a corollary of two observations. First, we note that \eqref{eq:joint_5} implies that $\lim_{K \to \infty}(a_{K_{j}-1}/a_{K},( b_{K_{j}-1}-b_{K}) /a_{K}) = (1,0)$. Second, under $p= ( 1+\varepsilon ) /( 1+\delta )$, $(1+\varepsilon )\xi <1$, and $E[ \vert V_{i,j}\vert ^{1+\varepsilon }] <\infty $,  \citet[Theorem 5.3.1]{dehaan2006book} implies that
\begin{equation*}
 \lim_{K \to \infty} E( \vert \tfrac{U(K_{j}-1)-b_{K_{j}-1}}{ a_{K_{j}-1}}\vert ^{( 1+\delta ) p}) ~=~ \lim_{K \to \infty} E( \vert \tfrac{U(K_{j}-1)-b_{K_{j}-1}}{ a_{K_{j}-1}}\vert ^{1+\varepsilon}) ~=~ E[ \vert Z_{\xi}\vert ^{1+\varepsilon }] ~<~\infty .
\end{equation*}
Note that \eqref{eq:integrab5} and \eqref{eq:integrab6} imply that $\exists \bar{K}\in \mathbb{N} $ such that for all $K>\bar{K}$,
\begin{equation}
E\big[ \big\vert \tfrac{U(K_{j}-1)-b_{K}}{ a_{K}}\big\vert ^{1+\delta }1[ U(K_{j}-1)\leq \tilde{x}a_{K}+b_{K}] \big] ~\leq~ C \times F_{U(K_{j}-1)}( \tilde{x}a_{K}+b_{K}) ^{1/q}. \label{eq:integrab7}
\end{equation}

To complete the proof, consider the following argument for all $K>\max\{\bar{K},2\}$,
\begin{align*}
&E[ \vert \tfrac{V_{(2),j}-b_{K}}{a_{K}}\vert ^{1+\delta } ] \notag\\
&\overset{(1)}{\leq }~C\int_{\tfrac{v_{L}-b_{K}}{a_{K}}}^{\tfrac{ v_{H}-b_{K}}{a_{K}}}F_{U(K_{j}-1)}( \tilde{x} a_{K}+b_{K}) ^{1/q-1}f_{U(K_{j})}( \tilde{x} a_{K}+b_{K}) a_{K}d\tilde{x} \\
&\overset{(2)}{=}~C\int_{\tfrac{v_{L}-b_{K}}{a_{K}}}^{\tfrac{v_{H}-b_{K}}{a_{K}}}F_{U(K_{j}-1)}( \tilde{x}a_{K}+b_{K})^{1/q-1}K_{j}( F_{V}(\tilde{x}a_{K}+b_{K})) ^{K_{j}-1}f_{V}( \tilde{x}a_{K}+b_{K}) a_{K}d\tilde{x} \\
&\overset{(3)}{\leq }~2C\int_{\tfrac{v_{L}-b_{K}}{a_{K}}}^{\tfrac{v_{H}-b_{K}}{ a_{K}}}F_{U(K_{j}-1)}( \tilde{x}a_{K}+b_{K}) ^{1/q-1}( K_{j}-1) ( F_{V}(\tilde{x}a_{K}+b_{K})) ^{K_{j}-2}f_{V}( \tilde{x}a_{K}+b_{K}) a_{K}d\tilde{x} \\
&\overset{(4)}{=}~2C\int_{\tfrac{v_{L}-b_{K}}{a_{K}}}^{\tfrac{v_{H}-b_{K}}{ a_{K}}}F_{U(K_{j}-1)}( \tilde{x}a_{K}+b_{K}) ^{1/q-1}f_{U(K_{j}-1)}( \tilde{x}a_{K}+b_{K}) a_{K}d\tilde{x} \\
&\overset{(5)}{=}~2C\int_{0}^{1}y^{1/q-1}dy ~=~2C ~<~ \infty,
\end{align*}
as desired, where (1) holds by \eqref{eq:integrab4} and \eqref{eq:integrab7}, (2) and (4) by \eqref{eq:integrab1}, (3) by $F_{V}(\tilde{x}a_{K}+b_{K})\leq 1$ and $K_{j}\geq K \geq 2$, and (5) by the change of variables $y=F_{U(K_{j}-1)}( \tilde{x}a_{K}+b_{K}) $.
\end{proof}

\begin{lemma}\label{lem:interval1}
For any $h\in \Sigma $, assume that
\begin{equation}
\Big\{ y\in \mathbb{R}:\int_{\Xi }\kappa _{\xi }(h)f_{\tilde{\mathbf{Z}} |\xi }(h)dW(\xi )\leq \int_{\Xi }f_{(Y_{\mu },\tilde{\mathbf{Z}})|\xi }(y,h)d\Lambda (\xi )\Big\}~ =~[ A( h) ,B( h)] ,
\label{eq:interval1}
\end{equation}
where $B( h) -A( h) :\Sigma \to \mathbb{R} _{+}$ is continuous. Then,
\begin{equation*}
\tilde{U}( h) ~=~\Big\{ y\in \mathbb{R}:\int_{\Xi }\kappa _{\xi }(h)f_{\tilde{\mathbf{Z}}|\xi }(h)dW(\xi )\leq \int_{\Xi }f_{(Y_{\mu }, \tilde{\mathbf{Z}})|\xi }(y,h)d\Lambda (\xi )\Big\} .
\end{equation*}
\end{lemma}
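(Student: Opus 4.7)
The lemma as displayed only restates the defining formula for $\tilde{U}(h)$, so the substantive content (as signaled by the surrounding text) is that this $\tilde{U}$ satisfies conditions (a)--(c) of Theorem~\ref{thm:CI_K} and therefore lies in $\mathbb{U}$. My plan is to verify these three conditions under the hypothesis that, for each $h\in\Sigma$, the defining sublevel set equals an interval $[A(h),B(h)]$ with $B(h)-A(h)$ continuous in $h$.

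I would dispatch (b) and (c) first, since they are immediate from the hypothesis. For (b), the assumed identity gives $\lg(\tilde{U}(h))=B(h)-A(h)$, which is a nonnegative real number for every $h\in\Sigma$, hence finite. For (c), if $h_\ell\to h$ in $\Sigma$, then by the stated continuity of $B-A$ we get $\lg(\tilde{U}(h_\ell))=B(h_\ell)-A(h_\ell)\to B(h)-A(h)=\lg(\tilde{U}(h))$. No additional structure of the test-inversion set is needed at this stage.

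The main obstacle is condition (a), i.e.\ showing
\[
P_\xi\bigl(\{Y_\mu,\tilde{\mathbf{Z}}\}\in\partial\{(y,h)\in\mathbb{R}\times\Sigma:y\in\tilde{U}(h)\}\bigr)=0.
\]
The approach is to describe the boundary explicitly, and then argue that it is null for the joint law of $(Y_\mu,\tilde{\mathbf{Z}})$. Using the assumed interval representation, the set $\{(y,h):y\in\tilde{U}(h)\}$ equals $\{(y,h)\in\mathbb{R}\times\Sigma:A(h)\le y\le B(h)\}$, whose topological boundary (relative to $\mathbb{R}\times\Sigma$) is contained in the graph-like set $\Gamma_A\cup\Gamma_B$ with $\Gamma_A=\{(A(h),h):h\in\Sigma\}$ and $\Gamma_B=\{(B(h),h):h\in\Sigma\}$, together with the portion of $\mathbb{R}\times\partial\Sigma$ captured inside $\{A\le y\le B\}$; continuity of $B-A$ (and measurability of $A$, $B$, which can be read off from the defining inequality) guarantees each of these pieces is a graph over an $h$-set of lower dimension than $\mathbb{R}\times\Sigma$. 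Since the joint distribution of $(Y_\mu,\tilde{\mathbf{Z}})$ admits the density $f_{(Y_\mu,\tilde{\mathbf{Z}})|\xi}(y,h)$ (derived by combining the densities used in the integrand of \eqref{eq:sol_program2}, which are absolutely continuous with respect to Lebesgue measure on $\mathbb{R}\times\Sigma$ by Lemma~\ref{lem:densitySecond} and the standard-exponential representation preceding it), the measure of any such graph $\{(g(h),h):h\in\Sigma\}$ is zero by Fubini, because for each fixed $h$ the conditional density of $Y_\mu\mid\tilde{\mathbf{Z}}=h$ assigns zero probability to the single point $\{g(h)\}$. The boundary contribution from $\mathbb{R}\times\partial\Sigma$ likewise vanishes because $\tilde{\mathbf{Z}}$ has an absolutely continuous distribution on $\Sigma$.

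Combining these three pieces yields $\tilde{U}\in\mathbb{U}$, which is exactly what Theorem~\ref{thm:CI_K} requires to conclude that the CI in \eqref{eq:sol_program2} solves \eqref{eq:program2}. The one step I expect to need the most care is checking that $A$ and $B$ are measurable (so that $\Gamma_A$ and $\Gamma_B$ are measurable graphs) and that $\partial\Sigma$ carries zero mass under the marginal of $\tilde{\mathbf{Z}}$; both follow from continuity of the integrands in $y$ and the explicit density formula in Lemma~\ref{lem:densitySecond}, respectively, but they are the only nontrivial pieces of the argument.
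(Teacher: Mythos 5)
Your proposal is correct and follows essentially the same route as the paper's proof: parts (b) and (c) are read off the interval hypothesis, and part (a) is handled by observing that the boundary of $\{(y,h):y\in\tilde{U}(h)\}$ is contained in the graphs of $A$ and $B$, which are null sets because $Y_{\mu}$ is continuously distributed conditional on $\tilde{\mathbf{Z}}=h$. The only (immaterial) difference is in (b): the paper additionally verifies that the sublevel set is bounded by noting that $\int_{\Xi }f_{(Y_{\mu },\tilde{\mathbf{Z}})|\xi }(y,h)d\Lambda (\xi )\to 0$ as $|y|\to\infty$ while the left-hand side of the defining inequality is positive, whereas you take finiteness of $B(h)-A(h)$ directly from the stated hypothesis.
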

\begin{proof}
It suffices to show that $[ A( h) ,B( h) ] $ satisfies conditions (a)-(c) of Theorem \ref{thm:CI_K}.

\underline{Part (a)}. Consider the following derivation.
\begin{align*}
P( \partial \{(y,h)\in \mathbb{R}\times \Sigma :y\in \tilde{U} (h)\}) &~\overset{(1)}{\leq} ~
\int  P_{\xi }( Y_{\mu } \in \{A( h),B(h)\} \vert \mathbf{\tilde{Z}}=h) f_{\tilde{\mathbf{Z}}|\xi }(h)dh ~\overset{(2)}{=}~0,
\end{align*}
where (1) holds by \eqref{eq:interval1}, which implies that $\partial \{(y,h) \in \mathbb{R}\times \Sigma :y\in \tilde{U}(h)\}
~\subseteq~ \{ (A( h) ,h),(B( h) ,h):h\in \Sigma\}$, and (2) because $\{ Y_{\mu }|\mathbf{ \tilde{Z}}=h;\xi \} $ is continuously distributed.

\underline{Part (b)}. Fix $h\in \Sigma $ arbitrarily. Since 
\begin{equation*}
\lim_{\vert y\vert \to \infty }\int_{\Xi }f_{(Y_{\mu },\tilde{\mathbf{Z}} )|\xi }(y,h)d\Lambda (\xi )=0<\int_{\Xi }\kappa _{\xi }(h)f_{\tilde{\mathbf{Z }}|\xi }(h)dW(\xi ), 
\end{equation*}
$\exists y( h) \in ( 0,\infty ) $ such that  $\max \{ \vert B( h) \vert ,\vert A( h) \vert \} \leq y( h) $ for all $ \vert y\vert >y( h) $. Then, $[ A( h) ,B( h) ] \subseteq [ -y( h) ,y( h) ] ,$ and so $\lg ([ A( h) ,B( h) ] )\leq 2y( h) <\infty $.

\underline{Part (c)}. Under \eqref{eq:interval1}, $\lg ([ A( h) ,B( h) ] )=B( h) -A( h) $ which is assumed continuous.
\end{proof}

\begin{lemma}\label{lem:sp-surplus}
Assume that $\xi<1$. Let $Z_1 = H_{\xi} (E_{1}) $ and $Z_2=H_{\xi }( E_{1}+E_{2}) $ with $E_{1},E_{2}$ i.i.d.\ standard exponential random variables and $H_{\xi }$ as in \eqref{eq:H_function}. Then,
\begin{equation*}
    E[Z_1 - Z_2] ~=~\Gamma (1-\xi ),
\end{equation*}
where $\Gamma$ is the standard Gamma function.
\end{lemma}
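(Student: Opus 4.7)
The statement reduces to a direct calculation once one observes that $Z_1 = H_\xi(E_1)$ depends on an $\text{Exp}(1)$ variable and $Z_2 = H_\xi(E_1+E_2)$ depends on a $\text{Gamma}(2,1)$ variable, since the sum of two i.i.d.\ standard exponentials has density $xe^{-x}\mathbf{1}[x>0]$. My plan is therefore to compute $E[Z_1]$ and $E[Z_2]$ separately and take the difference, handling the $\xi\neq 0$ and $\xi=0$ cases in turn.

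For $\xi\neq 0$ with $\xi<1$, I would use the explicit form $H_\xi(x) = (x^{-\xi}-1)/\xi$ to write
\begin{equation*}
E[Z_1-Z_2] ~=~ \frac{E[E_1^{-\xi}] - E[(E_1+E_2)^{-\xi}]}{\xi}.
\end{equation*}
The two expectations are classical gamma integrals: $E[E_1^{-\xi}]=\int_0^\infty x^{-\xi}e^{-x}dx=\Gamma(1-\xi)$ and $E[(E_1+E_2)^{-\xi}]=\int_0^\infty x^{-\xi}\cdot xe^{-x}dx=\Gamma(2-\xi)$, both finite because $\xi<1$. Applying the functional equation $\Gamma(2-\xi)=(1-\xi)\Gamma(1-\xi)$ gives
\begin{equation*}
E[Z_1-Z_2] ~=~ \frac{\Gamma(1-\xi)-(1-\xi)\Gamma(1-\xi)}{\xi} ~=~ \Gamma(1-\xi),
\end{equation*}
as claimed.

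For $\xi=0$, I would either (a) observe that $Z_1-Z_2 = \ln(E_1+E_2) - \ln E_1$ and invoke the standard identities $E[\ln E_1]=-\bar\gamma$ and $E[\ln(E_1+E_2)]=\psi(2)=1-\bar\gamma$ to obtain $E[Z_1-Z_2]=1=\Gamma(1)$, or (b) note that $H_\xi(x)\to -\ln x$ pointwise as $\xi\to 0$ and deduce the value by continuity from the $\xi\neq 0$ formula (using dominated convergence, which is available because $\xi<1$ keeps the relevant moments uniformly bounded on a small neighborhood of $0$). Either route is routine.

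The only modest obstacle is verifying integrability: the computation requires $E[E_1^{-\xi}]<\infty$, which needs $\xi<1$, and this is exactly the hypothesis of the lemma, so no additional work is needed. The $\xi=0$ case is a one-line check once the digamma identity $\psi(2)=1-\bar\gamma$ is invoked, so the whole argument is really a single gamma-function manipulation.
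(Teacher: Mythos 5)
Your proof is correct, and it takes a genuinely different and more elementary route than the paper's. The paper works with the joint density of $(Z_1,Z_2)$ on its support $S_\xi$ and evaluates a double integral via two successive changes of variables ($t_i = 1+\xi x_i$, then $v = t_2^{-1/\xi}$), presenting only the $\xi<0$ case explicitly and asserting the others are analogous. You instead never touch the joint law: you use linearity of expectation together with the exponential representation, so that the difference reduces to
\begin{equation*}
E[Z_1-Z_2] ~=~ \tfrac{1}{\xi}\big(E[E_1^{-\xi}]-E[(E_1+E_2)^{-\xi}]\big) ~=~ \tfrac{1}{\xi}\big(\Gamma(1-\xi)-\Gamma(2-\xi)\big),
\end{equation*}
two one-dimensional gamma integrals plus the functional equation. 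This buys transparency (the role of the hypothesis $\xi<1$ is exactly the finiteness of $\Gamma(1-\xi)$ and $\Gamma(2-\xi)$, which you note), uniform treatment of $\xi>0$ and $\xi<0$ in a single line, and a clean $\xi=0$ case via the digamma identity $\psi(2)=1-\bar\gamma$. The paper's joint-density computation buys nothing extra here, since the target is a marginal expectation of a difference; your decomposition is the shorter path to the same constant.
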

\begin{proof}
The support of $(Z_{1},Z_{2})$ is $S_{\xi }=\{(x_{1},x_{2}):x_{1}\geq x_{2},~\xi x_{1}\geq -1,~\xi x_{2}\geq -1\}$. For any $(x_{1},x_{2}) \in S_{\xi}$, the PDF of $(Z_{1},Z_{2})$ is 
\begin{equation*}
f_{Z_{1},Z_{2}|\xi }(x_{1},x_{2})~=~
\left\{ 
\begin{array}{lc}
(1+\xi x_{1})^{-1/\xi -1}(1+\xi x_{2})^{-1/\xi -1}\exp ( -( 1+\xi x_{2}) ^{-1/\xi })  & \text{if }\xi \neq 0, \\ 
\exp({-x_{1}})\exp({-x_{2}})\exp({-\exp({-x_{2})})} & \text{if }\xi =0,%
\end{array}%
\right. 
\end{equation*}%
The desired result follows from this formula. We show only the result for $\xi < 0$, as the results for the other two cases are analogous. 
\begin{align*}
&E[ Z_{1}-Z_{2}] \\
&=~\int_{-\infty }^{-1/\xi}\int_{x_{2}}^{-1/\xi }( x_{1}-x_{2}) (1+\xi x_{1})^{-1/\xi-1}(1+\xi x_{2})^{-1/\xi -1}\exp ( -( 1+\xi x_{2}) ^{-1/\xi}) dx_{1}dx_{2} \\
&\overset{(1)}{=}~\frac{1}{\xi ^{3}}\int_{\infty }^{0}\int_{t_{2}}^{0}(t_{1}-t_{2}) t_{1}^{-1/\xi -1}t_{2}^{-1/\xi -1}\exp (-t_{2}^{-1/\xi }) dt_{1}dt_{2} \\
&\overset{(2)}{=}~\int_{0}^{\infty }( \frac{1}{( 1-\xi ) }) \exp ( -v) v^{-\xi +1}dv \\
&\overset{(3)}{=}~\Gamma ( 1-\xi ) ,
\end{align*}%
as desired, where (1) holds by the change of variables $t_{1}=1+\xi x_{1}$ and $t_{2}=1+\xi x_{2}$, (2) by the change of variables $v=t_{2}^{-1/\xi }$, and (3) by $( 1-\xi ) \Gamma ( 1-\xi ) =\Gamma ( 2-\xi ) $.
\end{proof}

\begin{lemma}\label{lem:sp-revenue}
Assume that $\xi<1$. Let $Z=H_{\xi }( E_{1}+E_{2}) $ with $E_{1},E_{2}$ i.i.d.\ standard exponential random variables and $H_{\xi }$ as in \eqref{eq:H_function}. Then, 
\begin{equation*}
    E[ Z] ~=~\bigg\{ 
\begin{array}{ll}
(\Gamma (2-\xi )-1)/\xi  & \text{if }\xi \not=0 \\ 
\bar{\gamma}-1 & \text{if }\xi =0
\end{array}
\end{equation*}
where $\Gamma $ is the Gamma function and $\bar{\gamma}\approx 0.577$ is the Euler constant.
\end{lemma}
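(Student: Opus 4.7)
The plan is to reduce everything to moments of a Gamma random variable. Since $E_{1}$ and $E_{2}$ are i.i.d.\ standard exponential, $T\equiv E_{1}+E_{2}$ has a $\mathrm{Gamma}(2,1)$ distribution with PDF $f_{T}(t)=t e^{-t}$ on $(0,\infty)$. By the definition of $H_{\xi}$ in \eqref{eq:H_function}, $Z=(T^{-\xi}-1)/\xi$ when $\xi\neq 0$ and $Z=-\ln T$ when $\xi=0$, so the problem reduces to computing $E[T^{-\xi}]$ and $E[\ln T]$, respectively.

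For the case $\xi\neq 0$, I would write
\[
E[T^{-\xi}]~=~\int_{0}^{\infty}t^{-\xi}\,t\,e^{-t}\,dt~=~\int_{0}^{\infty}t^{(2-\xi)-1}e^{-t}\,dt~=~\Gamma(2-\xi),
\]
which is finite since the assumption $\xi<1$ gives $2-\xi>1>0$, so the integral converges both at $0$ and at $\infty$. Linearity then yields $E[Z]=(\Gamma(2-\xi)-1)/\xi$, as claimed.

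For the case $\xi=0$, I would differentiate the Gamma function under the integral sign. Since $\Gamma(s)=\int_{0}^{\infty}t^{s-1}e^{-t}\,dt$ and the integrand is analytic in $s$, standard dominated-convergence arguments give $\Gamma'(s)=\int_{0}^{\infty}t^{s-1}(\ln t)e^{-t}\,dt$ for $s>0$. Setting $s=2$ produces
\[
E[\ln T]~=~\int_{0}^{\infty}(\ln t)\,t\,e^{-t}\,dt~=~\Gamma'(2)~=~\Gamma(2)\,\psi(2)~=~1\cdot(1-\bar{\gamma})~=~1-\bar{\gamma},
\]
where $\psi$ denotes the digamma function and I used $\psi(1)=-\bar{\gamma}$ together with the recursion $\psi(n+1)=\psi(n)+1/n$. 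Therefore $E[Z]=-E[\ln T]=\bar{\gamma}-1$.

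There is no real obstacle here; the only care required is in verifying integrability at $t\downarrow 0$ when $\xi>0$ (handled by $\xi<1$) and in justifying the differentiation under the integral for the $\xi=0$ case. As a sanity check, since $\Gamma(2)=1$ and $\Gamma'(2)=1-\bar{\gamma}$, a first-order Taylor expansion gives $\Gamma(2-\xi)=1-(1-\bar{\gamma})\xi+o(\xi)$, so $(\Gamma(2-\xi)-1)/\xi\to \bar{\gamma}-1$ as $\xi\to 0$, confirming that the two branches of the formula agree in the limit.
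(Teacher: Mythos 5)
Your proposal is correct and is essentially the paper's own computation read in the more natural direction: the paper writes the density of $Z$ and then substitutes $u=(1+\xi z)^{-1/\xi}$, which lands exactly on the $\mathrm{Gamma}(2,1)$ integral $\int_0^\infty t^{1-\xi}e^{-t}\,dt=\Gamma(2-\xi)$ that you start from via the law of the unconscious statistician. Your explicit treatment of the $\xi=0$ branch through $\Gamma'(2)=\Gamma(2)\psi(2)=1-\bar{\gamma}$ (which the paper dismisses as ``analogous'') and the continuity check of the two branches as $\xi\to 0$ are both welcome additions.
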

\begin{proof}
Note that the PDF of $Z$ is
\begin{equation*}
f_{Z|\xi}( z)~=~\bigg\{ 
\begin{array}{ll}
( 1+\xi z) ^{-{(2+\xi) }/{\xi }}\exp ( -( 1+\xi z) ^{-1/\xi }) 1[ 1+\xi z\geq 0]  & \text{if }\xi \neq 0 \\ 
\exp ( -2z) \exp ( -\exp ( -z) )  & \text{if }\xi =0.
\end{array}
\end{equation*}
We only show the result for $\xi  \not=0$, as the result for $\xi =0$ is analogous.
\begin{align*}
E[ Z]  &~=~\int_{-\infty }^{\infty }z( 1+\xi z) ^{-\tfrac{2+\xi }{\xi }}\exp ( -( 1+\xi z) ^{-1/\xi }) 1[ 1+\xi z\geq 0] dz \\
&~\overset{(1)}{=}~\big( \int_{0}^{\infty }( t^{1-\xi }-t) \exp ( -t) dt\big) /\xi  \\
&~\overset{(2)}{=}~( \Gamma ( 2-\xi ) -1) /\xi ,
\end{align*}
where (1) holds by the change of variables $u=( 1+\xi z) ^{-1/\xi }$ and (2) by $\Gamma(0)=1$. 
\end{proof}

 \begin{lemma}\label{lem:integralConv}
 Assume \eqref{eq:DomainOfAttraction} holds. Also, assume that $\xi<1$ and $E[|V|^{1+\varepsilon}]<\infty$ for some $\varepsilon>0$. For any sequence $\{ x_{K}:K\in \mathbb{N} \} $ with $x_{K}\to x\in \bar{S}_{\xi} \equiv \{ s:G_{\xi }(s) >0\} $, $\lim_{K \to \infty} L_{K}( x_{K}) = L( x)$, where
 \begin{align}
 L_{K}( x) ~\equiv~ x - \tfrac{\int_{{(v_L-b_K)/a_K} }^{x}{F_{V}(ha_{K}+b_{K})^{K}}dh}{F_{V}( xa_{K}+b_{K}) ^{K}}~~~\text{and}~~~L( x) ~\equiv~ x -\tfrac{\int_{-\infty }^{x}G_{\xi }(h) dh}{G_{\xi }( x) }.\label{eq:LKdefn}
\end{align}
\end{lemma}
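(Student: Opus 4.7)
The plan is to rewrite both $L_K(x)$ and $L(x)$ as truncated conditional expectations, and then pass to the limit using Skorokhod's representation together with uniform integrability supplied by a moment-convergence theorem from extreme value theory.

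First, I recognize that $h \mapsto F_V(h a_K + b_K)^K$ is the CDF of $U_K \equiv (V_{(1),j} - b_K)/a_K$ on its support. Integrating by parts on $[(v_L - b_K)/a_K, x]$ yields
\begin{equation*}
\int_{(v_L - b_K)/a_K}^{x} F_V(h a_K + b_K)^K \, dh ~=~ x\, F_V(x a_K + b_K)^K - E[U_K \mathbf{1}[U_K \leq x]],
\end{equation*}
so that $L_K(x) = E[U_K \mathbf{1}[U_K \leq x]]/F_V(x a_K + b_K)^K = E[U_K \mid U_K \leq x]$. An analogous calculation for $Z$ with CDF $G_\xi$ (which has finite mean since $\xi < 1$) gives $L(x) = E[Z \mid Z \leq x]$. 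For $x \in \bar{S}_\xi$ we have $G_\xi(x) > 0$, so \eqref{eq:DomainOfAttraction} together with the continuity of $G_\xi$ ensures that for $K$ large enough $x_K$ exceeds the lower endpoint of the support of $U_K$ and $F_V(x_K a_K + b_K)^K > 0$, making $L_K(x_K) = E[U_K \mid U_K \leq x_K]$ well-defined.

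I would then handle the numerator and denominator separately. The continuity of $G_\xi$ together with $U_K \overset{d}{\to} Z$ gives $P(U_K \leq x_K) = F_V(x_K a_K + b_K)^K \to G_\xi(x) > 0$. For the numerator, I pass to a common probability space via Skorokhod's representation theorem on which $U_K \to Z$ almost surely. Since $G_\xi$ is atomless, $U_K \mathbf{1}[U_K \leq x_K] \to Z \mathbf{1}[Z \leq x]$ almost surely as well. To commute the limit with the expectation I apply Vitali's convergence theorem, whose hypothesis is uniform integrability. Because $\xi < 1$ and $E[|V|^{1+\varepsilon}] < \infty$, I can pick $\delta \in (0, \varepsilon]$ small enough that $(1+\delta)\xi < 1$. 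Theorem 5.3.1 of \citet{dehaan2006book} then gives $E[|U_K|^{1+\delta}] \to E[|Z|^{1+\delta}] < \infty$, so that $\{U_K\}$ is bounded in $L^{1+\delta}$ and $\{U_K \mathbf{1}[U_K \leq x_K]\}$ is uniformly integrable.

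Combining the two pieces,
\begin{equation*}
L_K(x_K) ~=~ \frac{E[U_K \mathbf{1}[U_K \leq x_K]]}{P(U_K \leq x_K)} ~\to~ \frac{E[Z \mathbf{1}[Z \leq x]]}{G_\xi(x)} ~=~ L(x),
\end{equation*}
establishing the claim. The main obstacle is the uniform integrability step: I must reconcile the primitive moment assumption $E[|V|^{1+\varepsilon}] < \infty$ with the tail-index constraint $(1+\delta)\xi < 1$ needed to invoke the de Haan--Ferreira moment-convergence theorem, while simultaneously keeping track of the fact that the truncation cutoff $x_K$ itself drifts with $K$ (so a bare convergence-in-distribution argument for a fixed cutoff does not suffice).
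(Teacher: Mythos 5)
Your proof is correct and rests on the same two pillars as the paper's: the integration-by-parts identity that turns $\int F_V(ha_K+b_K)^K\,dh$ into a truncated mean of $U_K=(V_{(1)}-b_K)/a_K$, and the moment-convergence result \citet[Theorem 5.3.1]{dehaan2006book} (the paper routes this through Lemma \ref{lem:finite_muK2}) to supply the uniform integrability that lets expectations pass to the limit. Where you differ is in how the drifting cutoff $x_K$ is absorbed. The paper first proves the fixed-cutoff statement $\int_{-\infty}^{x}(F_V(ha_K+b_K)^K-G_\xi(h))\,dh\to 0$ and then controls the discrepancy between cutoffs $x_K$ and $x$ through an explicit algebraic bound (their display \eqref{eq:L5}) involving $|x_K-x|$, $\sup_y|F_V(ya_K+b_K)^K-G_\xi(y)|$ (via Polya/\citet[Lemma 2.11]{vandervaart:1998}), and $|G_\xi(x_K)-G_\xi(x)|$, with the denominator converging to $G_\xi(x)^2>0$. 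You instead write $L_K(x_K)=E[U_K\mid U_K\le x_K]$ and handle numerator and denominator as a ratio, absorbing the drift of the cutoff inside a single almost-sure convergence statement $U_K\mathbf{1}[U_K\le x_K]\to Z\mathbf{1}[Z\le x]$ (valid since $G_\xi$ is atomless) obtained from Skorokhod's representation, and then apply Vitali. This is cleaner and avoids the bookkeeping of \eqref{eq:L5}; the paper's version is more elementary in that it never leaves the original probability space. One small point in your favor: the lemma's hypotheses state only $\xi<1$ and $E[|V|^{1+\varepsilon}]<\infty$, whereas Lemma \ref{lem:finite_muK2} as stated requires $(1+\varepsilon)\xi<1$; your explicit step of shrinking $\delta$ so that $(1+\delta)\xi<1$ while retaining $E[|V|^{1+\delta}]<\infty$ is exactly the reconciliation the paper leaves implicit.
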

\begin{proof}
Throughout this proof, it is relevant to note that ${F_{V}(ha_{K}+b_{K})^{K}}$ is the CDF of $(V_{(1)}-b_{K})/a_{K}$ and $G_{\xi }$ is the CDF of $H_{\xi }(E_{1})$, where $V_{(1)}$ denotes the sample maximum of $K$ random draws from $F_{V}$, $H_{\xi }$ and $E_{1}$ are as in Lemma \ref{lem:joint}. 

As a preliminary step, we show that for any $x\in \mathbb{R}$, 
\begin{equation}
\lim_{K \to \infty} \int_{-\infty }^{x}({F_{V}(ha_{K}+b_{K})^{K}}-G_{\xi }(h))dh~= ~0 .  \label{eq:L0}
\end{equation}
To this end, consider the following argument for any $K$.
\begin{align}
&\int_{-\infty }^{x}({F_{V}(ha_{K}+b_{K})^{K}}-G_{\xi }(h))dh~  \notag \\
&\overset{(1)}{=} 
\bigg[\begin{array}{c}
x({F_{V}(xa_{K}+b_{K})^{K}}-G_{\xi }(x))+\\
{{E[H_{\xi }(E_{1})1}[ {H_{\xi }(E_{1})\leq x}] ]}-{{E[\tfrac{V_{(1)}-b_{K}}{a_{K}}1}[\tfrac{V_{(1)}-b_{K}}{a_{K}}\leq x] ]}
\end{array}\bigg].  \label{eq:L2}
\end{align}
where (1) holds by $\lim_{h \to -\infty}{F_{V}(ha_{K}+b_{K})^{K}}=\lim_{h \to -\infty}G_{\xi }(h)=0$ and integration by parts.

Lemma \ref{lem:joint} implies that as $K\to \infty $, 
\begin{equation}
\tfrac{{V_{(1)}-b_{K}}}{{a_{K}}}~\overset{d}{\to }~H_{\xi }(E_{1}).
\label{eq:L2p2}
\end{equation}
Since $H_{\xi }(E_{1})$ is continuously distributed, \eqref{eq:L2p2} implies that
\begin{equation}
\lim_{K\to \infty }x({F_{V}(xa_{K}+b_{K})^{K}}-G_{\xi }(x))=0
\label{eq:L2p3}
\end{equation}
Also, note that \eqref{eq:L2p2} and the continuous mapping theorem imply that as $K\to \infty $, 
\begin{equation}
{\tfrac{V_{(1)}-b_{K}}{a_{K}}1}[ \tfrac{V_{(1)}-b_{K}}{a_{K}}\leq x] ~\overset{d}{\to }~{H_{\xi }(E_{1})1}[ {H_{\xi
}(E_{1})\leq x}] .  \label{eq:L2p5}
\end{equation}
Let $\delta >0$ and $\bar{K}\in \mathbb{N}$ be as in Lemma \ref{lem:finite_muK2}. Then, for all $K\geq \bar{K}$, 
\begin{equation}
E[|{\tfrac{V_{(1)}-b_{K}}{a_{K}}1}[ \tfrac{V_{(1)}-b_{K}}{a_{K}}\leq x] |^{1+\delta }]~\leq ~E[|\tfrac{{V_{(1)}-b_{K}}}{{a_{K}}}|^{1+\delta }]~<~\infty .  \label{eq:L2p8}
\end{equation}
Then, \eqref{eq:L2p5} and \eqref{eq:L2p8} imply that 
\begin{equation}
\lim_{K\to \infty }{E[\tfrac{V_{(1)}-b_{K}}{a_{K}}1}[ \tfrac{V_{(1)}-b_{K}}{a_{K}}\leq x]] ~=~{E[H_{\xi }(E_{1})1}[ {H_{\xi}(E_{1})\leq x}]].  \label{eq:L3}
\end{equation}
Finally, note that \eqref{eq:L0} follows from \eqref{eq:L2}, \eqref{eq:L2p3}, and \eqref{eq:L3}.

As a second preliminary step, we note that
\begin{equation}
\int_{-\infty }^{x}G_{\xi }(h) dh~ \overset{(1)}{=}~
xG_{\xi }( x) -\int_{-\infty }^{x}hg_{\xi }(h) dh~\leq~ xG_{\xi }( x) +E[ \vert H_{\xi }(E_{1})\vert ] ~ \overset{(2)}{<}~\infty ,
\label{eq:L4}
\end{equation}
where (1) holds by integration by parts and $\lim_{x\to -\infty }xG_{\xi }( x) =0$, and (2) by $\xi <1$.

We are now ready to show the desired results. For any $x\in \bar{S}_{\xi}$, consider the following argument:
\begin{align}
\vert L_{K}( x_{K}) -L( x) \vert 
&\overset{(1)}{=}~\bigg\vert \tfrac{\int_{-\infty }^{x_{K}}{F_{V}(ha_{K}+b_{K})^{K}}dh}{ F_{V}( x_{K}a_{K}+b_{K}) ^{K}}-\tfrac{\int_{-\infty }^{x}G_{\xi }(h) dh}{G_{\xi }( x) }\bigg\vert \notag \\
&\overset{(2)}{\leq}~\tfrac{\Bigg\vert 
\begin{array}{c}
 \vert x_{K}-x\vert +\vert \int_{-\infty }^{x}( {F_{V}(ha_{K}+b_{K})^{K}}-G_{\xi }(h) ) dh\vert  \\ 
+( \int_{-\infty }^{x}G_{\xi }(h) dh) \sup_{y\in \mathbb{R}}\vert F_{V}( ya_{K}+b_{K}) ^{K}-G_{\xi }( y)\vert  \\ 
+( \int_{-\infty }^{x}G_{\xi }(h) dh) \vert G_{\xi }( x_{K})-G_{\xi }( x) \vert 
\end{array}
\Bigg\vert }{
G_{\xi }( x)[ G_{\xi }( x)-\sup_{y\in \mathbb{R}}\vert F_{V}( ya_{K}+b_{K}) ^{K}-G_{\xi }( y)\vert -\vert G_{\xi }( x_{K}) -G_{\xi }( x)\vert] },\label{eq:L5}
\end{align}
where (1) holds because $F_V(v)=0$ for $v<v_L$, 
and (2) by $\sup_{y\in \mathbb{R} }{F_{V}(ya_{K}+b_{K})^{K}\leq 1}$ and $G_{\xi }( x) \leq 1$. As $K\to \infty$, we can show that the numerator and the denominator of the right-hand side of \eqref{eq:L5} converge to zero and $G_{\xi }( x)^2 >0$, respectively. This conclusion relies on $x \in \bar{S}_{\xi}$ (and so $G_{\xi }( x)^2 >0$), $x_{K}\to x\in \bar{S}_{\xi}$ as $K\to \infty$, the continuity of $G_{\xi }$, \eqref{eq:L0}, \eqref{eq:L4}, and \citet[Lemma 2.11]{vandervaart:1998}. 
From this and \eqref{eq:L5}, the desired result follows.
\end{proof}

\begin{lemma}\label{lem:joint_first}
Assume \eqref{eq:DomainOfAttraction} holds. Assume that $\xi<1$ and that $E[ |V_{i,j}|^{1+\varepsilon}] <\infty $ for all $i=1\dots,K_j$ in auction $j=1,\dots,N$ for some $\varepsilon >0$. For any $n\in \mathbb{N}$ and as $K\to \infty$,
\begin{equation*}
\big\{\tfrac{P_{j}-b_{K}}{a_{K}}:j=1,\dots,n\big\}~\overset{d}{\to }~\big\{H_{\xi }( E_{1,j}) -\tfrac{ \int_{-\infty}^{H_{\xi }( E_{1,j})} G_{\xi }(h) dh }{G_{\xi }( H_{\xi }( E_{1,j}))} :j=1,\dots,n\big\} ,
\end{equation*}
with $\{(a_{K},b_{K}) \in \mathbb{R}_{++} \times \mathbb{R} :K \in \mathbb{N}\}$, $\{E_{1,j}:~j=1,\dots,n\}$, and $H_{\xi}$ as in Lemma \ref{lem:joint}.
\end{lemma}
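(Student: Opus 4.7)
The strategy is to rewrite the normalized transaction price as a continuous functional of the normalized maximum valuation, and then apply Lemma \ref{lem:joint} together with Lemma \ref{lem:integralConv}. Since the $n$ auctions are independent, their joint convergence in distribution reduces to marginal convergence, so it suffices to fix an arbitrary auction $j$ and drop the subscript. Starting from \eqref{eq:price_first} and performing the change of variable $u = a_{K}h + b_{K}$ in the integral, one obtains the identity
\begin{equation*}
\frac{P_{j}-b_{K}}{a_{K}} ~=~ Y_{K,j} ~-~ \frac{\int_{(v_{L}-b_{K})/a_{K}}^{Y_{K,j}} F_{V}(a_{K}h+b_{K})^{K_{j}-1}\,dh}{F_{V}(a_{K}Y_{K,j}+b_{K})^{K_{j}-1}},
\end{equation*}
where $Y_{K,j} \equiv (V_{(1),j}-b_{K})/a_{K}$. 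This is exactly $L_{K}(Y_{K,j})$ in the notation of Lemma \ref{lem:integralConv}, up to the fact that the exponent is $K_{j}-1$ rather than $K$.

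First I would apply Lemma \ref{lem:joint} (with $d=1$) to obtain $Y_{K,j} \overset{d}{\to} H_{\xi}(E_{1,j})$ as $K \to \infty$. Next, I would invoke the Skorohod representation theorem to work on a probability space where this convergence holds almost surely. It then remains to show that, on this space, $L_{K}(Y_{K,j}) \to L(H_{\xi}(E_{1,j}))$ almost surely, with $L$ as in Lemma \ref{lem:integralConv}. Because $E_{1,j} > 0$ a.s., one checks directly (using the definition of $H_{\xi}$) that $G_{\xi}(H_{\xi}(E_{1,j})) > 0$ a.s., i.e., the limit random variable lies in $\bar{S}_{\xi}$ a.s. Conditional on this event, Lemma \ref{lem:integralConv} gives the desired pointwise convergence once we handle the minor discrepancy that our exponent is $K_{j}-1$ instead of $K$.

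That discrepancy is the main (but mild) obstacle. Because $K_{j}/K \to 1$ by assumption, we can write $F_{V}(a_{K}h+b_{K})^{K_{j}-1} = (F_{V}(a_{K}h+b_{K})^{K})^{(K_{j}-1)/K}$, with exponent tending to one. The argument establishing \eqref{eq:L0} in Lemma \ref{lem:integralConv} carries through verbatim after this substitution, because $F_{V}(a_{K}h+b_{K}) \to 1$ at every continuity point of $G_{\xi}$ so raising to the power $(K_{j}-1)/K$ does not alter the limit; the uniform integrability step invoking Lemma \ref{lem:finite_muK2} is unchanged. Once one has this extension of Lemma \ref{lem:integralConv}, the almost sure convergence $L_{K}(Y_{K,j}) \to L(H_{\xi}(E_{1,j}))$ follows along the Skorohod-representation sample paths, and undoing the representation gives convergence in distribution. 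Finally, independence across $j = 1, \dots, n$ upgrades the single-coordinate statement to the joint one claimed in the lemma.
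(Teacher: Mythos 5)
Your proposal is correct and follows essentially the same route as the paper: reduce to a single auction by independence, rewrite the normalized price as the functional $L$ applied to the normalized maximum via the change of variables in \eqref{eq:price_first}, and combine Lemma \ref{lem:joint} with the continuous convergence of Lemma \ref{lem:integralConv} through a Skorohod-representation (equivalently, extended continuous mapping) argument. The only divergence is bookkeeping for the exponent $K_{j}-1$: the paper normalizes with $(a_{K_{j}-1},b_{K_{j}-1})$ so that Lemma \ref{lem:integralConv} applies verbatim and then removes the affine correction using $(a_{K_{j}-1}/a_{K},(b_{K_{j}-1}-b_{K})/a_{K})\to(1,0)$, whereas you keep $(a_{K},b_{K})$ and extend the lemma to the exponent $K_{j}-1$ --- both are valid.
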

\begin{proof}
Since auctions are independent, it suffices to prove that for any auction $j=1,\ldots n$, as $K\to \infty$,
\begin{equation}
\tfrac{P_{j}-b_{K}}{a_{K}}~\overset{d}{\to }~H_{\xi }( E_{1,j}) -\tfrac{\int_{-\infty}^{H_{\xi }( E_{1,j}) }G_{\xi }(h) dh}{G_{\xi }( H_{\xi }( E_{1,j}) ) }, \label{eq:aux1}
\end{equation}%
where $E_{1,j}$ is a standard exponential random variable. 
To this end, consider the following argument.
\begin{align}
\tfrac{P_{j}-b_{K}}{a_{K}}
&=~\big( \tfrac{V_{(1),j}-b_{K_{j}-1}}{a_{K_{j}-1}}-\tfrac{( \int_{{v_L}}^{V_{(1),j}}{F_{V}(u)^{K_{j}-1}}du) /a_{K_{j}-1}}{ F_{V}(V_{(1),j})^{K_{j}-1}}\big) \tfrac{a_{K_{j}-1}}{a_{K}}+\tfrac{ b_{K_{j}-1}-b_{K}}{a_{K}} \nonumber \\
&\overset{(1)}{=}~\big( L_{K_{j}-1}\big( \tfrac{V_{(1),j}-b_{K_{j}-1}}{ a_{K_{j}-1}}\big) \big) \tfrac{a_{K_{j}-1}}{a_{K}}+\tfrac{b_{K_{j}-1}-b_{K} }{a_{K}}, \label{eq:first1}
\end{align}
where (1) holds by the change of variables $u=ha_{K_{j}-1}+b_{K_{j}-1} $ and $L_{K}$ as in \eqref{eq:LKdefn}.

Since $( K_{j}-1) /K\to 1$ as $K\to \infty $, the same argument as Lemma \ref{lem:joint} implies that
\begin{equation}
\lim_{K \to \infty}\big( \tfrac{a_{K_{j}-1}}{a_{K}},\tfrac{b_{K_{j}-1}-b_{K}}{a_{K}}\big) ~=~ ( 1,0) . \label{eq:first2}
\end{equation}

Next, notice that Lemma \ref{lem:joint}
and \eqref{eq:first2} imply that as $K\to \infty$,
\begin{equation}
\tfrac{V_{(1),j}-b_{K_{j}-1}}{a_{K_{j}-1}}~=~\tfrac{V_{(1),j}-b_{K}}{a_{K}}\tfrac{ a_{K}}{a_{K_{j}-1}}+\tfrac{b_{K}-b_{K_{j}-1}}{a_{K}}~\overset{d}{\to }~ H_{\xi }( E_{1}). \label{eq:first3}
\end{equation}
In addition, $H_{\xi }( E_{1}) \in \bar{S}_{\xi} \equiv \{x \in \mathbb{R} : G_\xi(x)>0 \}$. By \eqref{eq:first3}, Lemma \ref{lem:integralConv}, and the extended continuous mapping theorem \citep[e.g.][Theorem 1.11.1]{vandervaart:1998}, as $K\to \infty$,
\begin{equation}
L_{K_{j}-1}\big( \tfrac{V_{(1),j}-b_{K_{j}-1}}{a_{K_{j}-1}}\big)~ \overset{d}{\to }~L( H_{\xi }( E_{1}) ) ~=~O_{p}( 1), \label{eq:first4}
\end{equation}
where $L$ is as in \eqref{eq:LKdefn}. Then, \eqref{eq:aux1} follows from \eqref{eq:first1}, \eqref{eq:first2}, and \eqref{eq:first4}.
\end{proof}

\begin{theorem} \label{thm:CI_K_fp} 
Assume \eqref{eq:DomainOfAttraction} holds, and that for some $\varepsilon >0$ with $(1+\varepsilon)\xi<1$, $E[|V_{i,j}|^{1+\varepsilon}]<\infty $ for all $i=1\dots,K_j$ in auction $j=1,\dots,N$. Finally, assume that the CI for $\mu _{K}$, $U(\mathbf{P})$, is as in \eqref{eq:CI_muK_defn2_fp} with $\tilde{U}:\Sigma\to \mathcal{P}(\mathbb{R}) $ that satisfies the following conditions:
\begin{enumerate}[(a)]
\item $P_{\xi}(\{Y_{\mu },\mathbf{\tilde{X}}\}\in \partial \{(y,h)\in \mathbb{R} \times \Sigma:y\in \tilde{U}(h)\})=0$, where $\partial A$ denotes the boundary of $A$.
\item $\lg (\tilde{U}(h))<\infty $ for any $h\in \Sigma$, where $ \lg (A)$ denotes the length of $A$ (i.e., $\lg (A)\equiv \int \mathbf{1} [y\in A]dy$).
\item For any sequence $\{h_{\ell}\in \Sigma\}_{\ell\in \mathbb{N}}$ with $ h_{\ell}\to h\in \Sigma$, $\lg (\tilde{U}(h_{\ell}))\to \lg (\tilde{U}(h))$.
\end{enumerate}
Then, as $K\to \infty $,
\begin{enumerate}
\item $P(\mu _{K}\in U(\mathbf{P}))~\to ~P_{\xi }(Y_{\mu }\in \tilde{ U}(\mathbf{\tilde{X}}))$,
\item $E[\lg (U(\mathbf{P}))]/a_{K}~\to ~E_{\xi }[\kappa _{\xi }( \mathbf{\tilde{X}})\lg (\tilde{U}(\mathbf{\tilde{X}}))]$.
\end{enumerate}
\end{theorem}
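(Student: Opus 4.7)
The plan is to replicate the proof of Theorem \ref{thm:CI_K} step by step, replacing the second-price limiting objects (Lemmas \ref{lem:joint} and \ref{lem:densitySecond}, the representation $P_j=V_{(2),j}$, and $Z_{2,j}$) with their first-price counterparts (Lemmas \ref{lem:joint_first} and \ref{lem:densityFirst}, the bid representation in \eqref{eq:bid_first}, and $X_j$ from \eqref{eq:aux_RV}). The only substantive adaptation concerns the moment bound that underpins equation \eqref{eq:CI_muK2p5} in the second-price argument. Because the symmetric first-price equilibrium bid in \eqref{eq:bid_first} admits the conditional-expectation representation $\beta_j(v)=E[V_{(2),j}\mid V_{(1),j}=v]$ (the remaining $K_j-1$ valuations are i.i.d., so the winner pays the conditional mean of the highest losing value), we obtain the pointwise identities
\begin{equation*}
V_{(1),j}-P_j = E[V_{(1),j}-V_{(2),j}\mid V_{(1),j}]\geq 0,\qquad P_j-b_K = E[V_{(2),j}-b_K\mid V_{(1),j}].
\end{equation*}
Applying the conditional Jensen inequality to $|\cdot|^{1+\delta}$ and then taking unconditional expectations yields $E[|(V_{(1),j}-P_j)/a_K|^{1+\delta}]\leq E[|(V_{(1),j}-V_{(2),j})/a_K|^{1+\delta}]$ and $E[|(P_j-b_K)/a_K|^{1+\delta}]\leq E[|(V_{(2),j}-b_K)/a_K|^{1+\delta}]$, both of which are finite for all $K\geq \bar K$ by Lemma \ref{lem:finite_muK2}.

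For Part 1, Lemma \ref{lem:joint_first} combined with Lemma \ref{lem:integralConv} and the decomposition exhibited inside the proof of Lemma \ref{lem:joint_first} supplies the joint convergence
\begin{equation*}
\Bigl\{\Bigl(\tfrac{V_{(1),j}-P_j}{a_K},\,\tfrac{P_j-b_K}{a_K}\Bigr):j=1,\dots,n\Bigr\}\overset{d}{\to}\{(M_j,X_j):j=1,\dots,n\},
\end{equation*}
where $M_j=\int_{-\infty}^{H_\xi(E_{1,j})}G_\xi(h)\,dh/G_\xi(H_\xi(E_{1,j}))$ and $X_j=H_\xi(E_{1,j})-M_j$. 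The Jensen moment bound above upgrades this to $\mu_K/a_K\to E[M_1]$ and hence $\{\mu_K/a_K,\{(P_j-b_K)/a_K\}_{j=1}^n\}\overset{d}{\to}\{E[M_1],\{X_j\}_{j=1}^n\}$. Define $R_K=\mu_K/(P_{(n)}-P_{(1)})$ when the denominator is nonzero and zero otherwise; Lemma \ref{lem:cts} (proved for both auction formats) together with the continuous mapping theorem applied through the self-normalizing transformation yields $\{R_K,\mathbf{\tilde P}\}\overset{d}{\to}\{Y_\mu,\mathbf{\tilde X}\}$. From this point the calculation displayed in Theorem \ref{thm:CI_K} — conditioning on $\{P_{(n)}\neq P_{(1)}\}$, rewriting the coverage event as $\{R_K\in \tilde U(\mathbf{\tilde P})\}$, and invoking the Portmanteau theorem under Assumption (a) — transfers verbatim to give $P(\mu_K\in U(\mathbf P))\to P_\xi(Y_\mu\in \tilde U(\mathbf{\tilde X}))$.

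For Part 2 the same mechanics apply. Lemma \ref{lem:joint_first} together with Lemma \ref{lem:cts} and continuous mapping gives $\{(P_{(n)}-P_{(1)})/a_K,\mathbf{\tilde P}\}\overset{d}{\to}\{X_{(n)}-X_{(1)},\mathbf{\tilde X}\}$, and a further continuous mapping step under Assumption (c) delivers the weak limit of the product $((P_{(n)}-P_{(1)})/a_K)\lg(\tilde U(\mathbf{\tilde P}))$. To promote this to convergence in $L^1$, I would imitate \eqref{eq:CI_muK9}–\eqref{eq:CI_muK10}: bound $|P_{(n)}-P_{(1)}|^{1+\delta}\leq (2n)^{1+\delta}\sum_{\ell=1}^{n}|P_\ell-b_K|^{1+\delta}$, combine with Assumption (b) to absorb the factor $|\lg \tilde U(\cdot)|^{1+\delta}$, and apply the Jensen moment bound from the first paragraph to secure uniform integrability. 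The final calculation then proceeds exactly as in Theorem \ref{thm:CI_K}. I anticipate that the main obstacle is precisely this uniform-integrability step: unlike the second-price case where $P_j$ is literally the second-order statistic and Lemma \ref{lem:finite_muK2} applies without modification, in the first-price setting $P_j$ is only an implicit nonlinear functional of $V_{(1),j}$, so the identity $P_j=E[V_{(2),j}\mid V_{(1),j}]$ together with conditional Jensen is exactly what allows the order-statistic moment bounds from \citet[Theorem 5.3.1]{dehaan2006book} to be transferred to $P_j$ and to $V_{(1),j}-P_j$.
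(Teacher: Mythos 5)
Your proposal is correct and follows essentially the same route as the paper, whose proof of Theorem \ref{thm:CI_K_fp} simply repeats the steps of Theorem \ref{thm:CI_K} with $(P_j-b_K)/a_K$ replaced by $L_{K_j}((V_{(1),j}-b_K)/a_K)$ and the limiting objects $Z_j$, $\mathbf{\tilde{Z}}$ replaced by $X_j$, $\mathbf{\tilde{X}}$ via Lemmas \ref{lem:joint_first} and \ref{lem:densityFirst}. Your conditional-Jensen step based on the identity $P_j=E[V_{(2),j}\mid V_{(1),j}]$ is a valid way to transfer the moment bounds of Lemma \ref{lem:finite_muK2} to the first-price transaction price (it matches the conditional distribution of $V_{(2),j}$ given $V_{(1),j}$ used in that lemma's proof), and it makes explicit a uniform-integrability detail that the paper's own proof leaves implicit.
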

\begin{proof}
This proof follows from the same arguments used to prove Theorem \ref{thm:CI_K}. The main difference between the proofs is that we replace $\{(P_j-b_K)/a_K:j=1,\dots,N\}$ with $\{L_{K_j}((V_{(1),j}-b_K)/a_K):j=1,\dots,N\}$ with $L_{K_j}$ as in \eqref{eq:LKdefn} (instead of $\{(V_{(2),j}-b_K)/a_K:j=1,\dots,N\}$ as in second-price auctions). Then, the vectors $\{Z_j:j=1,\dots,n\}$ and $\tilde{\bf Z}=\{\tilde{Z}_{j}:j=1,\dots ,N\}$ are replaced by $\{X_j:j=1,\dots,N\}$ and $\mathbf{\tilde{X}}=\{\tilde{X}_{j}:j=1,\dots ,N\}$, as in \eqref{eq:aux_RV} and \eqref{eq:Xtilde}, respectively. With these changes in place, the results follow from repeating the steps used to prove Theorem \ref{thm:CI_K}.
\end{proof}

\begin{theorem}\label{thm:CI_pi_K_fp}
Assume \eqref{eq:DomainOfAttraction} holds, and that for some $\varepsilon >0$ with $(1+\varepsilon)\xi<1$, $E[|V_{i,j}|^{1+\varepsilon}]<\infty $ for all $i=1\dots,K_j$ in auction $j=1,\dots,N$. Finally, assume that the CI for $\pi _{K}$, $U(\mathbf{P})$, is as in \eqref{eq:CI_piK_defn2_fp} with $\tilde{U}:\Sigma\to \mathcal{P}(\mathbb{R}) $ that satisfies the following conditions:
\begin{enumerate}[(a)]
\item $P_{\xi}(\{Y_{\pi},\mathbf{\tilde{X}}\}\in \partial \{(y,h)\in \mathbb{R} \times \Sigma:y\in \tilde{U}(h)\})=0$, where $\partial A$ denotes the boundary of $A$.
\item $\lg (\tilde{U}(h))<\infty $ for any $h\in \Sigma$, where $\lg (A)$ denotes the length of $A$.
\item For any sequence $\{h_{\ell}\in \Sigma\}_{\ell\in \mathbb{N}}$ with $ h_{\ell}\to h\in \Sigma$, $\lg (\tilde{U} (h_{\ell}))\to \lg (\tilde{U}(h))$.
\end{enumerate}
Then, as $K\to \infty $,
\begin{enumerate}
\item $P(\pi _{K}\in U(\mathbf{P}))~\to ~P_{\xi }(Y_{\pi }\in \tilde{ U}(\mathbf{\tilde{X}}))$,
\item $E[\lg (U(\mathbf{P}))]/a_{K}~\to ~E_{\xi }[\kappa _{\xi }( \mathbf{\tilde{X}})\lg (\tilde{U}(\mathbf{\tilde{X}}))]$.
\end{enumerate}
\end{theorem}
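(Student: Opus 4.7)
The plan is to mimic the proof of Theorem \ref{thm:CI_pi_K} (the second-price analog) while carrying out the first-price substitutions already developed in the proof of Theorem \ref{thm:CI_K_fp}. That is, everywhere the second-price argument used $(V_{(2),j}-b_K)/a_K$, I will instead work with $(P_j - b_K)/a_K$ under the representation
\[
\tfrac{P_j - b_K}{a_K} ~=~ L_{K_j-1}\big(\tfrac{V_{(1),j}-b_{K_j-1}}{a_{K_j-1}}\big)\tfrac{a_{K_j-1}}{a_K} + \tfrac{b_{K_j-1}-b_K}{a_K},
\]
derived in the proof of Lemma \ref{lem:joint_first}. Combining this identity with Lemma \ref{lem:integralConv}, Lemma \ref{lem:joint_first}, and the extended continuous mapping theorem, the key input is the joint convergence
\[
\big\{\tfrac{P_j - b_K}{a_K}: j=1,\dots,n\big\} ~\overset{d}{\to}~ \{X_j: j=1,\dots,n\},
\]
with $X_j$ as in \eqref{eq:aux_RV}.

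For Part 1, I would first promote the previous convergence to a joint convergence of $((\pi_K - b_K)/a_K, \{(P_j-b_K)/a_K\}_j)$ to $(E[X_1], \{X_j\}_j)$. This requires: (i) taking expectations in \eqref{eq:seller2}, which produces $(\pi_K - b_K)/a_K = n^{-1}\sum_j E[(P_j - b_K)/a_K]$; and (ii) checking uniform integrability of $(P_j - b_K)/a_K$. The latter is the natural first-price analog of \eqref{eq:CI_muK2p5}: since $v_L \leq P_j \leq V_{(1),j}$, we have $|P_j - b_K|^{1+\delta} \leq 2^{\delta}(|V_{(1),j}-b_K|^{1+\delta} + |v_L - b_K|^{1+\delta})$, and Lemma \ref{lem:finite_muK2} together with the asymptotic boundedness $(v_L - b_K)/a_K \to H_\xi(\infty)$ controls both terms for $K$ large. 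Having established this, define
\[
R_K ~\equiv~ \bigg\{\begin{array}{cc}(\pi_K - P_{(1)})/(P_{(n)} - P_{(1)}) & \text{if } P_{(n)} \neq P_{(1)}, \\ 0 & \text{if } P_{(n)} = P_{(1)},\end{array}
\]
and argue, via Lemma \ref{lem:cts} and the continuous mapping theorem, that $(R_K, \tilde{\mathbf{P}}) \overset{d}{\to} (Y_\pi, \tilde{\mathbf{X}})$. Condition (a) and the Portmanteau Theorem then yield $P(\pi_K \in U(\mathbf{P})) \to P_\xi(Y_\pi \in \tilde{U}(\tilde{\mathbf{X}}))$, exactly as in the second-price case.

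For Part 2, the argument is the same reweighting-by-scale derivation used in Theorem \ref{thm:CI_K}, Part 2: joint convergence of $((P_{(n)}-P_{(1)})/a_K, \tilde{\mathbf{P}}) \overset{d}{\to} (X_{(n)}-X_{(1)}, \tilde{\mathbf{X}})$ combined with condition (c) and the continuous mapping theorem gives the distributional limit of $((P_{(n)}-P_{(1)})/a_K)\,\lg(\tilde{U}(\tilde{\mathbf{P}}))$. Passing to expectations requires the first-price analog of \eqref{eq:CI_muK10}, which reduces via the bound $|P_{(n)} - P_{(1)}|^{1+\delta} \leq (2n)^{1+\delta}\sum_l |P_l - b_K|^{1+\delta}$ to the integrability already established for $(P_l - b_K)/a_K$; condition (b) handles the $\lg(\tilde{U}(\cdot))$ factor. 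Conditioning on $\tilde{\mathbf{X}}$ then produces the stated $\kappa_\xi(\tilde{\mathbf{X}})$ factor.

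The main obstacle I anticipate is verifying the $(1+\delta)$-moment bound on $(P_j - b_K)/a_K$ uniformly in $K$ for first-price auctions. Unlike the second-price case, where $P_j = V_{(2),j}$ and Lemma \ref{lem:finite_muK2} applies directly, in first-price auctions $P_j$ depends non-linearly on the whole distribution $F_V$ through the integral in \eqref{eq:price_first}. The sandwich bound $v_L \leq P_j \leq V_{(1),j}$ is the clean way around this, but care is needed when $\xi < 0$ (bounded support, so $b_K \to v_H$ and $(v_L - b_K)/a_K$ diverges on the correct scale) versus $\xi \geq 0$; in each case the von Mises normalization matches $a_K$ so that both endpoints contribute only $O(1)$ moments. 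Once this moment bound is in hand, all remaining steps are cosmetic modifications of the proofs of Theorems \ref{thm:CI_pi_K} and \ref{thm:CI_K_fp}.
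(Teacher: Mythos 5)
Your overall route is the paper's route: the paper's own proof of this theorem is a one-line pointer ("same arguments as in Theorems \ref{thm:CI_K} and \ref{thm:CI_K_fp}"), and your proposal is exactly the reconstruction that pointer intends --- substitute $(P_j-b_K)/a_K$ for $(V_{(2),j}-b_K)/a_K$ via the representation \eqref{eq:first1}, invoke Lemmas \ref{lem:cts}, \ref{lem:integralConv}, and \ref{lem:joint_first}, replace $Y_\mu,\tilde{\mathbf Z}$ by $Y_\pi,\tilde{\mathbf X}$, and finish with Portmanteau for Part 1 and the scale-reweighting argument for Part 2. You also correctly identify the one genuinely non-cosmetic step, namely the uniform $(1+\delta)$-moment bound on $(P_j-b_K)/a_K$ needed to pass from convergence in distribution to convergence of expectations (for $\pi_K/a_K$ in Part 1 and for $E[\lg(U(\mathbf P))]/a_K$ in Part 2).

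However, the specific bound you propose for that step does not work. From $v_L\le P_j\le V_{(1),j}$ you write $|P_j-b_K|^{1+\delta}\le 2^{\delta}(|V_{(1),j}-b_K|^{1+\delta}+|v_L-b_K|^{1+\delta})$ and assert that $(v_L-b_K)/a_K$ contributes only $O(1)$. That assertion is false outside the Fr\'echet case: for $\xi<0$ one has $b_K\to v_H$ and $a_K\propto K^{\xi}\to 0$, so $(v_L-b_K)/a_K\to-\infty$ (e.g., for $U(0,3)$, $(v_L-b_K)/a_K=-K$); and even for $\xi=0$ with, say, exponential valuations, $(v_L-b_K)/a_K=-\log K\to-\infty$. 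So the sandwich bound's lower endpoint destroys the uniform moment control precisely in the regimes the theorem must cover. The fix is to decompose $P_j-b_K=(V_{(1),j}-b_K)-(V_{(1),j}-P_j)$ and control the nonnegative shading term $(V_{(1),j}-P_j)/a_K=\int_{v_L}^{V_{(1),j}}(F_V(u)/F_V(V_{(1),j}))^{K_j-1}du/a_K$ directly: it is a conditional expectation of $(V_{(1),j}-\check V_{(1),j})/a_K$ given $\{\check V_{(1),j}<V_{(1),j}\}$ and $V_{(1),j}$, so Jensen's inequality reduces its $(1+\delta)$-moment to the moments of the top two order statistics, which are exactly what Lemma \ref{lem:finite_muK2} bounds; alternatively one integrates the explicit conditional density as in the proof of that lemma. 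With that replacement the rest of your argument goes through as written.
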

\begin{proof}
This proof follows from the same arguments as in Theorems \ref{thm:CI_K} and \ref{thm:CI_K_fp}.
\end{proof}


\begin{theorem}\label{thm:compTest_fp}
Assume \eqref{eq:DomainOfAttraction} holds. In the hypothesis testing problem in \eqref{eq:HT0_fp}, the test defined by \eqref{eq:compTest_fp} satisfies the following properties:
\begin{enumerate}
\item It is asymptotically valid and level $\alpha $, i.e., $\lim_{K\to \infty }E_{\xi_0 }[ \varphi _{K}^{\ast }( \mathbf{P}) ] ~= ~\alpha .$
\item It is asymptotically efficient.
\end{enumerate}
\end{theorem}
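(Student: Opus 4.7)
The plan is to mirror the proof of Theorem \ref{thm:compTest} (and hence Theorem \ref{thm:simpleTest}) by invoking \citet[Theorem 1]{muller:2011}, with the single substantive change being that the limiting experiment is now governed by the density of $\mathbf{\tilde{X}}$ supplied by Lemma \ref{lem:densityFirst} rather than the density of $\mathbf{\tilde{Z}}$ from Lemma \ref{lem:densitySecond}. Concretely, I would set up the correspondence between our objects and those in \citet{muller:2011} exactly as in the proof of Theorem \ref{thm:simpleTest}: the class of DGPs satisfying \eqref{eq:DomainOfAttraction} plays the role of $\mathcal{M}$, $K\to\infty$ the role of $T\to\infty$, $\{\xi_0\}$ plays the role of $\Theta_0$, $\Xi_1$ plays the role of $\Theta_1$, $\mathbf{P}$ plays the role of $Y_T$, the self-normalizing map in \eqref{eq:P*fp} plays the role of $h_T$, $\mathbf{\tilde{P}}$ plays the role of $X_T$, and the limit random vector $\mathbf{\tilde{X}}$ plays the role of $X$.

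First, I would verify that the limit test
\[
\tilde{\varphi}^{\ast}(\mathbf{\tilde{X}}) ~\equiv~ 1\Big[\int_{\Xi_1} f_{\mathbf{\tilde{X}}|\xi}(\mathbf{\tilde{X}}) dW(\xi) \big/ f_{\mathbf{\tilde{X}}|\xi_0}(\mathbf{\tilde{X}}) > q(\xi_0,W,\alpha)\Big]
\]
is a level-$\alpha$ test in the limit experiment. This follows directly from the definition of the critical value $q(\xi_0,W,\alpha)$ as the $(1-\alpha)$-quantile of the likelihood ratio under $\xi_0$, and the fact that $f_{\mathbf{\tilde{X}}|\xi_0}$ is a bona fide density on $\Sigma$ (no atoms) so that the critical region has exact probability $\alpha$. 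By the Neyman--Pearson lemma applied to the simple-vs-simple problem $H_0:\xi=\xi_0$ against the mixture alternative with mixing measure $W$, the test $\tilde{\varphi}^{\ast}$ maximizes the weighted average power $\int_{\Xi_1} E_{\xi}[\tilde{\varphi}(\mathbf{\tilde{X}})] dW(\xi)$ among all level-$\alpha$ tests in the limit experiment.

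Second, I would transfer these properties to the finite-$K$ test $\varphi_K^{\ast}(\mathbf{P})$ via \citet[Theorem 1]{muller:2011}. The two hypotheses of that theorem are (i) the weak convergence $\mathbf{\tilde{P}} \overset{d}{\to} \mathbf{\tilde{X}}$ under every $\xi\in\Xi$, supplied by Lemma \ref{lem:densityFirst}, and (ii) continuity of $\tilde{\varphi}^{\ast}$ away from a $f_{\mathbf{\tilde{X}}|\xi_0}$-null set. Continuity reduces to showing that the likelihood ratio $\int_{\Xi_1} f_{\mathbf{\tilde{X}}|\xi}(h) dW(\xi)/f_{\mathbf{\tilde{X}}|\xi_0}(h)$ is continuous in $h\in\Sigma$ except on a set of Lebesgue (and thus $f_{\mathbf{\tilde{X}}|\xi_0}$-null) measure zero, which follows from the implicit representation of $f_{\mathbf{\tilde{X}}|\xi}$ through \eqref{eq:aux_RV} together with continuity and absolute continuity of $G_\xi$ in $\xi$. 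With these conditions in place, \citet[Theorem 1]{muller:2011} yields both $\lim_{K\to\infty} E_{\xi_0}[\varphi_K^{\ast}(\mathbf{P})]=\alpha$ and the asymptotic weighted-average-power optimality statement within the class of asymptotically valid and equivariant tests.

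The main obstacle I anticipate is the continuity/regularity verification for the limit density $f_{\mathbf{\tilde{X}}|\xi}$: unlike the second-price case where Lemma \ref{lem:densitySecond} provides an explicit and manifestly continuous density, here $f_{\mathbf{\tilde{X}}|\xi}$ is only characterized implicitly through the transformation in \eqref{eq:aux_RV} and the sort-and-normalize operation in \eqref{eq:Xtilde}. To handle this, I would argue that the map from $(E_{1,1},\dots,E_{1,n})$ to $\mathbf{\tilde{X}}$ is a smooth, injective (after conditioning on the order) transformation with nonzero Jacobian on the open set where $X_{(n)}>X_{(1)}$, so that $\mathbf{\tilde{X}}$ admits a continuous density on $\mathrm{int}(\Sigma)$; combined with $P(X_{(n)}=X_{(1)})=0$ (analog of Lemma \ref{lem:cts}) and continuity of $G_\xi$ in $\xi$ (which transfers to continuity of the integrand under $W$ by dominated convergence and the compactness of $\Xi_1$), this delivers the required continuity of the critical region outside a null set.
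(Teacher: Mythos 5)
Your proposal is correct and follows essentially the same route as the paper, whose proof of this theorem is a one-line reference to the arguments of Theorems \ref{thm:simpleTest} and \ref{thm:compTest} (i.e., the same mapping into \citet[Theorem 1]{muller:2011}, with Lemma \ref{lem:densityFirst} replacing Lemma \ref{lem:densitySecond}). You in fact supply more detail than the paper does on the one genuinely new point in the first-price case --- verifying that the implicitly defined limit density $f_{\mathbf{\tilde{X}}|\xi}$ is continuous off a null set --- and your Jacobian/no-ties argument for that step is sound.
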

\begin{proof}
This proof follows from the same arguments as in Theorems \ref{thm:simpleTest} and \ref{thm:compTest}.
\end{proof}

\begin{proof}[Proof of Lemma \ref{lem:sp-rp}] 
The statement imposes $\xi<1$. We divide the remainder of the argument according to the sign of $\xi$. For convenience, define $H_{1} = H_{\xi}(E_{1,j})$ and $H_{2} = H_{\xi}(E_{1,j}+E_{2,j})$.

 \underline{Case 1:} $\xi \in (0,1)$. In that case,
 \begin{align}
 &1+\xi \pi( \gamma)\notag \\
 &\overset{(1)}{=}~( 1+\xi \gamma) P( H_{2}\leq \gamma\leq H_{1}) +E[ 1+\xi H_{2}|\gamma\leq H_{2}] P(\gamma\leq H_{2})+ (1+\xi v_0)P( H_{1}<\gamma) \notag \\
& \overset{(2)}{=}~\tilde{\gamma}P( \tilde{H}_{2}\leq \tilde{\gamma}\leq \tilde{H} _{1}) +E[ \tilde{H}_{2}|\tilde{\gamma}\leq \tilde{H}_{2}] P( \tilde{\gamma}\leq \tilde{H}_{2})+ \tilde{v}_0 P( \tilde{H}_{1}<\tilde{\gamma} ) \notag \\
& \overset{(3)}{=}~\exp ( -\tilde{r}^{-1/\xi }) \tilde{r}^{1-1/\xi }+\Gamma ( 2-\xi ) -\Gamma ( 2-\xi ,\tilde{r}^{-1/\xi }) +\tilde{v}_{0}\exp ( -\tilde{r}^{-1/\xi }) \notag \\
& \overset{(4)}{=}~
\left\{\begin{array}{c}
\exp ( -( 1+\xi r) ^{-1/\xi }) ( 1+\xi r) ^{1-1/\xi }+\Gamma ( 2-\xi ) -\Gamma ( 2-\xi ,( 1+\xi r) ^{-1/\xi })\\
+( 1+\xi v_{0}) \exp ( -( 1+\xi r) ^{-1/\xi })  
\end{array}\right\}
,\label{eq:L32_1}
\end{align}
where (1) holds by definition of $\pi(\gamma) $, (2) by using $ \tilde{\gamma}=1+\xi \gamma$, $\tilde{v}_0 = (1+\xi v_0)$ and $\tilde{H}_{j}=1+\xi H_{j}$ for $j=1,2$, (3) by computation from \eqref{eq:pdf_e1e2}, and uses $ \Gamma ( a,x) \equiv \int_{x}^{\infty }t^{a-1}\exp ( -t) dt$ to denote the incomplete Gamma function and $\Gamma ( a) =\Gamma ( a,0) $ to denote the complete Gamma function, and (4) holds by replacing back $\tilde{\gamma}=1+\xi \gamma$. The computation that delivers (3) is greatly simplified by the transformation of the random variables $\tilde{H}_{j}=1+\xi H_{j}$ for $j=1,2$, whose joint and marginal PDFs are
\begin{align*}
f_{\tilde{H}_{1},\tilde{H}_{2}}( x_{1},x_{2}) &=x_{1}^{-1/\xi -1}x_{2}^{-1/\xi -1}\exp ( -x_{2}^{-1/\xi }) \xi ^{-2}1[ x_{1}\geq x_{2}\geq 0] , \\
f_{\tilde{H}_{j}}( x) &=x^{-j/\xi -1}\exp ( -x^{-1/\xi }) \xi ^{-1}1[ x\geq 0] ~~~~~\text{for}~ j=1,2.
\end{align*}

Since $\xi>0$, maximizing $\pi(\gamma)$ is equivalent to maximizing $1+\xi \pi(\gamma)$. The corresponding first and second order conditions imply Lemma \ref{lem:sp-rp}.

\underline{Case 2:} $\xi <0$. Despite the change in sign, an analogous derivation shows that \eqref{eq:L32_1} also holds. Since $\xi<0$, maximizing $\pi(\gamma)$ is equivalent to minimizing $1+\xi \pi(\gamma)$. The corresponding first and second order conditions then imply Lemma \ref{lem:sp-rp}.

\underline{Case 3:} $\xi =0$. In that case, 
\begin{align*}
\pi( \gamma) & \overset{(1)}{=}\gamma P( H_{2}\leq \gamma\leq H_{1}) +E [ H_{2}|\gamma\leq H_{2}] P(\gamma\leq H_{2}) \\
& \overset{(2)}{=}\gamma\exp ( -\gamma) \exp ( -\exp ( -\gamma) ) +\int_{\gamma}^{\infty }t\exp ( -2t) \exp ( -\exp ( -t) ) dt ,
\end{align*}
where (1) holds by definition of $\pi( \gamma) $ and (2) by computation from \eqref{eq:pdf_e1e2}. From here, the first and second order conditions imply Lemma \ref{lem:sp-rp}. 
\end{proof}

\subsection{Computational details}\label{sec:appendix-computation}

\subsubsection{Second-price auctions}\label{sec:appendix-computation1}


This section provides computational details for objects introduced in Section \ref{sec:sp}. Throughout this section, we use $\mathbf{\tilde{z}}=(z_{1},\dots ,z_{N})\in \Sigma $ and $N=n-2$.

First, recall that $f_{\mathbf{\tilde{Z}}|\xi }(\mathbf{\tilde{z}})$ is as computed in \eqref{eq:densitySecond}. 
For $\mathbf{\tilde{z}} \in \Sigma$, we can compute:
\begin{equation}
    \kappa _{\xi }(\tilde{\mathbf{z}})f_{\tilde{\mathbf{Z}}|\xi }(\tilde{\mathbf{
z}})=n!(\Gamma (2n-\xi ))\int_{0}^{b(\xi )}s^{n-1}\exp \bigg(
\begin{array}{c}
(\xi -2n)\log \big(\sum_{j=1}^{n}(1+z_{j}\xi s)^{-1/\xi }\big)\\ -(1+\tfrac{2}{ \xi })\big(\sum_{j=1}^{n}\log (1+z_{j}\xi s)\big)
\end{array}
\bigg)ds,
\label{eq:f_aux1}
\end{equation}
where $\kappa _{\xi }(\mathbf{\tilde{z}})=E[Z_{(n)}-Z_{(1)}|\mathbf{\tilde{Z}}=\mathbf{\tilde{z}}]$ and $b(\xi )=-1/\xi $ for $\xi <0$, and $b(\xi )=\infty $ otherwise. For $(y,\mathbf{\tilde{z}}) \in \mathbb{R}_{+} \times \Sigma$, calculations yield:
\begin{align}
     &f_{Y_{\mu },\tilde{\mathbf{Z}}|\xi }(y,\tilde{\mathbf{z}})=\notag\\
     &\tfrac{n!(\Gamma (1-\xi))^{n-1}}{y^{n}}\int_{a_{\mu }(\xi )}^{c_{\mu }(\xi )}\exp \bigg(
\begin{array}{c}
-\sum_{j=1}^{n}(1+\xi s+(\Gamma (1-\xi ))\xi {z_{j}}/{y})^{-1/\xi } \\ -(1+\tfrac{2}{\xi })\sum_{j=1}^{n}\log (1+\xi s+(\Gamma (1-\xi ))\xi z_{j}/{y} )
\end{array}
\bigg)ds,
\label{eq:f_aux2}   
\end{align}
where 
$a_{\mu }(\xi )$ and $c_{\mu }(\xi )$ are defined such that for all $s\in (a_{\mu }(\xi ),c_{\mu }(\xi ))$, we have $1+\xi s+(\Gamma (1-\xi ))\xi/y>0$ and $1+\xi s>0$. For $(y,\mathbf{\tilde{z}}) \in \mathbb{R} \times \Sigma$, more calculations yield:
\begin{align}
    &f_{Y_{\pi },\tilde{\mathbf{Z}}|\xi }(y,\tilde{\mathbf{z}})=\notag\\
    &\tfrac{n!}{|y|^{n}}\int_{a_{\pi }(\xi )}^{c_{\pi }(\xi )}{|\pi-s|}^{n-1}\exp \bigg(
\begin{array}{c}
-\sum_{j=1}^{n}(1+\xi (s+{z_{j}}(\pi -s)/y)^{-1/\xi } \\ 
-(1+\tfrac{2}{\xi })\sum_{j=1}^{n}\log (1+\xi (s+z_{j}(\pi -s)/y)
\end{array}
\bigg)ds,
\label{eq:f_aux3}
\end{align}
where 
$a_{\mu }(\xi )$ and $c_{\mu }(\xi )$ are such that for all $s\in
(a_{\mu }(\xi ),c_{\mu }(\xi ))$, we have $1+\xi (s+(\pi -s)/y>0$, $1+\xi s>0
$, and $(\pi -s)/y>0$. The integrals in \eqref{eq:f_aux1}, \eqref{eq:f_aux2}, and \eqref{eq:f_aux3} can be approximated by Gaussian quadrature.

The CIs in Section \ref{sec:sp} require Lagrange multipliers $\Lambda$ that we compute using the algorithm developed by \citet{elliott2015}. See \citet{mullerwang2017} for another application of this algorithm. We now provide a detailed description of how we implemented this algorithm:
\begin{enumerate}[1)] 
\item Discretize $\Xi$ into a fine grid $\Xi _{M}\equiv \{\xi _{1},\xi _{2},\ldots ,\xi _{M}\}$ between $\xi _{1} = \inf\{\Xi\}$ and $\xi _{M+1} = \sup\{\Xi\}$ (we use $M = 50$ uniformly located points between $\xi _{1}$ and $\xi _{M+1}$). Set $s=1$, and define an arbitrary set of initial positive weights $\lambda^{(s)} = \{\lambda^{(s)}_{m}:m=1,\dots,M\}$ over $ \Xi _{M}$ (we use a uniform weights, i.e., $\lambda^{(1)} = \{1/M,\ldots ,1/M\}$).

\item For each $m=1,\dots ,M$, simulate a large number $B$ of $n$ i.i.d.\ draws the EV distribution with parameter $\xi _{m}$ (we use $B=10,000$). For each $m=1,\dots ,M$ and $b=1,\dots ,B$, the samples is denoted by $\mathbf{Z} _{\xi _{m}}(b) =\{Z_{\xi _{m},1}(b) ,\dots ,Z_{\xi _{m},n}(b) \}$. By applying \eqref{eq:ZsortedScaled} to each sample, we get $N=n-2$ sorted and normalized draws, denoted by $\tilde{\mathbf{Z}}_{\xi _{m}}(b) =\{\tilde{Z}_{\xi _{m},1}(b) ,\dots ,\tilde{Z}_{\xi _{m},N}(b) \}\in \Sigma $.

\item For each $m=1,\dots ,M$, use the random draws in step 2) to approximate the limiting coverage probabilities for parameter $\xi _{m}$ in the following manner:
\begin{enumerate}
\item For the winner's expected utility, we approximate ${P}_{\xi _{m}}({ \Gamma (1-\xi _{m})}/({Z_{(n)}-Z_{(1)}})\in U(\tilde{\mathbf{Z}}))$ with  
\[
\mathbb{\hat{P}}_{m}~\equiv ~\tfrac{1}{B}\sum_{b=1}^{B}1\big(\tfrac{\Gamma (1-\xi _{m})}{Z_{\xi _{m},(1)}(b) -Z_{\xi _{m},(n)}( b) }\in U(\tilde{\mathbf{Z}}_{\xi _{m}}(b) )\big),
\]
where $\Gamma $ is the standard Gamma function and $U(\tilde{\mathbf{Z}} _{\xi _{m}}(b) )$ is as in \eqref{eq:sol_program2}, involving integrals of \eqref{eq:f_aux1} and \eqref{eq:f_aux2}.

\item For the seller's expected revenue, we approximate ${P}_{\xi _{m}}(({ \pi (\xi _{m})-Z_{(1)}})/({Z_{(n)}-Z_{(1)}})\in U(\tilde{\mathbf{Z}})))$, with 
\[
\mathbb{\hat{P}}_{m}~\equiv ~\tfrac{1}{B}\sum_{b=1}^{B}1\big(\tfrac{\pi (\xi_{m})-Z_{\xi _{m},(n)}(b) }{Z_{\xi _{m},(1)}(b) -Z_{\xi _{m},(n)}(b) }\in U(\tilde{\mathbf{Z}}_{\xi _{m}}(b))
\big),
\]
where $\pi (\xi _{m})=(\Gamma (2-\xi _{m})-1)/\xi _{m}$ with $\Gamma $ is the standard Gamma function if $\xi _{m}\not=0$, and $\pi (\xi _{m})=-1+\bar{ \gamma}$ if $\xi _{m}=0$ with $\bar{\gamma}\approx 0.577$ equal to the Euler's constant, and $U(\tilde{\mathbf{Z}}_{\xi _{m}}(b) )$ is as in \eqref{eq:Lambda-rp}, involving integrals of \eqref{eq:f_aux1} and \eqref{eq:f_aux2}.
\end{enumerate}

\item Update the weights by setting $\lambda _{m}^{(s+1)}=\lambda _{m}^{(s)}+\varepsilon ((1-\mathbb{\hat{P}}_{m})-\alpha )$ for all $ m=1,\dots ,M$, where $\alpha $ is the significance level and $\varepsilon >0$ is a small step length (we use $\varepsilon =0.05$). Intuitively, the weight on $\xi _{m}$ is decreased or increased if the CI has overcoverage and undercoverage, respectively.

\item Repeat steps 3)-4) a large number of times $S$ (we use $S=2,000$), to get $\lambda^{(S)}=\{\lambda _{m}^{(S)}:m=1,\dots ,M\}$. The Lagrange multipliers  $\Lambda$ is obtained by interpolating $\lambda^{(S)}$ from $\Xi _{M}$ to $\Xi$.


\end{enumerate}
If the algorithm's tuning parameters are appropriately chosen, the Lagrange multipliers generated by it yield a CI that (a) approximately minimizes asymptotic weighted length (due to the reliance on \eqref{eq:sol_program2} or \eqref{eq:Lambda-rp}), and (b) achieves approximately size control for all $\xi \in \Xi _{M}$, which should extend to all $\xi \in \Xi$ by continuity. Finally, it is worth noting that these Lagrange multipliers need to be determined only once for a given $n$ and set $\Xi$.
The tables of the Lagrange multipliers and the corresponding MATLAB code are available on our \href{https://drive.google.com/file/d/1XGZjG-FaTKMqRK_aQ08EQ0TlG7iAbsHK/view}{website}.

\subsubsection{First-price auctions}

We now consider analogous expressions for first-price auctions. Let $j=1,\dots, N$ denote an arbitrary first-price auction. Relative to the previous section, the main difficulty here is that the PDF of $X_{j}$ in \eqref{eq:aux_RV} does not generally have a closed-form expression and, thus, we do not have an analog of \eqref{eq:densitySecond} for first-price auctions.

To explain this issue, we now provide an implicit formula for the PDF of $X_{j}$ for any arbitrary auction $j=1,\dots, N$. Consider the following argument for $\xi <0$: 
\begin{align}
X_{j}& ~\overset{(1)}{=}~H_{1}-\tfrac{1}{G_{\xi }(H_{1})}\int_{-\infty }^{H_{1}}G_{\xi }(h)dh \nonumber \\
&~\overset{(2)}{=}~H_{1}-\exp ( ( 1+\xi H_{1}) ^{-1/\xi }) \int_{-\infty }^{H_{1}}\exp ( -( 1+\xi h) ^{-1/\xi }) dh \nonumber \\
& ~\overset{(3)}{=}~
H_{1}-\exp ((1+\xi H_{1})^{-1/\xi })\Gamma (-\xi ,(1+\xi H_{1})^{-1/\xi }) \nonumber \\
& ~\overset{(4)}{=}~[{( E_{1,j}) ^{-\xi }-1}]/{\xi }-\exp (E_{1,j})\Gamma (-\xi ,E_{1,j}) \nonumber \\
& ~\overset{(5)}{=}~[{\exp (E_{1,j})\Gamma (1-\xi ,E_{1,j})-1}]/{\xi },\label{eq:Xformula}
\end{align}
where (1) holds by \eqref{eq:aux_RV} and denoting $H_{1}=H_{\xi }(E_{1,j})$ with $\{E_{1,j}:~j=1,\dots,n\}$ and $H_{\xi }$ as specified in Lemma \ref{lem:joint}, (2) by \eqref{eq:G}, (3) by the change of variables $u=( 1+\xi h) ^{-1/\xi }$ and using $\Gamma (a,x)=\int_{x}^{\infty }u^{a-1}\exp ( -u) du$ to denote the upper incomplete Gamma function, (4) by $E_{1,j}=H_{\xi }^{-1}( H_{1}) $, and (5) by $\Gamma (1+a,x)=a\Gamma (a,x)+x^{a}\exp ( -x) $ 
applied to $a=-\xi $ and $x=E_{1,j}$. By a similar derivation for $\xi>0$ and $\xi=0$, we get $X_{j}~=~e_{\xi }(E_{1,j})$ with
\begin{equation}
    e_{\xi }(x)~\equiv~ \bigg\{
\begin{array}{ll}
[ \exp (x)\Gamma (1-\xi ,x)-1]/\xi& \text{ if }\xi \neq 0, \\
-\ln ( x) -\exp ( x) \Gamma (0,x)& \text{ if }\xi =0.
\end{array}
\label{eq:XdensityAppendix}
\end{equation}
We numerically verified that $e_{\xi }( x) $ is decreasing in $x$ for all $\xi $, and so the PDF of $X_j$ can be expressed as follows: 
\begin{equation}
f_{X_j|\xi }(x)~=~-\tfrac{\partial e_{\xi }^{-1}( x) }{\partial x}\exp ( -e_{\xi }^{-1}( x) ) . \label{eq:fx}
\end{equation}
We can use \eqref{eq:fx} to derive an implicit expression for the joint PDF of $\mathbf{\tilde{X}}=\{\tilde{X}_{j}:j=1,\dots ,N\}\in \Sigma $ with $\tilde{X}_{j}$ as in \eqref{eq:Xtilde}. The main difficulty relative to the second-price auction is that neither $ e_{\xi }$ in \eqref{eq:XdensityAppendix} nor its inverse has a closed-form expression, and so evaluating them repeatedly is computationally challenging. For this reason, we do not have a closed-form analog of \eqref{eq:densitySecond} for first-price auctions.

To deal with the aforementioned computational issues, we propose a numerical approximation based on a series expansion of the incomplete Gamma function in \citet[Page 263]{abramowitz/stegun:1964}: for $x$ sufficiently large,
\begin{equation}
\Gamma (1- \xi,x)~\approx~ x^{- \xi}\exp (-x)[ 1+\tfrac{(- \xi) }{x}+ \tfrac{(- \xi)(- \xi-1)}{x^{2}}+ \tfrac{(- \xi)(- \xi-1)(- \xi-2)}{x^{3}}+\dots ] . \label{eq:taylorAS64}
\end{equation}
A first-order approximation based on \eqref{eq:taylorAS64} gives $\ln (\exp (x)\Gamma (1-\xi ,x))\approx -\xi \ln x$. To allow for approximation errors, we propose the following equation:
\begin{equation}
\ln (\exp (x)\Gamma (1-\xi ,x))~\approx~ r_{2}( \xi ) -r_{1}( \xi ) \ln x, \label{eq:XdensityAppendix2}
\end{equation}
where $r_{1}( \xi )$ and $r_{2}( \xi ) $ are functions of $\xi $.\footnote{If the first-order approximation was exact, we would have $r_{1}( \xi ) = \xi$ and $r_{2}( \xi )=0$.} To find suitable values for these functions we fit an OLS regression of $\ln (\Gamma (1-\xi,x)\exp (x))$ on $\ln x$ and a constant for a grid of 50,000 equally-spaced values of $x$ between $10^{-6} $ and $1-10^{-6}$ quantiles of the exponential distribution. We use the OLS coefficients as our values for $ (-r_{1}( \xi ) ,r_{2}( \xi ) )$. By combining \eqref{eq:XdensityAppendix} and \eqref{eq:XdensityAppendix2}, we get
\begin{equation}
e_{\xi }(x)\approx \bigg\{ 
\begin{array}{ll}
\lbrack x^{-r_{1}( \xi ) }\exp (r_{2}( \xi ) )-1]/\xi & \text{ if }\xi \neq 0, \\
( r_{1}( 0) -1) \ln x-r_{2}( 0) & \text{ if }\xi =0.
\end{array}
\label{eq:e_xiAppendix1}
\end{equation}
By \eqref{eq:e_xiAppendix1} and using $r_{3}( \xi ) =\exp (r_{2}( \xi ) /r_{1}( \xi ) )$, we get
\begin{align}
e_{\xi }^{-1}(x)&\approx \bigg\{ 
\begin{array}{ll}
r_{3}( \xi ) (1+\xi x)^{-1/r_{1}( \xi ) } & \text{ if } \xi \neq 0, \\
\exp [(x+r_{2}( 0) )/( r_{1}( 0) -1) ] & \text{ if }\xi =0,
\end{array} \notag\\
\tfrac{\partial e_{\xi }^{-1}(x)}{\partial x}&\approx \bigg\{ 
\begin{array}{ll}
-(1+\xi x)^{-( 1/r_{1}( \xi ) +1) }\xi r_{3}( \xi ) /r_{1}( \xi ) & \text{ if }\xi \neq 0, \\
\exp [(x+r_{2}( 0) )/( r_{1}( 0) -1) ]/( r_{1}( 0) -1) & \text{ if }\xi =0.
\end{array} 
\label{eq:e_xiAppendix2}
\end{align}
By combining \eqref{eq:fx} and \eqref{eq:e_xiAppendix2}, we obtain an approximation of the joint distribution of $\tilde{\mathbf{X}}$ and related functions. Ignoring the case $\xi =0$, then for $\mathbf{\tilde{x}}\in \Sigma $ and $N=n-2$ we get
\begin{align}
    f_{\tilde{\mathbf{X}}|\xi }(\tilde{\mathbf{x}})=\tfrac{n!\Gamma (n)|r_{1}( \xi ) |}{|\xi |}\int_{0}^{b(\xi )}s^{n-2}\exp \bigg(
\begin{array}{c}
-n\ln (r_{3}( \xi ) \sum_{j=1}^{n}(1+\xi \tilde{x} _{j}s)^{-1/r_{1}( \xi ) }) \\
-( 1/r_{1}( \xi ) +1) \sum_{j=1}^{n}\ln (1+\xi \tilde{x }_{j}s)\\
+n\ln (|\xi r_{3}( \xi ) /r_{1}( \xi ) |)
\end{array}
\bigg) ds,
\label{eq:e_xiAppendix3}
\end{align}
where $b(\xi )=-1/\xi $ for $\xi <0$, and $b(\xi )=\infty $ otherwise. This expression is the analog of \eqref{eq:densitySecond} for first-price auctions. Analogously, we have
\begin{align}
    &\kappa _{\xi }(\tilde{\mathbf{x}})f_{\tilde{\mathbf{X}}|\xi }(\tilde{\mathbf{ x}})=\notag\\
    &\tfrac{n!|r_{1}( \xi ) |\Gamma (n-r_{1}( \xi ) )}{ |\xi |}\int_{0}^{b(\xi )}s^{n-1}\exp \Bigg(
\begin{array}{c}
(r_{1}( \xi ) -n)\ln (r_{3}( \xi ) \sum_{j=1}^{n}(1+\xi \tilde{x}_{j}s)^{-1/r_{1}( \xi ) }) \\
-( 1/r_{1}( \xi ) +1) \sum_{j=1}^{n}\ln (1+\xi \tilde{x }_{j}s)\\
+n\ln (|\xi r_{3}( \xi ) /r_{1}( \xi ) |)
\end{array}
\Bigg) ds,
\label{eq:e_xiAppendix4}
\end{align}
where $\kappa _{\xi }(\mathbf{x})=E[X_{( n) }-X_{( 1) }|\mathbf{\tilde{X}}=\mathbf{x}]$. For $(y,\mathbf{\tilde{x}})\in \mathbb{R} _{+}\times \Sigma $, we also have
\begin{align}
    &f_{Y_{\pi },\tilde{\mathbf{X}}|\xi }(y,\tilde{\mathbf{x}})=\notag\\
    &\tfrac{1}{|y|^{n}}\int_{a_{ \pi }(\xi )}^{c_{\pi }(\xi )}|\pi -s|^{n-1}\exp \Bigg(
\begin{array}{c}
-( 1/r_{1}( \xi ) +1) \sum_{j=1}^{n}\ln (1+\xi (s+ \tilde{x}_{j}(\pi -s)/y)) \\
-r_{3}( \xi ) \sum_{j=1}^{n}(1+\xi (s+\tilde{x}_{j}(\pi -s)/y))^{-1/r_{1}( \xi ) }\\
+n\ln (|\xi r_{3}( \xi ) /r_{1}( \xi ) |)
\end{array}
\Bigg) ds,
\label{eq:e_xiAppendix5}
\end{align}
where $a_{\pi }(\xi )$ and $c_{\pi }(\xi )$ are such that for all $s\in (a_{\pi }(\xi ),c_{\pi }(\xi ))$, we have $1+\xi (s+(\pi -s)/y)>0$, $1+\xi s>0$, and $(\pi -s)/y>0$. The integrals in \eqref{eq:e_xiAppendix3}, \eqref{eq:e_xiAppendix4}, and \eqref{eq:e_xiAppendix5} can be approximated by Gaussian quadrature. Given these expressions, we can use the algorithm in Section \ref{sec:appendix-computation1} to compute the CI.

\subsection{Additional Monte Carlo simulations}\label{sec:simulation-fp}

This section provides Monte Carlo simulations for first-price auctions. We first consider the problem of inference on the winner's expected utility using only transaction prices. The parameters of the simulations are as in Section \ref{sec:simulation}.

We consider three CIs for $\mu _{K}$ that are analogous to those used in Section \ref{sec:simulation}:
\begin{enumerate}[(i)]
    \item This is our proposed CI in Section \ref{sec:fp-supplus}. As in Section \ref{sec:simulation}, this method is implemented by constructing $U( \mathbf{P})$ in \eqref{eq:CI_muK_defn2_fp} with $\Xi =[ -1,0.5] $ and $W$ equal to the uniform distribution over this interval. As before, the validity of this CI is based on asymptotics as $K\to\infty$.
    \item A CI based on observing the transaction price and the highest valuation for each auction. These data are infeasible in empirical applications of first-price auctions (valuations are unobserved), but we take it as a benchmark for any method that relies on the traditional asymptotics with $n\to \infty$. These data allows us to compute $D_{j}\equiv V_{(1),j}-P_{j}$ for each auction $j=1,\dots,n$. In turn, this information enables us to test $H_0:\mu_{K} = b$ using the t-statistic $\sqrt{n}(\bar{D}_{n} - b)/s$, where $\bar{D}_{n}=\sum_{j=1}^n D_{j}/n$ and $s^2 = {\sum^{n}_{j=1} ( D_{j}-\bar{D}_{n}) ^{2}/(n-1)}$. Under $H_0$, standard asymptotic arguments imply that the t-statistic converges to a standard normal distribution as $n\to\infty$. One can then construct a CI by inverting the aforementioned hypothesis tests. In contrast to the first CI, the validity of this method relies on standard asymptotic arguments with $n\to\infty$.
    \item A bootstrap-based CI based on observing the highest valuations for each auction. This approach is obviously infeasible in using data from first-price auctions, but it could be feasible with bid data from second-price auctions; cf.\ \cite{menzel2013}. Observing the highest valuations across auctions allows us to identify the distribution of valuations, which, in turn, allows us to identify $\mu_K$ (further details are provided in Section \ref{sec:appendix-MC}). By replacing identified objects with suitable estimators, one can consistently estimate $\mu_K$. In addition, one can repeat this process using bootstrap samples to construct a CI for $\mu _{K}$. The method is implemented with 500 bootstrap samples. The validity of this method relies on standard asymptotic arguments with $n\to\infty$.
\end{enumerate}

Table \ref{tbl:fp-CI} presents the coverage and length of the three CIs. Our proposed CI suffers from a slight undercoverage when valuations are $U(0,3)$ or when $n$ is significantly larger than $K$. As our theory predicts, this undercoverage should diminish as $K$ increases. On the other hand, our proposed CI demonstrates excellent size control for the non-$U(0,3)$ valuations and $K$ is significantly larger than $n$. The other two methods perform analogously to Table \ref{tbl:sp-CI}. The CI based on observing the highest valuation and the transaction price and the asymptotics $n\to \infty$ are relatively shorter and tend to suffer from undercoverage, especially when $n$ is small or valuations are Pareto distributed. In turn, the bootstrap-based CI tends to suffer from significant undercoverage. 

\begin{table}[ht]
\centering
\renewcommand{\arraystretch}{1.5}
\begin{tabular}{lllllllllllll}
\hline\hline
\# Bidders & \multicolumn{4}{c}{$K=10$} & \multicolumn{4}{c}{$K=100$} & \multicolumn{4}{c}{$K\sim U\{90,91,\dots,110\}$} \\ 
\# Auctions & \multicolumn{2}{c}{$n=10$} & \multicolumn{2}{c}{$n=100$} & \multicolumn{2}{c}{$n=10$} & \multicolumn{2}{c}{$n=100$} & \multicolumn{2}{c}{$n=10$} & \multicolumn{2}{c}{$n=100$} \\ \hline
Dist.\ & Cov & Lgth & Cov & Lgth & Cov & Lgth & Cov & Lgth & Cov & Lgth & Cov & Lgth\\ \hline
\multicolumn{13}{c}{Method (i): Our proposed CI, asy.\ w.\ $K\to\infty$} \\ 
$U(0,3)$               & 0.89 & 1.20 & 0.89 & 0.29 & 0.91 & 0.22 & 0.98 & 0.12 & 0.91 & 0.22 & 0.98 & 0.13 \\ 
{$\vert N(0,1) \vert$} & 0.94 & 1.28 & 0.84 & 0.40 & 0.97 & 1.46 & 0.98 & 0.33 & 0.98 & 1.47 & 0.98 & 0.33 \\ 
{$\vert t(20) \vert$}  & 0.95 & 1.29 & 0.86 & 0.46 & 0.98 & 1.30 & 0.97 & 0.42 & 0.98 & 1.34 & 0.98 & 0.43 \\ 
$Pa(0.25) $            & 0.97 & 1.47 & 0.87 & 0.54 & 0.97 & 2.12 & 0.95 & 0.99 & 0.97 & 2.12 & 0.96 & 1.00 \\ 
\hline
\multicolumn{13}{c}{Method (ii): CI based on highest valuations, asy.\ w.\ $n\to\infty$} \\  
$U(0,3)$               & 0.89 & 0.03 & 0.95 & 0.01 & 0.85 & 0.00 & 0.95 & 0.00 & 0.90 & 0.00 & 0.96 & 0.00 \\
{$\vert N(0,1) \vert$} & 0.86 & 0.31 & 0.94 & 0.11 & 0.86 & 0.24 & 0.93 & 0.09 & 0.86 & 0.25 & 0.93	& 0.09\\ 
{$\vert t(20) \vert$}  & 0.88 & 0.40 & 0.93 & 0.14 & 0.83 & 0.37 & 0.92 & 0.13 & 0.85 & 0.36 & 0.94 & 0.13 \\ 
$Pa(0.25) $            & 0.81 & 0.69 & 0.88 & 0.28 & 0.76 & 1.22 & 0.91 & 0.47 & 0.75 & 1.21 & 0.90 & 0.47 \\  
\hline
\multicolumn{13}{c}{Method (iii): CI based on bootstrap \& highest valuations, asy.\ w.\ $n\to\infty$} \\  
$U(0,3)$               & 0.35 & 0.19 & 0.89 & 0.11 & 0.36 & 0.02 & 0.89 & 0.01 & 0.36 & 0.02 & 0.88 & 0.01 \\ 
{$\vert N(0,1) \vert$} & 0.28 & 0.29 & 0.84 & 0.12 & 0.28 & 0.21 & 0.85 & 0.09 & 0.27 & 0.21 & 0.86 & 0.09 \\ 
{$\vert t(20) \vert$}  & 0.29 & 0.34 & 0.85 & 0.14 & 0.27 & 0.28 & 0.82 & 0.12 & 0.26 & 0.28 & 0.86 & 0.12 \\ 
{$Pa(0.25) $}          & 0.20 & 0.34 & 0.69 & 0.20 & 0.23 & 0.61 & 0.69 & 0.35 & 0.23 & 0.61 & 0.69 & 0.35 \\ 
\hline\hline
\end{tabular}
\caption{Empirical coverage frequency (Cov) and length (Lgth) of various CIs for the winner's expected utility $\mu_K$ in first-price auctions. The results are the average of 500 simulation draws and a nominal coverage level of 95\%.}
\label{tbl:fp-CI}
\end{table}

Finally, we consider the problem of hypothesis testing about the tail index using only transaction prices. We focus on testing $H_0:\xi = -1$ vs.\ $H_1:\xi \in \Xi/\{-1\}$, with $\Xi =[-1,0.5] $ and $W$ equal to the uniform distribution over this interval. Table \ref{tbl:fp-test} shows the average rejection rate of the test proposed in Section \ref{sec:HypApplication-fp} over 500 simulations using a desired nominal size of $5\%$. Under $H_0$ (i.e., when valuations are $U(0,3)$), our proposed test controls size as long as the number of bidders is not too small relative to the sample size. Under $H_1$ (i.e., when valuations are $\vert N(0,1) \vert$, $\vert t(20) \vert$, and $Pa(0.25)$), our proposed test exhibits adequate power.

\begin{table}[ht]
\centering
\renewcommand{\arraystretch}{1.5}
\begin{tabular}{lcccccccc}
\hline\hline
\# Bidders & \multicolumn{2}{c}{$K=10$}   & \multicolumn{2}{c}{$K=20$}   & \multicolumn{2}{c}{$K=100$} & \multicolumn{2}{c}{ $K\sim U\{90,91,\dots,110 \}$  } \\ 
\# Auctions & $n=10$ & $n  = 100$   & $n=10$ & $n  = 100$   & $n=10$ & $n  = 100$  & $n=10$ & $n  = 100$\\ \hline
$U(0,3)$               & 0.06 & 0.09 &  0.04 & 0.06 &  0.04 & 0.05  &  0.03 & 0.09 \\ 
{$\vert N(0,1) \vert$} & 0.25 & 1.00 &   0.22 & 1.00 &  0.28 & 1.00  &  0.29	& 1.00 \\ 
{$\vert t(20) \vert$}  & 0.27 & 1.00 &   0.29 & 1.00 &  0.34 & 1.00  &  0.36 & 1.00 \\ 
$Pa(0.25) $            & 0.54 & 1.00 &   0.55 & 1.00 &  0.55 & 1.00  &  0.62 & 1.00 \\ 
\hline\hline
\end{tabular}
\caption{Empirical rejection rate of the test proposed in Section \ref{sec:HypApplication-fp} for $H_0:\xi = -1$ in first-price auctions. The results are the average of 500 simulation draws and a nominal size of 5\%.}\label{tbl:fp-test}
\end{table}

\subsection{Auxiliary derivations related to the Monte Carlo simulations}\label{sec:appendix-MC}

We first provide auxiliary derivations related to the third CI considered in Section \ref{sec:simulation}. In a second-price auction $j=1,\ldots ,n$ with $K$ bidders, the CDF of the transaction price satisfies:
\begin{equation}
F_{P_{j}}(x)~\overset{(1)}{=}~F_{V_{(2),j}}( x) ~\overset{(2)}{=}~F_{V}(x)^{K}+KF_{V}(x)^{K-1}( 1-F_{V}( x) ) , \label{eq:functional1}
\end{equation}
where (1) holds by \eqref{eq:secondprice} and (2) by the fact that bidders' valuations are i.i.d. The previous equation expresses the CDF of the transaction price in terms of the CDF of valuations. By inverting this mapping, we can derive a relationship between the distributions of the transaction prices and the valuations. Furthermore, we have the following connection between the distribution of valuations and the winner's expected utility:
\begin{align}
\mu_K &~\overset{(1)}{=}~E[ V_{( 1) ,j}-P_{j}] \notag\\
&~\overset{(2)}{=}~E[ V_{( 1) ,j}-V_{( 2) ,j}] \notag\\
&~\overset{(3)}{=}~K\int_{0}^{\infty } xF_{V}( x) ^{K-1}f_{V}( x) dx-( K-1) K\int_{0}^{\infty } x( f_{V}( x) ( 1-F_{V}( x) ) F_{V}( x) ^{K-2}) dx \nonumber \\
&~\overset{(4)}{=}~K\bigg( \int_{0}^{\infty }( 1-F_{V}^{K}(x)) dx-\int_{0}^{\infty }( 1-F_{V}^{K-1}(x)) dx\bigg) ,\notag
\end{align}
where (1) holds because the auctions are i.i.d.\ (which holds in all of our Monte Carlo designs), (2) holds by \eqref{eq:secondprice}, (3) by the fact that bidders are i.i.d.\ and valuations are nonnegative (which holds in all of our Monte Carlo designs), and (4) by integration by parts.

To conclude, we provide auxiliary derivations related to the third CI considered in Section \ref{sec:simulation-fp}. Recall that this method presumes that we observe the highest valuation $V_{(1),j}$ for each auction $j=1,\dots,n$. Given that there are $K$ bidders, the CDF of the highest valuation satisfies:
\begin{equation}
F_{V_{( 1) }}( x )~\overset{(1)}{=}~F_{V}(x)^{K}, \label{eq:functional2}
\end{equation}
where (1) holds by the fact that bidders' valuations are i.i.d.\ and $V_{(1)} = \max\{V_1, V_2,\dots, V_K\}$. Furthermore, we can connect the distribution of valuations and the winner's expected utility in the following fashion:
\begin{align*}
\mu_K ~&\overset{(1)}{=}~E\bigg[\tfrac{ \int_{0}^{V_{(1)}}{F_{V}(u)^{K-1}}du}{F_{V}(V_{(1)})^{K-1}}\bigg]\\
~&\overset{(2)}{=}~
\int^{\infty}_{0} \bigg( \int_{0}^{v}K F_{V}^{K-1}( x) dx\bigg) f_{V}(v)dv \\
~&\overset{(3)}{=}~K\int^{\infty}_{0}  ( 1-F_{V}(v) ) F_{V}(v)^{K-1} dv, 
\end{align*}
where (1) holds by \eqref{eq:fp_muK_as_fn_ofV} and by the fact that valuations are nonnegative and auctions are i.i.d.\ (which holds in all of our Monte Carlo designs), (2) by \eqref{eq:functional2}, and (3) by integration by parts.

\newpage 
\bibliographystyle{ecta}
\bibliography{bib_auction}

\end{document}